\title{Which graph motif parameters count?}
\author{Markus {Bläser}}
{Saarland Informatics Campus (SIC), Saarland University, Germany \and \url{https://cc.cs.uni-saarland.de/mblaeser/}}
{mblaeser@cs.uni-saarland.de}
{https://orcid.org/0000-0002-1750-9036}
{}
\author{Radu {Curticapean}}
{University of Regensburg, Germany \and IT University of Copenhagen, Denmark \and \url{https://www.uni-regensburg.de/informatik-data-science/algorithmen-komplexitaetstheorie/startseite/index.html} }
{radu.curticapean@ur.de}
{https://orcid.org/0000-0001-7201-9905}
{Funded by the European Union (ERC, CountHom, 101077083).
Views and opinions expressed are those of the author(s) only and do not necessarily reflect those of the European Union or the European Research Council Executive Agency.}
\author{Julian {Dörfler}}
{Saarland Informatics Campus (SIC), Saarbrücken Graduate School of Computer Science, Saarland University, Germany}
{jdoerfler@cs.uni-saarland.de}
{https://orcid.org/0000-0002-0943-8282}
{}
\author{Christian {Ikenmeyer}}
{University of Warwick \and \url{https://warwick.ac.uk/fac/sci/maths/people/staff/ikenmeyer/}}
{christian.ikenmeyer@warwick.ac.uk}
{https://orcid.org/0000-0003-4654-177X}
{Supported by EPSRC grant EP/W014882/2.}
\authorrunning{M. Bläser, R. Curticapean, J. Dörfler, C. Ikenmeyer} %
\keywords{Graph motif parameters, Combinatorics, Combinatorial Interpretability} %
\newcommand{\defn}[1]{\emph{#1}}
\newcommand{\textsu}[1]{\textup{\textsf{#1}}}
\newcommand{\SI}{\textup{{\ensuremath{\mathcal S\!I}}}}
\newcommand{\inst}{\textup{\textsu{inst}}}
\newcommand{\perc}{\textup{\textsu{perc}}}
\newcommand{\im}{\textup{im}}
\newcommand{\convert}{\textup{\textsu{conv}}}
\newcommand{\ComCla}[1]{\textup{\textsu{#1}}}
\newcommand{\sharpP}{\ComCla{\#P}}
\newcommand{\GapP}{\ComCla{GapP}}
\newcommand{\PPAD}{\ComCla{PPAD}}
\newcommand{\PLS}{\ComCla{PLS}}
\newcommand{\TFNP}{\ComCla{TFNP}}
\newcommand{\acc}{\textup{\textsu{acc}}}
\newcommand{\IF}{\mathbb{F}}
\newcommand{\IQ}{\mathbb{Q}}
\newcommand{\IN}{\mathbb{N}}
\newcommand{\IZ}{\mathbb{Z}}
\newcommand{\supp}{\textup{supp}}
\newcommand{\starpad}{\, \star \,}
\def\bu{\bullet}
\def\nn{\mathbb N}
\def\qqq{\mathbb Q}
\def\cE{\mathcal E}
\def\cm{\mathcal M}
\def\cM{\mathcal M}
\def\<{\langle}
\def\>{\rangle}
\def\0{{\mathbf 0}}
\def\tilde{\widetilde}
\def\graphs{\mathcal{G}}
\def\graphsnoniso{{\mathcal{G}^\text{pure}}}
\def\ordgraphs{\mathcal{OG}}
\def\ordgraphsnoniso{{\mathcal{OG}^\text{pure}}}
\newcommand{\ordgraphssize}[1]{{\mathcal{OG}_{= #1}}}
\newcommand{\indsub}[2]{\ensuremath{\#\mathrm{Ind}(#1 \mathop{\to} #2)}} %
\newcommand{\sub}[2]{\ensuremath{\#\mathrm{Sub}(#1 \mathop{\to} #2)}}
\newcommand{\hominj}[2]{\ensuremath{\#\mathrm{InjHom}(#1 \mathop{\to} #2)}}
\newcommand{\strhominj}[2]{\ensuremath{\#\mathrm{StrInjHom}(#1 \mathop{\to} #2)}}
\renewcommand{\hom}[2]{\ensuremath{\#\mathrm{Hom}(#1 \mathop{\to} #2)}}
\newcommand{\aut}[1]{\#\mathrm{Aut}(#1)}
\newcommand{\ind}[2]{\#\mathrm{Ind}(#1 \to #2)}
\newcommand{\poset}[1]{\mathcal{P}(#1)}
\newcommand{\isoclass}[1]{[#1]}
\newcommand{\catsize}[2]{{{#1}_{=#2}}}
\newcommand{\Msub}[2]{\mathrm{Sub}_{\mathcal{M}}(#1 \mathop{\to} #2)}
\newcommand{\cntMsub}[2]{\#\mathrm{Sub}_{\mathcal{M}}(#1 \mathop{\to} #2)}
\newcommand{\domain}{\mathop{\mathrm{dom}}}
\newcommand{\codomain}{\mathop{\mathrm{cod}}}
\newcommand{\identity}{\mathrm{id}}
\newcommand{\poly}{\mathrm{poly}}
\newcommand{\polylog}{\mathrm{polylog}}
\newcommand{\bitsize}{\mathrm{bitsize}}
\newcommand{\finvec}{\mathrm{\mathbf{FinVect}}}
\newcommand{\parsets}{\mathrm{\mathbf{ParSets}}}
\newcommand{\finordgraphs}{\mathrm{\mathbf{FinOrdGraphs}}}
\newcommand{\Span}{\mathrm{Span}}
\newcommand{\rel}[1]{\mathrm{Rel}(#1)}
\newcommand{\mrel}[1]{\mathrm{MRel}(#1)}
\newcommand{\ordrel}[1]{\mathrm{ORel}(#1)}
\newcommand{\mordrel}[1]{\mathrm{MORel}(#1)}
\newcommand{\colgraphs}[1]{\mathrm{Col}\mathcal{G}(#1)}
\newcommand{\ppure}[1]{{#1}^{\text{$P$-pure}}}
\newcommand{\forb}[1]{\mathrm{Forb}(#1)}
\newcommand{\promise}{\mathsf{Pr}\text{-}}
\newcommand{\Gline}{%
\vcenter{\hbox{\begin{tikzpicture}%
\fill[black] (120:0.2) circle (0.15em);%
\fill[black] (240:0.2) circle (0.15em);%
\draw[thick] (120:0.2) -- (240:0.2);
\end{tikzpicture}}}%
}%
\newcommand{\Gtria}{%
\vcenter{\hbox{\begin{tikzpicture}%
\fill[black] (0:0.2) circle (0.15em);%
\fill[black] (120:0.2) circle (0.15em);%
\fill[black] (240:0.2) circle (0.15em);%
\draw[thick] (0:0.2) -- (120:0.2);
\draw[thick] (120:0.2) -- (240:0.2);
\draw[thick] (240:0.2) -- (0:0.2);
\end{tikzpicture}}}%
}%
\newcommand{\Gpath}{%
\vcenter{\hbox{\begin{tikzpicture}%
\fill[black] (0:0.2) circle (0.15em);%
\fill[black] (0:0.4) circle (0.15em);%
\fill[black] (0:0.6) circle (0.15em);%
\draw[thick] (0:0.2) -- (0:0.4);
\draw[thick] (0:0.4) -- (0:0.6);
\end{tikzpicture}}}%
}%
\newcommand{\GfourI}{%
\vcenter{\hbox{\begin{tikzpicture}%
\fill[black] (0:0.4) circle (0.15em);%
\fill[black] (0:0.2) circle (0.15em);%
\fill[black] (120:0.2) circle (0.15em);%
\fill[black] (240:0.2) circle (0.15em);%
\draw[thick] (0:0.2) -- (0:0.4);
\draw[thick] (0:0.2) -- (120:0.2);
\draw[thick] (120:0.2) -- (240:0.2);
\draw[thick] (240:0.2) -- (0:0.2);
\end{tikzpicture}}}%
}%
\newcommand{\GfourII}{%
\vcenter{\hbox{\begin{tikzpicture}%
\fill[black] (0:0.4) circle (0.15em);%
\fill[black] (0:0.2) circle (0.15em);%
\fill[black] (120:0.2) circle (0.15em);%
\fill[black] (240:0.2) circle (0.15em);%
\draw[thick] (0:0.4) to[bend right] (120:0.2);
\draw[thick] (0:0.4) to[bend left] (240:0.2);
\draw[thick] (0:0.2) -- (120:0.2);
\draw[thick] (120:0.2) -- (240:0.2);
\draw[thick] (240:0.2) -- (0:0.2);
\end{tikzpicture}}}%
}%
\newcommand{\GfourIII}{%
\vcenter{\hbox{\begin{tikzpicture}%
\fill[black] (0:0.4) circle (0.15em);%
\fill[black] (0:0.2) circle (0.15em);%
\fill[black] (120:0.2) circle (0.15em);%
\fill[black] (240:0.2) circle (0.15em);%
\draw[thick] (0:0.4) to[bend right] (120:0.2);
\draw[thick] (0:0.4) to[bend left] (240:0.2);
\draw[thick] (0:0.2) -- (0:0.4);
\draw[thick] (0:0.2) -- (120:0.2);
\draw[thick] (120:0.2) -- (240:0.2);
\draw[thick] (240:0.2) -- (0:0.2);
\end{tikzpicture}}}%
}%
\newcommand{\sharpPdottedcircle}{{\sharpP}^{\protect\begin{tikzpicture}
\draw[line width=0.25ex, dash pattern=on 1.2pt off 1.2pt] (0,0) circle (0.55ex);
\end{tikzpicture}}}
\newcommand{\eval}[1]{\operatorname{Eval}_{#1}}
\theoremstyle{claimstyle}
\newtheorem{fact}[theorem]{Fact}
\begin{document}

\maketitle

\begin{abstract}
For a fixed graph $H$, the function $\indsub{H}{\starpad}$ maps graphs $G$ to the count of induced $H$-copies in $G$; this function obviously ``counts something'' in that it has a combinatorial interpretation.
Linear combinations of such functions are called \emph{graph motif parameters} and have recently received significant attention in counting complexity after a seminal paper by Curticapean, Dell and Marx~(STOC'17).
We show that, among linear combinations of functions $\indsub{H}{\starpad}$ involving only graphs $H$ without isolated vertices, precisely those with \emph{positive integer} coefficients maintain a combinatorial interpretation.
It is important to note that graph motif parameters can be nonnegative for all inputs $G$, even when some coefficients are negative.

Formally, we show that evaluating any graph motif parameter with a negative coefficient is impossible in an oracle variant of $\sharpP$, where an implicit graph is accessed by oracle queries.
Our proof follows the classification of the relativizing closure properties of $\sharpP$ by Hertrampf, Vollmer, and Wagner (SCT'95) and the framework developed by Ikenmeyer and Pak (STOC'22), but our application of the required Ramsey theorem turns out to be more subtle, as graphs do not have the required Ramsey property.

Our techniques generalize from graphs to relational structures, including colored graphs.
Vastly generalizing this, we introduce motif parameters over \emph{categories} that count occurrences of sub-objects in the category.
We then prove a general dichotomy theorem that characterizes which such parameters have a combinatorial interpretation. Using known results in Ramsey theory for categories, we obtain a dichotomy for motif parameters of finite vector spaces as well as parameter sets.

\end{abstract}

\newpage

\section{Introduction}

\subsection{Positivity in Combinatorics and \texorpdfstring{$\sharpP$}{\#P}}
A fundamental task in combinatorics is to assign combinatorial meanings to quantities: For example, the binomial coefficient $\binom n k$ has the combinatorial interpretation of counting $k$-element subsets of $\{1,\ldots,n\}$, which may not be obvious from the algebraic formula $\frac{n!}{k!(n-k)!}$.
In fact, this is a \emph{nonnegative} combinatorial interpretation, i.e., the binomial coefficient is equal to the cardinality of a naturally defined set. This is stronger than just having a \emph{signed} combinatorial interpretation, such as for example the determinant of zero-one matrices $A=(a_{i,j})_{1\leq i,j\leq n}$, which can be expressed as the difference of the number of even versus odd permutations $\pi$ with $a_{i,\pi(i)}=1$ for all $1 \leq i \leq n$.
In the following, when we speak of combinatorial interpretations, we mean nonnegative ones.

In algebraic combinatorics, there are several nonnegative integer quantities for which finding such a combinatorial interpretation is a major open problem, e.g., problems 9, 10, 11, and 12 in \cite{Sta:00}.
This begs the question whether these quantities actually have combinatorial interpretations, but without a formal definition of this notion, their existence is hard to rule out. Indeed, Stanley \cite[Ch.~1]{Sta:11} writes that ``only through experience does one develop an idea of what is meant by a ``determination'' of a counting function''. 
In many settings however, combinatorial interpretations of a nonnegative quantity can be defined (or at least subsumed) formally in terms of containment of the quantity in the counting complexity class $\sharpP$, see \cite{IP:22, Pak:22, Pan23}. This means that the quantity can be interpreted as counting accepting paths of some nondeterministic polynomial-time Turing machine.  
While ``containment in $\sharpP$'' is a coarse over-approximation of what some mathematicians would consider ``combinatorial interpretations'', this working definition is formal, robust, and arguably natural. Most importantly, it enables us to prove that some very natural quantities indeed do \emph{not} have a combinatorial interpretation.

For example, if \ComCla{PH} does not collapse, then the squares of the symmetric group characters are not in $\sharpP$ and hence do not have a combinatorial interpretation, see \cite{IPP:23}.
Other fruitful separations from $\sharpP$ are proved in \cite{IP:22},
where the counting versions of several $\TFNP$ search problems are studied: For example, given oracle access to an exponentially large graph with degree at most two (i.e., a disjoint union of isolated vertices, paths, and cycles), and given the end of a path,
it follows that there is at least one other end of a path. But the task of finding a 
combinatorial interpretation for the nonnegative quantity ``number of other path endpoints $-1$'' fails due to the oracle separation in \cite{IP:22}.

\subsection{Do Nonnegative Graph Motif Parameters Count?}%

A graph $F$ is an \emph{induced subgraph} of $G$ if $F$ can be obtained from $G$ by deleting vertices. An \emph{induced copy} of a graph $H$ in $G$ is an induced subgraph $F$ of $G$ that is isomorphic to $H$.
For two graphs $H$ and $G$, write $\indsub HG$ for the number of induced copies of $H$ in $G$, and when $H$ is fixed, write $\indsub H\starpad$ for the function mapping $G \mapsto \indsub HG$. As defined by Curticapean, Dell, and Marx~\cite{DBLP:conf/stoc/CurticapeanDM17}, a \emph{graph motif parameter} is a rational linear combination of such functions; see Section~\ref{subsec:indsub} for formal definitions.
The step from individual counts to linear combinations turns graph motif parameters into a rich function space that captures interesting counting problems related to small patterns. For example, the number of (not necessarily induced) subgraphs isomorphic to a fixed $k$-vertex graph $H$ is a nonnegative linear combination of induced subgraph counts from all $k$-vertex supergraphs $H' \supseteq H$. 
This means that (not necessarily induced) subgraph counts from fixed graphs are graph motif parameters. 
The same holds for the number of homomorphisms from a fixed graph $H$.
Moreover, the coefficients of a graph motif parameter in its linear combination over $\indsub H\starpad$ are unique.

Graph motif parameters have been studied thoroughly before, mostly from a complexity-theoretic perspective~\cite{JerrumM15,JerrumM15b,JerrumM17,DBLP:conf/stoc/CurticapeanDM17,Roth19,Roth0W20,Roth0W21,DBLP:journals/algorithmica/DorflerRSW22,DoringMW24}:
Each fixed graph motif parameter $f$ can be evaluated in polynomial time on $n$-vertex input graphs, as it involves a linear combination of induced subgraph counts from fixed graphs.
Under complexity-theoretic assumptions, the above works give lower bounds on the exponent of this polynomial.
We focus on the separate question which graph motif parameters have combinatorial interpretations.

For example, most of the above papers consider unweighted sums (i.e., linear combinations with coefficients $0$ or $1$) of induced subgraph counts from a set $X$ of $k$-vertex graphs.
This has the evident combinatorial interpretation of counting $k$-vertex subsets $S$ of a graph $G$ that induce a graph $G[S]$ from $X$.
More generally, we can allow arbitrary integers as coefficients; as we show in Lemma~\ref{lem:graphs_integercoeffs}, all \emph{integer-valued} graph motif parameters $f$ can be obtained this way.
In this setting, it is possible for $f$ to have negative integer coefficients but still satisfy $f(G) \geq 0$ on all graphs $G$. 
\begin{example}
Consider the quantity $(|V(G)|-1)^{2} = (n-1)^{2} \geq 0$ for $n$-vertex graphs $G$. This is nonnegative and, in fact, a graph motif parameter: 
Abbreviating $\#H = \indsub HG$, we have $n = \#{K_{1}}$ and $1 = \#{K_{0}}$, and  $(n-1)^{2}=\#{K_{1}}^{2}-2\#{K_{1}}+\#{K_{0}}$. Moreover, the quadratic term $\#{K_{1}}^2$ can be linearized\footnote{
This holds because $\#{K_{1}}^2$ counts the pairs $(u,v)\in V(G)^2$, and so does the linear combination $2 \#{K_{2}} + 2 \#{I_2} + \#{K_{1}}$, whose summands correspond to the isomorphism type of $G[\{u,v\}]$, which is either $K_2$, $I_2$, or $K_1$.
As we will elaborate later, graph motif parameters enjoy a general linearization property: Any polynomial combination of induced subgraph counts can be expressed as a unique linear combination. This holds even for ``subobject'' counts under more general notions of objects; see Lemma~\ref{lem:cat:linearize} and Corollary~\ref{cor:cat:lin}. This motivates our focus on \emph{linear} combinations.}%
, i.e., expressed as the linear combination $\#{K_{1}}^{2} = 2 \#{K_{2}} + 2 \#{I_2} + \#{K_{1}}$.
Thus, we have
\[
0 \leq (\#K_1 - 1)^2 = 2\, \#K_2 + 2\,  \#I_2 - \# K_1 + \# K_0,
\]
and have constructed a linear combination of induced subgraph counts evaluating to a nonnegative value for every graph $G$, even though the term $\indsub {K_1}{\starpad}$ appears with coefficient $-1$.
\end{example}

This example in fact has a very simple combinatorial interpretation for graphs $G$ with $n \geq 1$ vertices: After fixing an arbitrary vertex $z_0$, the quantity $(n-1)^2$ simply counts the pairs $(u,v)\in V(G)^2$ with $u \neq z_0$ and $v \neq z_0$.
Let us consider a example where finding such an interpretation seems trickier:
\begin{example}
\label{ex:nonneg}
Consider the graph motif parameter
\begin{equation*}
f(G) \ = \ \#\Gline - \#\Gtria + \#\GfourI + 2\,\#\GfourII + 4\,\#\GfourIII \ \geq \  0,
\end{equation*}
whose non-negativity is shown in \S\ref{sec:nonnegexample}.
Contrary to our previous example, the graphs in this linear combination feature no isolated vertices.
Does $f$ have a combinatorial interpretation?
\end{example}
Our main result in Theorem~\ref{thm:graphs} gives a formal argument against $f$ having a combinatorial interpretation:
In a nutshell, a nonnegative graph motif parameter involving only patterns without isolated vertices has a combinatorial interpretation if and only if all its coefficients are nonnegative. (The ``if'' part of this result is trivial.)
In the next section, we will give the relevant formal definitions to state our results in Section~\ref{sec:our-results} and compare them to related work in Section~\ref{sec:related-work}.

\section{Preliminaries}
\label{sec:prelim}
\subsection{Type-2 Counting Complexity} \label{subsec:typetwo}

We study computational problems on inputs given by concise descriptions. Such problems have been studied by Johnson, Papadimitrou, and Yanakakis \cite{DBLP:journals/jcss/JohnsonPY88, DBLP:journals/jcss/Papadimitriou94},
who introduced subclasses of the complexity class $\TFNP$, such as $\PLS$ or $\PPAD$, to capture search problems in exponentially large objects.
The problems in these subclasses of $\TFNP$ all have a similar structure: Each input $x$ implicitly encodes an exponentially large structure, e.g., a graph~$G$ via a Boolean circuit $x$ that computes the edge relation of the graph $G$. %
The witness sought in the search problem is however still polynomially large in $|x|$, which is tiny compared to the size of the encoded input instance.
The guiding examples in our paper are induced subgraphs $H$ of a fixed size in an exponentially large graph $G$.
In this problem, the $O(1)$ labels of the vertices in $G$ that induce $H$ constitute a witness.

A natural abstraction of this setting is obtained by providing the input as an oracle, rather than via a succinct encoding, see e.g.\ \cite{DBLP:conf/focs/BeameIP95}. In this setting, which is denoted as \emph{type-2 complexity}, we can only query the given structure, for instance by making queries to the edge relation of the graph, but we do not have access to a circuit computing the edge relation. 
Type-2 problems also arise naturally in computability and complexity studies of real functions,
see e.g.\ \cite{DBLP:series/txtcs/Weihrauch00, DBLP:books/sp/91/K1991}. Since a real number is an infinite object, it is given as an oracle which one can query to get better and better rational approximations.

To define the type-2 framework formally,
fix some alphabet $\Sigma$.
We assume that $0,1 \in \Sigma$, which are interpreted as false and true, respectively.
An \emph{oracle} is a total function of type $\Sigma^\ast \to \{0,1\}$. 
The computational model in the type-2 framework is an oracle Turing machine, which gets an input string $x$ together with access to an oracle $O$. %

\begin{example}
For a fixed graph $H$, we consider the problem of finding an induced $H$-copy in a large graph $G$. The graph $G$ is given as $(x, O)$ where $O: \Sigma^\ast \to \{0,1\}$ is an oracle encoding the edge relation and $x$ is the number of nodes of $G$ written in binary. The nodes of $G$ are $0,\dots,x-1$, written as bit strings of fixed length~$n$.
Given two nodes $u$ and $v$ of $G$, there is an edge between $u$ and $v$ iff $O(uv) = 1$. (Since we consider strings of fixed length $n$, no separation symbol is needed.) 
In this representation, most of the information about $G$ is stored in the oracle; the input $x$ only encodes the size of the graph, and the running time of oracle Turing machines is measured in terms of $x$.
\end{example}

A consistency issue can arise in this definition: 
Since the graph $G$ is exponentially large, the oracle Turing machine might not be able to check whether the encoding is consistent. For instance, one might wish to encode an undirected graph $G$ but have $O(uv) \neq O(vu)$. In the case of graphs, such inconsistencies can be ``repaired'' by avoiding observing them, e.g., by only querying strings $uv$ with $u < v$.
In the context of $\TFNP$ and its subclasses, such issues are often elegantly repaired by interpreting any oracle as a graph with the desired properties.
In general however, such a fix may not be possible for more complicated structures, in particular in the abstract setting of category theory that we will also study. 
Therefore, we will also resort to promise problems, in which a Turing machine is only required to work properly on inputs $(x,O)$ that are a proper encoding of an input object; see Definition~\ref{def:promise-sharpp} for more details. Note that this also encompasses the previous solutions. If we can repair the oracle or choose an interpretation such that every oracle encodes a valid instance, then we simply choose the set of ``don't care'' instances to be the empty set.

A $k$-ary \emph{type-2 relation} $R$ gets a tuple of $k$ strings $x$ as an input, has access to an oracle $O$, and outputs a value $R(x,O) \in \{0,1\}$. Typically, we will consider unary type-2 relations, but relations with higher arity naturally occur when one considers witnesses, see below, when we define the existential and counting operators. %
A type-2 relation $R$ is polynomial-time computable if there is an oracle Turing machine $M$ that given input $x$ and oracle access to $O$ computes $R(x,O)$ in polynomial-time in the length of the input $x$. Oracle queries have unit costs, but $M$ has to write the query strings on the oracle tape. Note that all time bounds we consider depend only on the length $|x|$ of $x$, but not on $x$ itself and, more importantly, not on the oracle.%

If there is a polynomial-time computable $(k+1)$-ary type-2 relation $S$, we define the 
$k$-ary type-2 relation $\exists S$ by $\exists S(x,O) = 1$ iff there is a polynomially long $y$ (in the length $|x|$) such that $S(x,y,O) = 1$. 

\newcommand{\IsIndSub}{\mathrm{IsIndSub}}

\begin{example}
In our example of finding an induced copy of $H$ in the graph $G$, we start with a binary
type-2 relation $\IsIndSub_H(x,y,O)$. The relation interprets $y$ as an encoding of a set $Y$ of $k$ distinct nodes of $G$ with $k$ being the number of nodes of $H$. 
The relation then queries the oracle $O$ to obtain the induced subgraph $G[Y]$.
It then checks by brute force whether $G[Y]$ and $H$ are isomorphic. If yes, it returns the value $1$ and otherwise $0$. This is clearly polynomial-time computable in $|x|$. The relation $\exists \IsIndSub_H(x,O)$ now returns $1$ if $G$ contains an induced copy of $H$ \emph{somewhere}, and $0$ if $G$ does not contain such a copy. See also Lemma~\ref{lem:graphs_lower} for a more formal proof.
\end{example}

As in classical type-1 complexity, we can also introduce a counting operator $\#$ to type-2 complexity, similar to the existential operator $\exists$:
A polynomial-time computable $(k+1)$-ary type-2 relation $S$ defines a
$k$-ary type-2 counting problem by $\#S(x,O) = \# \{y \mid S(x,y,O) = 1 \}$,
where $y$ is polynomially long.

\begin{example}
Continuing the example above, $\#\IsIndSub_H$ now counts the number of induced subgraphs in $G$ that are isomorphic to $H$, that is, $\#\IsIndSub_H(x,O) = \indsub H G$, where $(x,O)$ is the oracle encoding of $G$.
\end{example}

The goal of this paper is to characterize those integer valued nonnegative graph motif parameters that have a combinatorial interpretation.
In other words, we ask which nonnegative graph motif parameters count something. The informal notion of ``combinatorial interpretation'' is formalized below as follows:

\begin{definition}
   $\sharpPdottedcircle = \{ \#R \mid \text{$R$ is a polynomial-time computable binary type-2 relation} \}$.
\end{definition}

In the notation $\sharpPdottedcircle$, the dotted circle represents a placeholder for the input oracle.

\begin{definition}\label{def:combinatoriallyinterpretable}
For a graph motif parameter $\varphi$, we denote by 
$\eval \varphi$ the corresponding type-2 function, which gets the graph $G$ given in the form $(x,O)$.
An integer valued nonnegative graph motif parameter $\varphi$
is called \emph{combinatorially interpretable} if $\eval \varphi \in \sharpPdottedcircle$ (depending on context also $\eval \varphi \in \promise\sharpPdottedcircle$, see later for an explanation).
\end{definition}

It is open for discussion whether every problem in $\sharpPdottedcircle$ admits a ``nonnegative combinatorial interpretation'' in the informal sense used by combinatorists. 
If a problem is however \emph{not} in $\sharpPdottedcircle$, then it cannot have a nonnegative combinatorial interpretation. 
Our dichotomy result for integer-valued nonnegative graph motif parameters is particularly clean-cut in this regard: Such a parameter is in $\sharpPdottedcircle$ iff all its coefficients are nonnegative integers, so it indeed counts induced subgraphs.

To study more complicated structures, we consider promise problems, since the oracle TM might not be able to check whether an oracle is a legal encoding. Hence, we will have a set of ``don't care'' inputs $(x,O)$, which will correspond to illegal encodings of input objects in our setting. Our oracle TM only needs to compute the correct result on inputs which are not ``don't care'' inputs. One such example are vector spaces, where we interpret the oracle as the characteristic function of the given vector space and it is not immediate how to check whether the oracle indeed describes a vector space. In this case our oracle Turing machine only has to work properly if the given oracle encodes indeed a vector space.

\begin{definition}
\label{def:promise-sharpp}
A counting function $F$ together with a set $D$ of ``don't care'' inputs is in $\promise\sharpPdottedcircle$ if there is a  polynomial-time computable binary type-2 relation $R$ such that $\#R$ and $F$ coincide on all inputs outside $D$.
\end{definition}

Every function in $\sharpPdottedcircle$ is also in $\promise\sharpPdottedcircle$ with the empty set of ``don't care'' inputs.

\subsection{What is a Combinatorial Interpretation---and What is Not?}
\label{sec:combint}

While we do not have an all-encompasing answer to the first part of this question (and we do not need one in this paper), we explain why type-2 complexity gives satisfactory answers to the second part.   Intuitively, a combinatorial interpretation should mean ``counting something''. But that is not enough. Even the graph motif parameter $f$ from Example~\ref{ex:nonneg} counts something: For a given graph $G$, we consider the set of witnesses for each pattern in $f$. So $W(\Gline)$ are all (induced) occurrences of an edge in $G$, $W(\Gtria)$ are all (induced) occurrences of a triangle in $G$ and so on.
The sets $W(\GfourII)$ and $W(\GfourIII)$ will be multi-sets containing two and four copies of the witnesses, respectively, due to the coefficients. Let $W_+$ be the union of all witness sets that correspond to a term with positive coefficient and $W_- = W(\Gtria)$ be all witnesses that correspond to a term with negative coefficient. Since $f$ is nonnegative, $|W_+| \ge |W_-|$.
Thus there is an injective map $i: W_- \to W_+$. So we could say that $f$ counts all elements in $W_+$ that are not in $\im(i)$. However, this ``cheating away'' of the difference $|W_+| - |W_-|$ is unsatisfactory, because it is (computationally) hard to decide whether a given element in $W_+$ should be counted or not.

Counting complexity gives a way to define combinatorial interpretations: We are given a set of witnesses $W \subseteq \{0,1\}^*$, encoded as binary strings, and for each $x \in \{0,1\}^*$ we can decide in polynomial time whether it is in $W$ or not. This rules out the construction above, since constructing the embedding $i$ is infeasible. Computing the size $|W|$ is a problem in the complexity class $\sharpP$: A nondeterministic Turing machine guesses a witness $x$ and then deterministically verifies whether it is in $W$. The number of accepting paths is precisely $|W|$.  
While it might be debatable whether being in $\sharpP$ is the right definition for a combinatorial interpretation, it appears to be at least a necessary condition. In other words, if a quantity is not in $\sharpP$, then it does not admit a combinatorial interpretation.

In the case of graph motif parameters, the witnesses are given as ordered lists of nodes having constant length, say $k$, since only the target graph $G$ varies. If $G$ has $N$ nodes, each node can be encoded by a bitstring of length $\log N$ and the total witness size is $k \log N$. When the running time should be polynomial, then the Turing machine $M$ for checking the witness needs random access to the graph, because it cannot inspect the whole graph. This is modeled using oracles and type-2 complexity.

\subsection{Induced Subgraph Numbers} \label{subsec:indsub}
Graphs in this paper are finite, undirected, and simple, i.e., they do not contain multi-edges or self-loops. Given a graph $G$, we write $V(G)$ and $E(G)$ for its vertex and edge set, respectively.
We write $\graphs$ for the class of all graphs.
Towards applying a Ramsey theorem, we will also consider \emph{ordered} graphs: These are graphs $G=(V,E,\leq_G)$ where $(V,E)$ is a graph and $\leq_G$ is a total order on $V$.
We write $\ordgraphs$ for the class of these graphs.
When ordered graphs are represented by an oracle, we always assume the natural lexicographic order on the vertices.

Given graphs $H$ and $G$, we say that $H$ is a \emph{subgraph} $H\subseteq G$ of $G$ if $V(H) \subseteq V(G)$ and $E(H) \subseteq E(G)$. In the case of ordered graphs $H=(V,E,\leq_H)$ and $G=(V',E',\leq_G)$, we additionally require that $\leq_H$ is the restriction of $\leq_G$ to $V$. 
Given a vertex set $X \subseteq V(G)$, write $G[X] = (X,\{vw \in E(G) \mid v,w \in X\})$ for the \emph{subgraph} of $G$ \emph{induced} by $X$. 
We write $H \sqsubseteq G$ to denote that $H$ is an induced subgraph of $G$, and we write $\poset{G}$ for the poset of induced subgraphs of a graph $G$, ordered by $\sqsubseteq$.
We call $H$ \emph{pure} if it contains no isolated vertices, i.e., no connected components isomorphic to $K_1$, and we write $\graphsnoniso$ for the class of pure graphs and $\ordgraphsnoniso$ for the pure ordered graphs.

A \emph{homomorphism} from a graph $H$ into a graph $G$ is a function $f:V(H)\to V(G)$ such that $uv\in E(H)$ implies $f(u)f(v) \in E(G)$; we write $\hom{H}{G}$ for their number.
A homomorphism is an \emph{embedding} if it is injective; we write $\hominj{H}{G}$ for their number.
A \emph{strong embedding} is an embedding with the additional condition that $uv \notin E(H)$ also implies $f(u)f(v) \notin E(G)$; we write $\strhominj{H}{G}$ for their number.
An \emph{isomorphism} is a strong embedding that is bijective; two graphs $H$ and $G$ are isomorphic if there is an isomorphism from $H$ to $G$.
For isomorphic graphs $H, G$ we write $H \cong G$.
All of these definitions apply to ordered graphs as well; in this setting, homomorphisms $f$ must additionally preserve the orderings of $H$ and $G$, i.e., if $u \leq_H v$, then $f(u) \leq_G f(v)$.

An automorphism of $H$ is a strong embedding from $H$ into $H$;
we write $\aut{H}$ for the numbers of automorphisms of $H$.
The numbers of embeddings and strong embeddings from a graph $H$ into some graph $G$ are multiples of $\aut{H}$. We obtain the number of subgraphs isomorphic to $H$ as $\sub{H}{G} = \hominj{H}{G} / \aut{H}$ and the number of induced subgraphs isomorphic to $H$ as $\indsub{H}{G} = \strhominj{H}{G} / \aut{H}$.
Note that for ordered graphs $H$, we have $\aut H = 1$.
For a graph $H$, we write $\indsub{H}{\starpad}$ for the function that maps $G \mapsto \indsub{H}{G}$. 
We use that these functions are linearly independent for non-isomorphic graphs $H$, even when restricted to particular domains:
\begin{fact}
\label{fact: lin-indep-ind}
Let $\graphs' \subseteq \graphs$ be a finite set of pairwise non-isomorphic graphs.
For $H\in \graphs'$, write $f_H : \graphs' \to \mathbb Q$ for the restriction of $\indsub{H}{\starpad}$ to $\graphs'$.
Then the functions $\{ f_H \mid H \in \graphs' \}$ are linearly independent.
The same applies for ordered graphs. 
\end{fact}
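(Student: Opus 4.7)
The plan is to exhibit a witnessing evaluation matrix that is triangular, hence invertible. Enumerate $\graphs' = \{H_1, H_2, \ldots, H_m\}$ so that the vertex counts are non-decreasing, i.e.\ $|V(H_1)| \leq |V(H_2)| \leq \cdots \leq |V(H_m)|$ (breaking ties arbitrarily). Form the matrix $M \in \mathbb Q^{m\times m}$ with $M_{i,j} = f_{H_i}(H_j) = \indsub{H_i}{H_j}$.

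First I would observe two vanishing properties. If $|V(H_i)| > |V(H_j)|$, then $H_j$ is simply too small to contain an induced copy of $H_i$, so $M_{i,j} = 0$. If instead $|V(H_i)| = |V(H_j)|$ with $i \neq j$, then the only induced subgraph of $H_j$ using all its vertices is $H_j$ itself, and since $H_i \not\cong H_j$ by hypothesis, we again get $M_{i,j} = 0$. Together, these show that $M_{i,j} = 0$ whenever $i > j$, so $M$ is upper triangular. On the diagonal we have $M_{i,i} = \indsub{H_i}{H_i} = 1$, because the unique vertex set $X \subseteq V(H_i)$ of size $|V(H_i)|$ induces $H_i$ itself, and $\aut{H_i}/\aut{H_i} = 1$. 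Hence $\det M = 1$, so $M$ is invertible.

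To conclude linear independence, suppose $\sum_{i=1}^m \alpha_i f_{H_i} = 0$ as functions on $\graphs'$. Evaluating at each $H_j$ gives $\sum_i \alpha_i M_{i,j} = 0$ for all $j$, i.e.\ $\alpha^\top M = 0$. Since $M$ is invertible over $\mathbb Q$, this forces $\alpha = 0$, establishing linear independence of the $f_{H_i}$.

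The ordered-graph case is identical: with vertex sets totally ordered, an induced subgraph is still determined by the vertex subset (inheriting the order), so the vanishing arguments above go through verbatim, and furthermore $\aut{H_i} = 1$ in the ordered setting so diagonal entries remain $1$. There is no real obstacle here — the only minor care point is making sure the enumeration respects vertex count monotonically; once that ordering is fixed, the triangular structure is immediate from the size mismatch of induced subgraphs, and no deeper combinatorial input (such as Möbius inversion over $\poset{\cdot}$) is needed for this particular fact.
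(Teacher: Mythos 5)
Your proof is correct and follows essentially the same route as the paper: order the graphs by vertex count, form the evaluation matrix $\indsub{H_i}{H_j}$, observe it is triangular with ones on the diagonal (using that a graph cannot be an induced subgraph of a strictly smaller graph, and that equal-size non-isomorphic graphs give count zero), and conclude linear independence from invertibility. The only cosmetic difference is that the paper phrases the triangularity with rows and columns swapped; the argument is identical.
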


Having established that the functions $\indsub{H}{\starpad}$ are linearly independent, we consider linear combinations of these basis functions.
The following term was coined in~\cite{DBLP:conf/stoc/CurticapeanDM17}:
\begin{definition}
\label{def:graph-motif-parameter}
A \emph{graph motif parameter} is a function $\varphi : \mathcal G \to \qqq$ that admits pairwise non-isomorphic pattern graphs  $H_1,\ldots,H_s$ and coefficients $\alpha_1 ,\ldots,\alpha_s \in \qqq$ such that, for all graphs $G$,
\begin{equation}
\label{eq: graphmotif-lincomb}
\varphi(G)= \sum_{i=1}^s \alpha_i \cdot \indsub{H_i}{G}.
\end{equation}
We say that $\varphi$ is \emph{pure} if every graph $H_i$ with non-zero coefficient is pure.
An \emph{ordered graph motif parameter} similarly is a linear combination of induced subgraph counts from pairwise non-isomorphic \emph{ordered} graphs.
The \emph{support} $\supp(\varphi)$ is the set of all $H_i$, s.t.\ $\alpha_i \neq 0$.

\end{definition}
Note that $\varphi$ being pure is well-defined, as the linear combination \eqref{eq: graphmotif-lincomb} is unique by Fact~\ref{fact: lin-indep-ind}:
The (isomorphism types of) graphs and coefficients appearing in the linear combination are uniquely determined; we therefore speak of \emph{the coefficients} and \emph{the patterns} of $\varphi$.
Since we investigate whether or not a graph motif parameter $\varphi$ has a combinatorial description, i.e., whether $\varphi$ counts a set of objects, we will restrict ourselves to graph motif parameters with image $\IN$, as all others can clearly not count anything. By the following lemma, this implies that the coefficients in the linear combination \eqref{eq: graphmotif-lincomb} can be assumed to be integers.

\begin{lemma}
    \label{lem:graphs_integercoeffs}
    Let $\varphi$ be an (ordered) graph motif parameter.
    Then $\varphi(G) \in \IZ$ for every (ordered) graph $G$ if and only if all coefficients of $\varphi$ are integers.
\end{lemma}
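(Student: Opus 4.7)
The ``if'' direction is immediate, so I focus on the ``only if'' direction: assuming $\varphi(G) \in \mathbb{Z}$ for every (ordered) graph $G$, I want to conclude that every coefficient $\alpha_i$ in the unique expansion $\varphi = \sum_{i=1}^s \alpha_i \indsub{H_i}{\starpad}$ is an integer. The plan is to exploit a triangular structure induced by the number of vertices of the patterns: after reordering so that $|V(H_1)| \leq |V(H_2)| \leq \cdots \leq |V(H_s)|$, one can recover the coefficients by back-substitution from the values $\varphi(H_1),\ldots,\varphi(H_s)$ alone, and this back-substitution manifestly preserves integrality.

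First I would collect the following elementary observation about the matrix $M \in \mathbb{Z}_{\geq 0}^{s\times s}$ defined by $M_{ij} := \indsub{H_i}{H_j}$. If $|V(H_i)| > |V(H_j)|$, then there is no room for an induced copy of $H_i$ inside $H_j$, so $M_{ij} = 0$. If $|V(H_i)| = |V(H_j)|$, then an induced copy of $H_i$ in $H_j$ would have to be a (strong) isomorphism between them; since we assumed the $H_i$ to be pairwise non-isomorphic, this forces $M_{ij} = 0$ for $i \neq j$ and $M_{jj} = 1$. With the chosen ordering, this means that $M$ is upper triangular with $1$'s on its diagonal, and it has integer entries throughout.

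Next I would evaluate $\varphi$ at each pattern, which by \eqref{eq: graphmotif-lincomb} yields the linear system $\varphi(H_j) = \sum_{i=1}^s \alpha_i M_{ij}$, i.e.\ $v = M^\top \alpha$ where $v_j := \varphi(H_j)$. By hypothesis $v \in \mathbb{Z}^s$, and $M^\top$ is a lower-triangular integer matrix with unit diagonal, so $(M^\top)^{-1}$ is again a lower-triangular integer matrix. Hence $\alpha = (M^\top)^{-1} v \in \mathbb{Z}^s$; concretely, one sees this by back-substitution: the smallest-size pattern $H_j$ satisfies $\alpha_j = \varphi(H_j) \in \mathbb{Z}$, and inductively
\[
\alpha_j \;=\; \varphi(H_j) \;-\; \sum_{i : |V(H_i)| < |V(H_j)|} \alpha_i \cdot \indsub{H_i}{H_j}
\]
is an integer whenever all previously resolved $\alpha_i$ are.

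For the ordered case the exact same argument applies verbatim: the triangularity depends only on the vertex count and on non-isomorphism of same-size patterns (where ``isomorphism'' is now the order-preserving one), and both properties carry over. I do not anticipate any genuine obstacle; the only point that deserves a sentence of justification is that same-size, non-isomorphic patterns contribute zero to each other, which relies on the fact that any injective map between equal-size vertex sets is a bijection.
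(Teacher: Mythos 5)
Your proof is correct and rests on the same idea as the paper's: evaluate $\varphi$ at its own patterns and exploit that $\indsub{H_i}{H_j}=0$ whenever $|V(H_i)|>|V(H_j)|$ or the patterns have equal size but are non-isomorphic, together with $\indsub{H_j}{H_j}=1$. The paper merely packages this triangularity more economically, as a minimal-counterexample argument: it picks a pattern $H$ of minimal vertex count with non-integer coefficient and observes $\varphi(H)=\alpha_H+\beta$ with $\beta\in\IZ$, whereas you invert the whole unitriangular system by back-substitution; both arguments carry over verbatim to ordered graphs.
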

\begin{proof}
    The ``if'' is obvious, since $\indsub{H}{G}\in \IZ$ for any $H,G$.
    For the ``only if'', let $H$ be a pattern in $\varphi$ 
    with a coefficient $\alpha_H \in \IQ \setminus \IZ$, such that $|V(H)|$ is minimal among all such patterns.
    Then $\indsub{H}{H} = 1$ and $\varphi(H) = \alpha_H + \beta \notin \IZ$ for some $\beta \in \IZ$, since all other basis functions with non-integer rational coefficients evaluate to $0$ on $H$, as their pattern has at least  $|V(H)|$ vertices and is not isomorphic to $H$.
\end{proof}

\begin{lemma}
\label{lem:graphs_lower}
Let
$
\varphi(G)= \sum_{i=1}^s \alpha_i \cdot \indsub{H_i}{G}
$
with all $\alpha_i\in\IN$.
Then $\eval{\varphi} \in \sharpPdottedcircle$.
This holds for unordered and ordered graph motif parameters.
\end{lemma}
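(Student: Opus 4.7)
The plan is to exhibit a polynomial-time computable binary type-2 relation $R$ such that $\#R(x,O) = \varphi(G)$, where $(x,O)$ encodes the (ordered) graph $G$ on $N$ vertices in the standard way. Since the family $(H_i,\alpha_i)_{i=1}^{s}$ is fixed and independent of $G$, every $k_i := |V(H_i)|$ is a constant, as is the isomorphism type of each~$H_i$.

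I would define witnesses as triples $y = (i, j, (v_1,\ldots,v_{k_i}))$ with $i\in[s]$, $j\in[\alpha_i]$, and $v_1 < v_2 < \cdots < v_{k_i}$ distinct vertices of $G$ under the lexicographic order on length-$n$ strings, where $n=|x|$. Each such $y$ has bit length $O(k_i \log N) = O(|x|)$, and it is straightforward to parse and reject malformed $y$ in polynomial time. On a well-formed $y$, the relation $R(x,y,O)$ queries $O(v_a v_b)$ for all $1\le a<b\le k_i$, which is a constant number of oracle calls, and thereby reconstructs the induced subgraph $G[\{v_1,\ldots,v_{k_i}\}]$ (in the ordered-graph case, equipped with the order induced by $<$). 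It then compares this $O(1)$-size object to $H_i$ by a brute-force isomorphism test (in the ordered case, testing equality under the unique order-preserving bijection), accepting iff they match. The total running time is polynomial in $|x|$, and crucially does not depend on $O$.

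The counting argument is then immediate. For each fixed $i$, the number of accepting witnesses with first component $i$ equals $\alpha_i$ times the number of $k_i$-subsets $S\subseteq V(G)$ with $G[S]\cong H_i$ (using that a $k_i$-subset $S$ has exactly one sorted enumeration), which is $\alpha_i\cdot\indsub{H_i}{G}$. Summing over $i\in[s]$ gives
\[
\#R(x,O) \;=\; \sum_{i=1}^{s} \alpha_i \cdot \indsub{H_i}{G} \;=\; \varphi(G) \;=\; \eval{\varphi}(x,O).
\]
This holds identically in the ordered setting, where the induced order on $S$ matches the sorted enumeration, and $\aut{H_i}=1$ makes the correspondence between sorted subsets and strong order-preserving embeddings a bijection.

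There is no real obstacle here: everything relies only on $s$ and the $k_i$ being constants and on the oracle encoding being exactly the one used to define $\indsub{\cdot}{\cdot}$. The only bookkeeping detail worth emphasizing is the uniqueness of the canonical (sorted) representation of each subset $S$, which guarantees that the $\alpha_i$-fold multiplicity in the witness count comes solely from the dummy index $j$ and not from over-counting orderings of $S$.
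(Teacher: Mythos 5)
Your proposal is correct and follows essentially the same route as the paper: guess a lexicographically sorted tuple of vertices, query the oracle for the induced subgraph, and brute-force the (ordered) isomorphism test, with each subset counted once via its unique sorted enumeration. The only cosmetic difference is that the paper first reduces to a single pattern by invoking closure of $\sharpPdottedcircle$ under nonnegative integer linear combinations, whereas you inline that closure by adding the pattern index $i$ and multiplicity index $j\in[\alpha_i]$ to the witness.
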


\section{Our Results}
\label{sec:our-results}
\subsection{Graphs}
We characterize the pure graph motif parameters $\varphi$ with a combinatorial interpretation as those whose coefficients are nonnegative integers. Such graph motif parameters $\varphi$ clearly have a combinatorial interpretation, as they count all induced occurrences of pattern graphs, possibly with multiplicities (see Lemma~\ref{lem:graphs_lower}.) On the other hand, if $\varphi$ has a negative coefficient, then we show that $\varphi$ is not in $\promise\sharpPdottedcircle$ and thus does not admit a combinatorial interpretation.
Formally we prove (recall Def.~\ref{def:combinatoriallyinterpretable} about combinatorial interpretability):

\begin{theorem}
\label{thm:graphs}
Let $\varphi$ %
be a pure graph motif parameter, i.e.,
$\varphi(G)= \sum_{i=1}^s \alpha_i \cdot \indsub{H_i}{G}$
for pairwise non-isomorphic pure $H_1,\ldots,H_s$ and coefficients $\alpha_1 ,\ldots,\alpha_s \in \qqq$.\footnote{By Lemma~\ref{lem:graphs_integercoeffs} the interesting cases are those where all $\alpha_i\in\IZ$.}
Then $\varphi$ is combinatorially interpretable
iff all $\alpha_1 ,\ldots,\alpha_s \in \qqq$ are nonnegative integers.
\end{theorem}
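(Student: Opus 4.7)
The ``if'' direction is exactly Lemma~\ref{lem:graphs_lower}. For the ``only if'' direction, I would assume for contradiction that $\eval\varphi \in \promise\sharpPdottedcircle$ via a polynomial-time nondeterministic oracle machine $M$, while some coefficient $\alpha_{j_0}$ of $\varphi$ is negative; by Lemma~\ref{lem:graphs_integercoeffs}, $\alpha_{j_0} \leq -1$ is an integer. The plan, following the Ikenmeyer--Pak framework and Hertrampf--Vollmer--Wagner's classification of relativizing closure properties of $\sharpP$, is to use $M$ on large oracles to extract a \emph{nonnegative integer} representation of $\varphi$ in the basis $\{\indsub{H}{\starpad}\}$ of induced-subgraph counts, and then invoke uniqueness (Fact~\ref{fact: lin-indep-ind}) to reach a contradiction.

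Concretely, I would fix $n$ much larger than the runtime of $M$ on the encoding of the integer $n$. On any oracle $O$ representing an $n$-vertex graph, each accepting path of $M$ inspects only $\polylog(n)$ oracle bits and is therefore determined by a constant-size labelled ``query graph'' on some $O(1)$ vertices from $[n]$. Grouping accepting paths by their query graph and counting, I obtain a representation
\[
\eval\varphi(n, O) \;=\; \sum_{F} c_F \cdot (\text{number of labelled occurrences of } F \text{ in } O),
\]
with nonnegative integer coefficients $c_F$ indexed by a finite family of labelled pure graphs $F$. The task becomes to collapse this labelled representation into a linear combination of the (unlabelled) induced-subgraph counts $\indsub{H}{\starpad}$ while keeping the coefficients nonnegative integers.

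This collapse is where the subtlety pointed out in the introduction enters. I would pass to ordered graphs, on which the required partite-hypergraph Ramsey theorem applies cleanly: expanding each unordered pattern $H_i$ into its ordered versions $\vec H_i$, all inheriting the coefficient $\alpha_i$, and noting that $M$ also computes the resulting $\vec\varphi$ on lexicographically-ordered oracle encodings, reduces the problem to the ordered setting. Applying a suitable Ramsey theorem to the vertex set $[n]$ then produces an indiscernible subset on which the query graph of each accepting path depends only on the ordered induced subgraph of the queried vertices, and summing the contributions yields a \emph{nonnegative integer} representation of $\vec\varphi$ in the basis of ordered induced-subgraph counts. By the ordered case of Fact~\ref{fact: lin-indep-ind}, this representation must coincide coefficient-by-coefficient with the original representation of $\vec\varphi$, so every coefficient is nonnegative, contradicting $\alpha_{j_0} < 0$.

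The main obstacle is precisely the Ramsey step: the queries $M$ makes along an accepting path are sensitive to the \emph{labels} of the vertices examined, not only to their unordered induced subgraph, which is why one cannot directly symmetrize over $S_n$ in the unordered setting. The passage to ordered graphs, together with the application of a partite-hypergraph Ramsey theorem to produce an indiscernible vertex set, is the technical heart of the argument. A secondary point is that purity of the patterns is exactly what lets us recover the coefficients of $\varphi$ from evaluations on graphs of bounded order via Fact~\ref{fact: lin-indep-ind}; without purity, padding with isolated vertices can hide negative coefficients inside graphs we cannot reach in the type-2 model, which is why the theorem is restricted to pure $\varphi$.
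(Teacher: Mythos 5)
Your skeleton (the ``if'' direction via Lemma~\ref{lem:graphs_lower}, the reduction to ordered graphs by expanding each $H_i$ into its ordered versions, then Ramsey plus linear independence via Fact~\ref{fact: lin-indep-ind}) is the same as the paper's, but the technical core of your ``only if'' direction has genuine gaps. First, an accepting path of $M$ on input $n$ may make $\polylog(n)$ oracle queries, not $O(1)$, so it is \emph{not} determined by a constant-size labelled query graph; and even after grouping paths by their query behaviour, each path contributes the indicator of a \emph{partially specified labelled} graph (a conjunction of edge and non-edge constraints at specific labels), not a count of labelled occurrences of a pattern. The displayed identity $\eval\varphi(n,O)=\sum_F c_F\cdot(\text{number of labelled occurrences of }F\text{ in }O)$ with nonnegative $c_F$, valid for all oracles $O$, is therefore unjustified, and the paper never establishes anything of this global form. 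Instead, the argument only extracts a nonnegative representation \emph{locally}, on a finite family of oracles built by a set-instantiator (Definition~\ref{def:graphs_setinstantiator}, Lemma~\ref{lem:graphs_setinstantiators_exist}): a fixed host graph and \emph{all} of its induced subgraphs are instantiated at one common oracle size $j$ by padding with isolated vertices and placing the embedded vertices at random, so that no accepting path queries the part of the host outside the instantiated subgraph. This step is entirely absent from your sketch, yet it is exactly where purity of $\varphi$ is used (it guarantees $\varphi(\inst(H))=\varphi(H)$ despite the padding), and without it the ``perceived'' induced subgraph of an accepting path is not even well defined relative to a fixed host.

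Second, the Ramsey statement you invoke is of the wrong kind. Taking an indiscernible subset of the labels $[n]$ (a partite/hypergraph Ramsey argument in the Hertrampf--Vollmer--Wagner style) does not suffice here, because what must be made isomorphism-invariant is a colouring of the \emph{copies of each pattern inside a fixed host graph} (namely, the number of accepting paths whose perception is that copy), not a colouring of vertex subsets whose meaning is independent of where the host is embedded; the paper needs the Ne\v{s}et\v{r}il--R\"odl structural induced Ramsey theorem for ordered graphs (Lemma~\ref{lem:nesetril}, iterated in Proposition~\ref{pro:graphs_ramseyposet}), and this is also the real reason for the detour through ordered graphs. Finally, your concluding step ``the representation must coincide coefficient-by-coefficient with $\vec\varphi$'' is too quick: the good function extracted from $M$ is only known to agree with $\varphi$ on the poset $\poset{G}$ of one host (so the host must be chosen to contain the disjoint union of $\supp(\varphi)$), and it may use patterns \emph{with} isolated vertices, which lie outside the pure basis you compare against. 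One must either compare coefficients over the full finite downward-closed family $\poset{G}$, or, as the paper does in its final lemma, first blow the host up with isolated vertices so that any good function using an impure pattern visibly overshoots, and only then apply the Witness Theorem~\ref{thm:graphs_findcounterexample} on the pure part. As written, your proposal asserts the conclusions of these steps without the constructions that make them true.
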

We show Theorem~\ref{thm:graphs} in \S\ref{sec:graphs} with a delicate and technically demanding proof. The intuition is however rather easy to explain, and we illustrate it on the Example~\ref{ex:nonneg}.
An NTM $M$ is supposed to output $f(G)$, but it can only query a small part of $G$.
Assume that $G$ is either the union of a triangle and exponentially many singletons, or the union of an edge with exponentially many singletons, or $G$ is the graph that consists of only exponentially many singletons.
If $G$ is an edge and singletons, then one computation path of $M$ must accept and must have queried the oracle at the edge, because otherwise this computation path would lead to an accepting path even if $G$ has no edges, which would be the wrong answer.
Now, in the case that $G$ is a triangle and singletons, it turns out (which is delicate) that such an accepting path exists for each of the triangle edges, so we get 3 accepting computation paths, but $f(G)=2$, so the NTM has too many accepting paths.

\subsection{More General Structures}

The methodology used to prove Theorem~\ref{thm:graphs} can be generalized to more general structures, provided that they satisfy the following conditions:
\begin{itemize}
    \item Ramsey property: We require Ramsey-type theorems for the structures, which assert---roughly speaking---that for given objects $A,B$, there exists an object $C$ which contains enough $B$-copies such that, by partitioning the set of all $A$-copies in $C$ into a small number of parts, there is some part that contains a full $B$-copy.   
    In the case of pure graph motif parameters, for instance, this is Lemma~\ref{lem:nesetril}: For any ordered graphs $F, G$, there is an ordered graph $H$ such that for any partitioning of the induced subgraphs of $H$ that are isomorphic to $F$, we can find an induced copy of $G$ in $H$ such that all induced copies of $F$ in this copy of $G$ are in the same part.
    \item A neutral padding operation: We require an operation to enlarge a given object without increasing the number of induced subobjects.
    In the case of pure graph motif parameters, this can be achieved by adding isolated vertices, however we introduce different possible ways of achieving this for different objects, leading to different definitions of ``pure''.
\end{itemize}

\subsubsection{Relational Structures}

For instance, we can prove results similar to Theorem~\ref{thm:graphs} for various variants of relational structures, which generalize graphs. In graphs, we consider exactly one binary relation on a set $V$, the edge relation. In a relational structure, we can have multiple relations of various arities. A relational structure $A$ is an induced substructure of another relational structure $B$,
if we can obtain $A$ by restricting the relations of $S$ to the vertices of $A$.
Similarly to graphs, we can now define \emph{motif parameters} as linear combinations of induced substructure numbers.
Formal definitions are given in Section~\ref{sec:relational_structures}.
We get the following classification for motif parameters of relational structures.

\begin{theorem}[informal, see Theorem~\ref{thm:ordrelstructures} for a precise statement]
The evaluation function of a pure motif parameter of relational structures is combinatorially interpretable (i.e., it is in $\promise \sharpPdottedcircle$) iff all its coefficients are nonnegative integers.
\end{theorem}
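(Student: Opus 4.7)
The plan is to lift the proof of Theorem~\ref{thm:graphs} from graphs to relational structures, relying on the same two ingredients highlighted in the preceding discussion: a Ramsey-type theorem for the class of ordered relational structures, and a neutral padding operation compatible with the notion of purity on that class. The ``if'' direction is a direct analog of Lemma~\ref{lem:graphs_lower}: an oracle NTM on input $(x,O)$ guesses an index $i\in\{1,\ldots,s\}$, a $|V(H_i)|$-tuple of distinct vertex names, and a multiplicity index $k\in\{1,\ldots,\alpha_i\}$, then reads off the induced substructure by polynomially many oracle calls, checks isomorphism with $H_i$ by brute force, and accepts on success. The total number of accepting paths equals $\sum_i \alpha_i\cdot \indsub{H_i}{G}=\varphi(G)$.

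For the ``only if'' direction, suppose $\varphi=\sum_{i=1}^s \alpha_i\cdot\indsub{H_i}{\starpad}$ is pure with some $\alpha_j<0$, and assume towards a contradiction that a polynomial-time oracle NTM $M$ witnesses $\eval{\varphi}\in\promise\sharpPdottedcircle$. I would first pass to the ordered setting as in the graph case: the lexicographic order on vertex names is supplied by the encoding, and the unordered count $\indsub{H}{\starpad}$ decomposes into a sum of ordered induced counts over the linear orderings of $V(H)$, so the support and the purity property transfer to an ordered motif parameter in which some coefficient is still negative. Next, I fix the neutral padding: define an element to be \emph{isolated} when it occurs in no tuple of any relation, so that the appropriate notion of purity (which the theorem assumes) excludes patterns containing isolated elements; appending fresh isolated elements to any input structure $G$ then leaves $\indsub{H_i}{G}$ unchanged for every $H_i\in\supp(\varphi)$.

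The combinatorial core follows the sketch after Theorem~\ref{thm:graphs}. Any accepting computation of $M$ inspects only a constant-size portion of the oracle, so it is determined by the ordered induced substructure on the queried vertex set. I would apply the Nesetril-Rodl Ramsey theorem for ordered relational structures to the induced sub-objects of a suitably padded baseline input, coloring each sub-object by the collection of accepting computations consistent with it. The resulting highly homogeneous input admits a clean formula for $\#\mathrm{accept}(M,G)$ as a \emph{non-negative} integer combination of induced counts inside the homogeneous piece, necessarily producing too many accepting paths at the pattern $H_j$; combined with the promise that $\#\mathrm{accept}(M,G)=\varphi(G)$, which includes the negative contribution $\alpha_j\cdot\indsub{H_j}{G}$, one extracts a contradiction by scaling up the homogeneous piece until the $H_j$-count dominates.

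I expect the Ramsey step to be the main obstacle. For relational structures of arbitrary finite arity the required Ramsey statement is due to Nesetril-Rodl (in an ordered vocabulary) and must be invoked for the precise class of ordered structures closed under induced substructures in which we work; moreover, as flagged in the excerpt's remark that ``graphs do not have the required Ramsey property'', the ordering cannot be dropped, so all translations between the unordered parameter $\varphi$ and the ordered argument must be bookkept carefully. A second subtlety is verifying that padding by isolated elements is compatible with the chosen Ramsey class, i.e., that the padded structures still admit Ramsey partners inside it; this is precisely where the notion of ``pure'' enters, and it is also what makes the dichotomy sharp at the purity boundary.
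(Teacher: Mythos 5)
Your overall architecture (reduce to the ordered setting, pad with neutral vertices, homogenize via Ne\v{s}et\v{r}il--R\"odl, then contradict the negative coefficient) is the paper's, but the combinatorial core of your sketch has a genuine gap. You assert that ``any accepting computation of $M$ inspects only a constant-size portion of the oracle, so it is determined by the ordered induced substructure on the queried vertex set.'' This is false: $M$ runs in time $\poly(|x|)=\polylog(N)$ on an $N$-vertex structure, so an accepting path may query polylogarithmically many vertices, and nothing a priori prevents it from probing relations involving vertices outside the copy of the pattern you intend it to ``see''. This is exactly what the set-instantiator machinery is for (Definition~\ref{def:rel_setinstantiator}, Lemma~\ref{lem:rel_setinstantiators_exist}): a randomized spreading of $V(A)$ into blocks of $P$-padding vertices guarantees that, for every $B\sqsubseteq A$, accepting paths of $M$ on $\inst(B)$ never touch the hidden vertices, which is what makes acceptance monotone in the perceived substructure and lets one write $\#\acc$ as a nonnegative combination of perception counts, cf.\ \eqref{eq:PhiPerception}. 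Relatedly, the Ramsey coloring must have a bounded palette; the paper colors by the \emph{number} of accepting paths (bounded by $\max\varphi$ over the poset), whereas coloring by ``the collection of accepting computations consistent with'' a subobject is an unbounded palette. Your proposal omits this construction entirely, and the paper in fact re-proves it for relational structures precisely because the oracle encoding changes.

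The endgame is also mis-aimed. You cannot ``scale up the homogeneous piece until the $H_j$-count dominates'': $H_j$ is pure, and the neutral padding by construction leaves all pure counts unchanged; padding can only inflate counts of \emph{non-pure} patterns. The actual argument is a two-case analysis: after homogenization (the analogue of Lemma~\ref{lem:graphs_forceGood}) the machine's local behaviour is some good function $\Psi$ over the whole substructure poset; if $\Psi$ uses a non-pure pattern with positive coefficient, one pads so that its count exceeds $\max\{\varphi(F)\}$, giving a direct counterexample as in \eqref{eq:graphs_blowupNonR}--\eqref{eq:graphs_keepR}; otherwise $\Psi$ is pure-good and one invokes linear independence of the induced-count functions on the poset of a joint embedding (disjoint union) of $\supp(\varphi)$ --- the Witness Theorem~\ref{thm:graphs_findcounterexample} --- to find a point where $\Psi\neq\varphi$. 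Finally, note that the precise statement (Theorem~\ref{thm:ordrelstructures}) is broader than your setting: purity is relative to a vector $P\in\{0,1\}^\ell$, so a padding vertex may be required to occur in \emph{all} tuples of some relations rather than in none, and the domain is a class $\forb{F}$ avoiding an upwards-closed family of irreducible forbidden substructures; your ``isolated element'' padding covers only $P=(0,\dots,0)$ with unrestricted domain. One must also check that the Ramsey theorem applies to $\forb{F}$ (it does because $F$ is irreducible) and that padding does not leave $\forb{F}$; the unordered transfer you include is handled separately in the paper (Theorem~\ref{thm:mixedrelstructures}).
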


We also consider further types of relations, namely multiset relations (each relation is a multiset and can have repetitions), and list relations (which is the equivalent to directed graphs in the relational setting) with and without repetitions. We call these relational structures \emph{mixed}. Precise definitions are again given in Section~\ref{sec:relational_structures}.
For such mixed relational structures, we get similar dichotomies, which we first prove in the ordered setting (Theorem~\ref{thm:ordmixedrelstructures}), since the corresponding Ramsey property only holds for ordered mixed relational structures. 
Then we transfer this theorem to the unordered setting in Theorem~\ref{thm:mixedrelstructures}---a similar detour into the ordered setting is in fact already needed for the case of graphs.
    
An application of mixed relational structures are colored graphs, where nodes are colored and isomorphisms have to respect the colors. Theorem~\ref{thm:colored_graphs} shows a dichotomy result for colored graphs.

\subsubsection{Category Theory}

Ramsey-type theorems have even been established within category theory \cite{graham1972ramsey, DBLP:journals/jct/Nesetril77}. 
Since Ramsey arguments are crucial in our method, this raises the interesting question whether we can also transfer our results into this realm. Indeed, we achieve this in Section~\ref{sec:category_general}. 
Our objects of consideration are now objects of some category $C$, and the subobjects we consider are now specified using a class of morphisms $\cM$ from $C$.
For two objects $a$ and $b$ of $C$, we call the morphisms $a \to b$ in $\cM$, identified if they only differ by an isomorphism of $a$, the \emph{$\cM$-subobjects} of $a$ under $b$. In the case of undirected graphs, we choose $\cM$ to be the class of all strong graph embeddings.
An important concept in our construction will be so-called factorization systems, used also by Lagodzinski~\cite{Lagodzinski24} and Isbell~\cite{ISBELL199187} in the context of counting problems.
We have a second class of morphisms $\cE$ and we require that every morphism $f: a \to b$ factors (essentially uniquely) as $f = me$ with $m \in \cM$ and $e \in \cE$.
In our running example of unordered graphs, we can take $\cE$ to be the class of surjective graph homomorphisms and $\cM$ the above class of strong graph embeddings.
Then every graph homomorphism $f: H \to G$ factors through the subgraph $G[f(V(H))]$ of $G$ that is induced by the image of $V(H)$. 

Recall that for graph motif parameters, we had a notion of pure graphs. We will have the same here: We will have two categories, a category $C$ of objects that we consider and a category $P$ of pure objects that we want to count. In Lemma~\ref{lem:cat:prop}, we collect a number of natural and rather mild requirements on $C$ and $P$ (maybe except for the property of being Ramsey) in order to prove our main result that $P$-pure motif parameters can only have a combinatorial interpretation if the coefficients are all nonnegative.

\begin{theorem}[informal, see Theorem~\ref{thm:cat_main_theorem} for the precise statement]
    Let $\varphi$ be a $P$-pure motif parameter.  
    Then under the conditions specified in Lemma~\ref{lem:cat:prop}, we have $\eval{\varphi} \notin \promise\sharpPdottedcircle$ unless all its coefficients are nonnegative integers.
\end{theorem}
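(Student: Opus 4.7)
The plan is to lift the contradiction outlined after Theorem~\ref{thm:graphs} to the categorical setting, combining the three axioms packaged in Lemma~\ref{lem:cat:prop}---essentially-unique $(\cM,\cE)$-factorization, a neutral padding operation, and a Ramsey property for $\cM$-subobjects---with the linearization identity of Corollary~\ref{cor:cat:lin}. Suppose $\varphi=\sum_{i=1}^{s}\alpha_{i}\cdot \indsub{H_{i}}{\starpad}$ is a $P$-pure motif parameter with integer coefficients (the categorical analogue of Lemma~\ref{lem:graphs_integercoeffs}), at least one coefficient $\alpha_{j}<0$, and yet $\eval{\varphi}\in\promise\sharpPdottedcircle$ via an oracle NTM $M$ running in time polynomial in the succinct encoding length. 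I would first pick $H^{\ast}$ among the negative-coefficient patterns so that $H^{\ast}$ is $\sqsubseteq$-minimal in the support; by this choice, every pattern in the support which is a proper $\cM$-subobject of $H^{\ast}$ carries a nonnegative coefficient.

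Next I would use the neutral padding operation to build two families of valid inputs $a_{n}$ and $b_{n}$: each is a succinctly describable object of exponential size, $a_{n}$ contains a single distinguished $\cM$-copy of $H^{\ast}$ sitting inside a large neutral padding, and $b_{n}$ is obtained from $a_{n}$ by replacing $H^{\ast}$ with the terminal object. Neutrality of the padding and $\sqsubseteq$-minimality of $H^{\ast}$ together force $\varphi(a_{n})-\varphi(b_{n})=\alpha_{H^{\ast}}+\gamma_{n}$ with $\gamma_{n}\in\IN$ computable from the coefficients of proper $\cM$-subobjects of $H^{\ast}$. Since every computation path of $M$ queries only a polynomial-size window, essentially-unique factorization assigns to each path a canonical $\cM$-subobject of the input through which the queried data factors, and comparing runs of $M$ on $a_{n}$ versus $b_{n}$ shows that any path accepting on $a_{n}$ but not on $b_{n}$ must genuinely inspect the distinguished $H^{\ast}$-copy.

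I would then amplify via Ramsey. For large $n$, invoke the categorical Ramsey property to produce an object $R_{n}$ containing many $\cM$-copies of $H^{\ast}$ sharing a common neutral padding, such that for any coloring of these $H^{\ast}$-copies by the constant-size set of possible computation-path signatures of $M$ there exists a $P$-pure $\cM$-subobject $G\sqsubseteq R_{n}$ of arbitrary size that is monochromatic. Color each $H^{\ast}$-copy by the collection of accepting paths of $M$ on $R_{n}$ that inspect exactly that copy, which is well-defined by the previous step. Inside the monochromatic $G$, the accepting-path count of $M$ grows linearly in $\indsub{H^{\ast}}{G}$ with a strictly positive slope, since each copy is witnessed by at least one accepting path. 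Yet $\varphi(G)$ picks up $\alpha_{H^{\ast}}<0$ per $H^{\ast}$-copy, modulo a controlled nonnegative contribution from proper $\cM$-subobjects of $H^{\ast}$, so for $G$ large enough the number of accepting paths strictly exceeds $\varphi(G)$, contradicting our assumption on $M$.

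The principal obstacle---and the reason the argument is genuinely more delicate than in the graph case---is the interaction between the oracle queries and the factorization system $(\cM,\cE)$: one must arrange that the neutral padding neither creates spurious $H^{\ast}$-subobjects nor merges distinct copies via automorphisms of the ambient object, and one must translate polynomially many raw oracle queries into a well-defined coloring on $\cM$-subobjects of constant type. The factorization system is crucial here, canonically assigning to each query trace a unique-up-to-iso $\cM$-subobject through which it factors. Once this bookkeeping is in place, the final double count reduces to the triangle-versus-edge argument sketched after Theorem~\ref{thm:graphs}.
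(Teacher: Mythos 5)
Your high-level ingredients (locality of oracle queries, Ramsey, neutral padding) match the paper's, but the core counting step of your argument has a genuine gap. You claim that in the monochromatic object $G$ every $\cM$-copy of $H^{\ast}$ is ``witnessed by at least one accepting path'' that inspects it, so that the accepting-path count grows linearly in $\cntMsub{H^{\ast}}{G}$. This is never justified and is false in general: a machine computing $\eval{\varphi}$ may only ever inspect positive-coefficient patterns (in the paper's motivating example $f=\#K_2-\#K_3+\cdots$, the intended machine's accepting paths inspect single edges, never a whole triangle), and your comparison of $a_n$ with $b_n$ only shows that a path accepting on $a_n$ but not on $b_n$ must touch the copy---it does not produce such a path, and when $\alpha_{H^{\ast}}<0$ there is no reason $\varphi(a_n)>\varphi(b_n)$. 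Even granting one path per copy, the conclusion ``accepting paths exceed $\varphi(G)$ for large $G$'' does not follow, because $\varphi(G)$ also contains positive contributions from all other patterns (including ones larger than $H^{\ast}$) whose counts in $G$ grow as well; no asymptotic domination is established, and a single comparison point or family cannot separate a bad $\varphi$ from a candidate good behaviour (e.g.\ $2\,\#K_2-\#K_3$ and $\#K_2+2\,\#K_3$ agree on a triangle).

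The paper's proof avoids exactly this: via set-instantiators (whose existence is an explicit \emph{hypothesis} of Theorem~\ref{thm:cat_main_theorem}, since query-locality comes from a probabilistic embedding as in Lemma~\ref{lem:graphs_setinstantiators_exist}, not from the factorization system, and depends on the oracle encoding) one instantiates every $\cM$-subobject of a Ramsey object at the same input size, colors each subobject by its accepting-path count (a palette bounded by $\max\varphi$, unlike your unbounded ``collections of path signatures''), and uses Proposition~\ref{pro:cat_ramseyposet} plus induction to show that the number of paths with a given perception depends only on the isomorphism type. This forces the machine's counts to form a \emph{good} motif parameter $\Psi$ on the subobject poset (Lemma~\ref{lem:cat_forceGood}); the contradiction then comes from linear independence of the functions $\cntMsub{a}{\starpad}$ (the Witness Theorem~\ref{thm:cat_findcounterexample}, which is where joint $\cM$-embedding and Lemma~\ref{lem:cat_self_Ms_are_isos} enter), together with the blowup property to exclude the case that $\Psi$ uses non-pure patterns---a case your sketch does not address, as you also do not treat non-integer coefficients. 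In short, the missing idea is to convert the machine's local behaviour into a nonnegative linear combination of subobject counts and refute it by linear algebra over the whole poset, rather than by a per-copy double count against $\varphi(G)$.
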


The theorem does not automatically give a dichotomy, since we do not have a corresponding upper bound when the coefficients are all nonnegative integers. However, our concrete applications in fact do admit corresponding upper bounds:
\begin{itemize}
    \item 
    First we consider finite dimensional vector spaces over finite fields, which were shown to have the Ramsey property in \cite{graham1972ramsey} using methods from category theory. 
    In particular, we show that the evaluation function of a motif parameter over a finite vector space is in $\promise\sharpPdottedcircle$ iff all its coefficients are nonnegative integers (Theorem~\ref{thm:cat:main:vectorspace}).
    \item Then we consider so-called parameter sets, which are subsets of $A^n$ for some fixed alphabet $A$ where we may set a coordinate to constant or identify any number of coordinates. They are also known to satisfy a Ramsey property, see \cite[Theorem 10.4]{Nesetril1995}.
    As a consequence of our general theorem, 
    a motif parameter over parameter sets is in $\promise\sharpPdottedcircle$ iff all its coefficients are nonnegative integers (Theorem~\ref{thm:cat:main:parsets}).
\end{itemize}

\subsection{Linearization}

As shown examplarily on the first pages, graph motif parameters enjoy a useful linearization property: Any polynomial in induced subgraph counts is a graph motif parameter, that is, a \emph{linear} combination of induced subgraph numbers.
Among other applications, this universality of linear combinations can be used to show that induced subgraph counts are algebraically independent, as any annihilating polynomial would in fact translate to a linear combination that always evaluates to zero, thus contradicting linear independence.
Linearization also motivates our focus on linear combinations of subobject counts.

In Lemma~\ref{lem:cat:linearize} and Corollary~\ref{cor:cat:lin}, we establish conditions for subobject numbers of a category $C$ to enjoy similar linearization properties.
We obtain that the coefficients in the linearized version are positive and have a combinatorial interpretation if the polynomial we started with has positive coefficients in the binomial basis; see the next section. 

\section{Related Work}
\label{sec:related-work}
Searching for combinatorial interpretations for nonnegative integer quantities has a long tradition in combinatorics dating back to Cayley, Littlewood, Richardson, Sch\"utzenberger, Macdonald, Schensted, Knuth, Stanley, and others, e.g.,~\cite{K1882, Sch77, Sta:00}, and has recently been studied from the perspective of $\sharpP$ containment and functional $\sharpP$ closure properties \cite{Pak19, IP:22, IPP:23, CP24}.
The univariate relativizing functional closure properties of $\sharpP$ have been characterized in \cite[Thm~3.1.1(b)]{CGHHSWW:89}:
Consider the binomial coefficient $\binom{x}{i}=\frac{1}{i!} x (x-1) (x-2) \cdots (x-i+1)$ for $i \in \IN$ as a polynomial in $x$, and let $\varphi:\IN\to\IN$ be a univariate polynomial with expansion $\varphi(x) = \sum_{i=1}^k \alpha_i \binom{x}{i}$ for $\alpha_i \in \IQ$.
This expansion is called the binomial basis expansion of $\varphi$.
It is known that $\varphi(\sharpPdottedcircle) \subseteq \sharpPdottedcircle$ iff all $\alpha_i\in\IN$ (see \cite{IP:22} for a proof).
The connection between combinatorial interpretations and functional closure properties of $\sharpP$ is very recent \cite{IP:22}.

In the language of graphs, the construction in \cite{CGHHSWW:89} can be phrased as follows.
Let $\graphs'$ denote the graphs on $\{1,2,\ldots, 3k\}$ whose vertices $\{3i,3i+1,3i+2\}$ either form a triangle or a path $\Gpath$ on 3 vertices.
Consider graph motif parameters for which all patterns are disjoint unions of triangles, i.e.,
$\varphi(G) = \sum_i \alpha_i \indsub{H_i}{G}$, where $H_i$ is a union of $i$ disjoint triangles.
For $G\in\graphs'$ with $x$ many triangles, we have $\varphi(G) = \sum_i \alpha_i \binom{x}{i}$. Their construction can be interpreted as oracle graphs $G\in\graphs'$ showing that $\varphi$ is not combinatorial.

The non-oracle implications in the univariate case were studied in \cite[p.~307]{OH:93}, which shows that $\ComCla{GapP}\cap\{f:\{0,1\}^*\to\IN\mid \forall w: f(w)\geq 0\} = \sharpP$ implies $\ComCla{C}_=\ComCla{P} = \ComCla{coNP}$ and $\ComCla{SPP} = \ComCla{UP}$.\footnote{See \cite{DBLP:series/txtcs/HemaspaandraO02} for an overview over these classes. You will however not need to know about them to understand this paper.} The multivariate case was resolved in \cite[Thm~3.14]{HVW:95}, cf.~\cite[Thm~23]{Her:95}, based on a Ramsey theorem \cite[Thm~12]{Her:95}.
As in the univariate case, one gets small variants of our main theorem, whereas we consider a larger set of non-isomorphic graphs of the same cardinality, each without isolated vertices.
The functional closure properties of other counting classes are mainly unknown, with the exception of $\GapP$ \cite[Thm~6]{Bei:97} and finite automata \cite{DI:24}.

While search versions of type-2 problems have been studied already in \cite{BCEIP:95} and later works on $\TFNP$, the counting versions have only recently been studied to understand whether their nonnegative transformations still have combinatorial interpretations \cite{IP:22}.
We improve on the work \cite{IP:22} in various ways. First, we show much more general results due to our general theorem in the categorial framework. Second, our results are also valid for unordered objects, like unordered graphs, which do not have the required Ramsey property per se. We overcome this technical difficulty by carefully reducing the unordered case to the ordered one.
All graph problems in \cite{IP:22} are on ordered graphs only.

\section{Results for Graphs}
\label{sec:graphs}
\subsection{Reduction to Ordered Graphs}
To prove Theorem~\ref{thm:graphs}, we first prove the following stronger result about ordered graphs:

\begin{theorem}
\label{thm:ordgraphs}
Let $\varphi: \ordgraphs \to \IN$ be a pure ordered graph motif parameter.
Then $\eval{\varphi} \in \promise\sharpPdottedcircle$ iff all coefficients of $\varphi$ 
are nonnegative integers.
\end{theorem}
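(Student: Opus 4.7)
The ``if'' direction is immediate from Lemma~\ref{lem:graphs_lower}. For the ``only if'' direction, I would proceed by contradiction. Assume $\eval\varphi \in \promise\sharpPdottedcircle$ is witnessed by a polynomial-time type-2 relation $R$ computed by an oracle TM $M$ with witness length $p(n)$ and runtime $q(n)$ on $N=2^n$-vertex ordered graphs, and assume $\varphi = \sum_i \alpha_i \indsub{H_i}{\starpad}$ has some coefficient $\alpha_j<0$. Let $k = \max_i |V(H_i)|$.

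The purity hypothesis gives a \emph{padding} operation: for any pure ordered graph $G_0$ on $k_0\le k$ vertices and any order-preserving injection $\iota\colon V(G_0)\hookrightarrow [N]$, the ordered graph $G(G_0,\iota)$ that places $G_0$'s edges on $\iota(V(G_0))$ and leaves every other vertex isolated satisfies $\varphi(G(G_0,\iota)) = \varphi(G_0)$ and hence $\#R(G(G_0,\iota)) = \varphi(G_0)$. For each fixed witness $y$, the computation of $M(n,y,\cdot)$ follows an adaptive oracle decision tree of depth $\le q(n)$; every accepting leaf is characterized by its \emph{trace}, the sequence of queried vertex pairs and oracle responses along the path. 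There are at most $2^{p(n)+q(n)}$ accepting traces in total.

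The core step is a Ramsey homogenization using the ordered Nešetřil-Rödl theorem (Lemma~\ref{lem:nesetril}). For each small ordered pattern $F$ and each natural coloring of the induced $F$-copies in $[N]$ by the accepting-trace data whose queried positions realize $F$ in that specific way, Ramsey yields, for all sufficiently large $n$, a homogeneous subset $S \subseteq [N]$ of size $\geq k$ on which this data is invariant under order-preserving bijections. Iterating Ramsey over the finitely many relevant patterns, traces, and witnesses, the count $\#R(G(G_0,\iota))$ then admits a decomposition
\begin{equation*}
\#R(G(G_0,\iota)) \;=\; \sum_H \beta_H\,\indsub{H}{G_0},
\end{equation*}
summed over pure ordered graphs $H$ of bounded size, with nonnegative integer coefficients $\beta_H \in \IN$ counting the (witness, trace) pairs whose ``interesting'' queried vertices induce $H$ in $G_0$. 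Traces that query positions outside $\iota(V(G_0))$ contribute only when those queries require the response $0$, so they are absorbed into the $\beta_H$ indexed by pure graphs on the interesting part alone.

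Since this decomposition equals $\varphi(G_0) = \sum_i \alpha_i \indsub{H_i}{G_0}$ for every sufficiently small pure ordered $G_0$, linear independence of induced subgraph functions (Fact~\ref{fact: lin-indep-ind}) forces $\beta_{H_i} = \alpha_i$ for all $i$; but $\beta_{H_j}\in\IN$ contradicts $\alpha_j<0$. I expect the main technical obstacle to lie in the Ramsey homogenization step: adaptivity forces the queried vertex positions to depend on prior oracle responses, so the colorings fed to Ramsey must be stratified over decision-tree path prefixes and composed across all witnesses simultaneously, and traces straddling the interesting and isolated parts of the padded graph require careful handling to guarantee that the final decomposition is indexed by \emph{pure} ordered graphs only---which is precisely why the purity assumption on $\varphi$ is essential.
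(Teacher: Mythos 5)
Your high-level plan (pad with isolated vertices, homogenize with the ordered Ramsey theorem, read off a decomposition with nonnegative integer coefficients, finish by linear independence) is the same skeleton as the paper's, but the step that carries the entire weight of the proof is not established. To write $\#\acc_{M^{G(G_0,\iota)}}$ as $\sum_H \beta_H\,\indsub{H}{G_0}$ with $\beta_H\in\IN$ you need a \emph{perception} structure: an accepting (witness, trace) pair must accept on the padded instance of an induced subgraph $G_0'\sqsubseteq G_0$ exactly when the pattern it has observed through its queries lies inside $G_0'$. With placements obtained only by homogenization this monotonicity simply fails: a trace that accepts on the instance of $G_0'$ may have queried, with answer $0$, a pair that is an edge of $G_0$, so it is not even a computation path of the larger instance, and symmetric failures occur in the other direction. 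The paper does not get this from Ramsey at all; it gets it from a separate randomized construction, the set-instantiator of Lemma~\ref{lem:graphs_setinstantiators_exist}, which embeds $V(G)$ randomly into a much larger padded vertex set so that, for every $H\sqsubseteq G$, accepting paths on $\inst_\SI(H)$ never query positions of deleted pattern vertices; only then is $\perc_\SI$ well defined and $\#\acc_{M^{\inst_\SI(H)}}(j)=|\{\tau:\perc_\SI(\tau)\sqsubseteq H\}|$, from which the nonnegative coefficients follow by a M\"obius-type induction. Your proposal has no substitute for this ingredient, and you yourself flag exactly this spot as the ``main technical obstacle'' without resolving it.

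Two further concrete problems. First, your Ramsey application is aimed at the wrong object: you color copies of patterns inside the $N$-vertex oracle world by ``accepting-trace data'', but that data records absolute query positions in $[N]$ and witnesses of length $p(n)$, so the number of colors grows much faster than $N=2^n$, and no Ramsey theorem on a ground set of size $N$ can then guarantee homogeneous sets ``for all sufficiently large $n$''. The paper instead applies the Ne\v{s}et\v{r}il--R\"odl theorem entirely in the world of constant-size abstract patterns (Proposition~\ref{pro:graphs_ramseyposet} applied to an ordered graph $\tilde G$ with colors in $\{0,\dots,t\}$ given by accepting-path \emph{counts}, $t=\max\varphi$), and only afterwards transfers to the oracle world via the set-instantiator. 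Second, the claim that traces querying only $0$-answers at some pattern vertices are ``absorbed into the $\beta_H$ indexed by pure graphs'' is unjustified: such traces correspond to perceived patterns containing isolated vertices, i.e.\ non-pure patterns, and they cannot be folded into pure coefficients without introducing signs. The paper handles this honestly: the locally good function $\Psi$ of Lemma~\ref{lem:graphs_forceGood} may have non-pure patterns, and these are eliminated by the blow-up graph $G_2$ (equations \eqref{eq:graphs_blowupNonR} and \eqref{eq:graphs_keepR}: adding many isolated vertices so that any non-pure pattern with positive coefficient forces $\Psi(G_2)>\varphi(G_2)$) before the Witness Theorem~\ref{thm:graphs_findcounterexample} is invoked. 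Without these pieces your asserted decomposition---the heart of the argument---is not proved. (A minor point: for the full ``iff'' you must also exclude non-integer rational coefficients; since $\varphi$ is $\IN$-valued this follows from Lemma~\ref{lem:graphs_integercoeffs}, which you should cite.)
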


Theorem~\ref{thm:graphs} then follows by the simple observation that 
\begin{align}
    \label{eq:obs_symmetrize}
    \indsub{(V_i, E_i)}{(V, E)} &= \sum_{\leq_{V_i}}\indsub{(V_i, E_i, \leq_{V_i})}{(V, E, \leq_V)}
\end{align}
where $\leq_V$ is any arbitrary linear ordering of $V$ and $\leq_{V_i}$ sums over all linear orders of $V_i$ that result in non-isomorphic ordered graphs.
Note that all the pattern graphs occurring on the right-hand side are non-isomorphic pure ordered graphs.
Even more, for different patterns on the left hand side, all possible occuring patterns on the right hand side are non-isomorphic.
Now let $\varphi: \graphs \to \IN$ be a pure graph motif parameter with a negative coefficient.
Then the pure ordered graph motif parameter $\varphi': \ordgraphs \to \IN$ obtained by applying \eqref{eq:obs_symmetrize} to each of the basis functions has a negative coefficient and thus $\eval{\varphi'} \notin \promise\sharpPdottedcircle$ by Theorem~\ref{thm:ordgraphs}.
Assume for the sake of contradiction that $\eval{\varphi} \in \promise\sharpPdottedcircle$, then we can compute $\eval{\varphi'}$ by simulating the ordered oracle in polynomial time and running the corresponding NTM computing $\eval{\varphi}$.
This places $\eval{\varphi'}$ in $\promise\sharpPdottedcircle$, a contradiction.

While this only proves that pure graph motif parameters with some negative coefficients are not combinatorially interpretable, Lemmas~\ref{lem:graphs_integercoeffs} and~\ref{lem:graphs_lower} complete the proof by proving the reverse direction.

\subsection{Ordered Graphs}
The proof of the hardness in Theorem~\ref{thm:ordgraphs} now proceeds in two parts:

First we show how to instantiate any ordered graph and all its induced subgraphs as oracle graphs.
Here it is important that the instantiations all have the same size; otherwise, the Turing machine could infer (through the number of vertices) possibly useful information about an instantiation without explicitly querying the oracle.
For this purpose, for some subset $D \subseteq \ordgraphs$, we say that an ordered graph motif parameter $\Psi: D \to \IN$ is \defn{$D$-good} if all its coefficients are nonnegative integers and all its patterns 
are from $D$.
If any such coefficient is negative, it is instead called \defn{$D$-bad}.
If the domain $D$ is clear from the context, we will also simply call them \defn{good} and \defn{bad} respectively.

The second part of the proof finds a target graph $G$ such that, if there is any counterexample where any (restricted) good function would disagree with our bad $\varphi$, it must already occur on some induced subgraph of $G$.
For a fixed bad pure ordered graph motif parameter $\varphi: \ordgraphs \to \IN$ such a counterexample will always exist.
We then instantiate precisely $\poset{G}$ as oracle graphs.

Simply padding all graphs of $\poset{G}$ to the same size however is not enough.
The main goal here is that acceptance of a computation path is monotone relative to which part of the graph is observed by a computation, i.e., if we embed $H \sqsubseteq G$, then all previously accepting paths that accept on oracle input $H$ should still be accepting when the oracle input is $G$.

Our proof follows the structure of \cite{IP:22}.
We start by formalizing computation paths:
\begin{definition}\label{def:computpath}
A \defn{computation path} $\tau$ of a nondeterministic Turing machine
on some input is defined as the sequence of its nondeterministic choice bits and the answers to its oracle queries (both types of bits appear in the same list, ordered chronologically).
Formally, it is an element of $\{0,1\}^*$.
\end{definition}

The same Turing machine can yield the same computation path on different inputs, for example, when not the whole input is read, or when having access to different oracles, because the oracles can differ in positions that are not queried.

Given a nondeterministic Turing machine $M$ and an oracle graph $G$,
we are interested in the number of accepting paths of $M$ when given oracle access to $G$.
We define the number of accepting paths of $M$ on input $j \in \IN$ (given in binary) with oracle access to $G$ as $\#\acc_{M^G}(j)$. (This notation highlights that $G$ is given as an oracle while the number of vertices $j$ is a classical input.)
Further we will also call $\#\acc_{M^G}(j)$ a computation.
In the following, write $\ordgraphssize{j}$ to denote those graphs in $\ordgraphs$ that contain exactly $j$ vertices, $0, \ldots, j-1$ with the natural order on them, for $j\in \nn$.

\begin{definition}[Set-instantiator against $(M,G,\varphi)$]
\label{def:graphs_setinstantiator}
Let  $M$  be a nondeterministic Turing machine,
let $G \in \ordgraphs$ and
let $\varphi$ be an ordered pure graph motif parameter.
Let  $\top$  be a symbolic top element above the induced subgraph poset $\poset{G}$, i.e., $\top \not\sqsubseteq H$ for all induced subgraphs $H \sqsubseteq G$.
A  \defn{set-instantiator}  $\SI$  is a triple of

$\circ$ \ some $j \in \IN$, 

$\circ$ \ an instantiation function  $\inst_\SI : \poset{G} \to \ordgraphssize{j}$,   and

$\circ$ \  a perception function  $\perc_\SI : \{0,1\}^* \to \poset{G} \cup \{\top\}$,

\smallskip
\noindent
such that both of the following properties hold for all induced subgraphs $H \sqsubseteq G$:

\smallskip
$\bu$ \ $\tau\in\{0,1\}^*$ is an accepting path in computation $\#\acc_{M^{\inst_\SI(H)}}(j)$
iff $\perc_\SI(\tau) \sqsubseteq H$,

$\bu$ \ $\varphi(\inst_\SI(H))) = \varphi(H)$.
\end{definition}

In the above definition, we think of the perception of an accepting computation path of $M$ to be the induced subgraph of $G$ that is observed by $M$ through oracle queries, while computation paths that do not accept are given perception $\top$.

We can now prove by a randomized construction that set-instantiators always exist for polynomial-time NTMs:
\begin{lemma}
    \label{lem:graphs_setinstantiators_exist}
    Let $\varphi: \ordgraphs \to \IN$ be a pure ordered graph motif parameter, let $M$ be a polynomial-time NTM computing $\eval{\varphi}$ and let $G \in \ordgraphs$.
    Then there is a set-instantiator against $(M,G,\varphi)$.
\end{lemma}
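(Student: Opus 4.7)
The plan is to construct the set-instantiator by the probabilistic method, with a careful counting argument that avoids a naive union bound over all $2^{T(\log j)}$ computation paths. Let $T(n)=\poly(n)$ bound the running time of $M$ and set $C:=\sum_{H\sqsubseteq G}\varphi(H)$, a constant depending only on $G$ and $\varphi$. I pick $j$ so large that $C\cdot T(\log_2 j)\cdot|V(G)|^2/j<1$; this is feasible because $T(\log_2 j)$ is polylogarithmic in $j$.

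First I would draw a uniformly random order-preserving injection $\pi:V(G)\hookrightarrow\{0,\ldots,j-1\}$ and define $\inst_{\SI}(H)$ to be the ordered graph on vertex set $\{0,\ldots,j-1\}$ with edges $\{\pi(u)\pi(v):uv\in E(H)\}$; this realizes $H$ as a $\pi$-copy, padded by $j-|V(H)|$ isolated vertices. Purity of $\varphi$ guarantees that each pattern in $\supp(\varphi)$ has all its induced copies in $\inst_{\SI}(H)$ confined to the embedded block $\pi(V(H))$, so $\varphi(\inst_{\SI}(H))=\varphi(H)$, settling the second set-instantiator condition. To define $\perc_{\SI}(\tau)$, I simulate $M$ on input $j$ using the bits of $\tau$ as nondeterministic choices and oracle answers; writing $S^+_\tau$ for the vertices of $V(G)$ whose $\pi$-images occur as endpoints of positive queries in $\tau$, I set $\perc_{\SI}(\tau):=G[S^+_\tau]$ when the simulation accepts, every positive query sits at some $(\pi(u),\pi(v))$ with $uv\in E(G)$, and no negative query sits at a $\pi$-image of a $G$-edge, and $\perc_{\SI}(\tau):=\top$ otherwise.

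Call $\pi$ \emph{good} if for every $H\sqsubseteq G$ the set $A_H$ of accepting paths of $M^{\inst_{\SI}(H)}$ on input $j$ satisfies $A_H\subseteq A_G$. I would then verify that for good $\pi$ the construction above satisfies the first set-instantiator condition: for $\tau\in A_G$ the positive queries of $\tau$ single out exactly the minimal $H$ on which $\tau$ accepts, namely $G[S^+_\tau]$, while for $\tau\notin A_G$ the containment $A_H\subseteq A_G$ forces $\tau$ to accept on no $\inst_{\SI}(H)$, vindicating $\perc_{\SI}(\tau)=\top$. All the remaining work is thus in producing a good $\pi$.

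The hard part is the probabilistic analysis, because a priori the $2^{T(\log j)}$ possible computation paths outnumber any polynomial in $j$, so a naive union bound over $\tau$ fails. The key observation that saves the argument is that the \emph{total} number of accepting paths across all instantiations is deterministically bounded by $C$, since $|A_H|=\varphi(\inst_{\SI}(H))=\varphi(H)$. Estimating $\#\mathrm{bad}\le\sum_{H\sqsubsetneq G}|A_H\setminus A_G|$ and factoring each expectation as
\[
E\bigl[|A_H\setminus A_G|\bigr]\;=\;\sum_\tau P[\tau\in A_H]\cdot P[\tau\notin A_G\mid \tau\in A_H],
\]
the conditional probability is bounded by a union bound over the at most $T$ query positions of $\tau$ and the at most $|E(G)|$ candidate $G$-edges involving $V(G)\setminus V(H)$: each coincidence $(x^-,y^-)=(\pi(u'),\pi(v'))$ occurs with probability $O(|V(G)|/j)$ because $\pi$ restricted to $V(G)\setminus V(H)$ is uniform. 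Hence $E[|A_H\setminus A_G|]\le O\!\bigl(T|V(G)|^2/j\bigr)\cdot\varphi(H)$, and summing over $H$ gives $E[\#\mathrm{bad}]\le O\!\bigl(T(\log j)\cdot|V(G)|^2\cdot C/j\bigr)<1$ by the choice of $j$. The probabilistic method then delivers a good $\pi$, and the construction above is the desired set-instantiator.
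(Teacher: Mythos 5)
Your overall architecture is the same as the paper's: place $G$ randomly into a large vertex set padded by isolated vertices, use purity to get $\varphi(\inst_\SI(H))=\varphi(H)$, exploit that $M$ computes $\eval{\varphi}$ so that the number of accepting paths on $\inst_\SI(H)$ is exactly $\varphi(H)$ (which replaces a union bound over all exponentially many paths), and read the perception off the queries of an accepting path. Your reduction of the set-instantiator conditions to the single event ``$A_H\subseteq A_G$ for all $H\sqsubseteq G$'', with $\perc_\SI$ built from the positively answered queries only, is correct, and is in fact a slightly weaker requirement than the one the paper enforces (there, accepting paths on $\inst_\SI(H)$ must never query any vertex outside the image of $V(H)$).

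The gap is in the key probabilistic estimate. You assert that, conditioned on $\tau\in A_H$, each coincidence $\{x,y\}=\{\pi(u'),\pi(v')\}$ has probability $O(|V(G)|/j)$ ``because $\pi$ restricted to $V(G)\setminus V(H)$ is uniform''. For a uniformly random order-preserving injection this is not true: the event $\tau\in A_H$ is determined by $\pi|_{V(H)}$, and conditioned on the value of $\pi|_{V(H)}$ the image of a vertex $u'\notin V(H)$ is uniform only within the slot interval between the images of its order-neighbours in $V(H)$; that interval can be small --- even of size exactly the number of vertices it must accommodate --- in which case $\pi(u')$ is forced and the coincidence probability is $1$, not $O(|V(G)|/j)$. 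Worse, the conditioning on $\tau\in A_H$ can correlate with precisely such squeezed configurations: an accepting path may reveal through its positive answers that two images in $\pi(V(H))$ sit at distance two and then probe the slot between them, so for such $\tau$ the conditional probability $\Pr[\tau\notin A_G\mid\tau\in A_H]$ can be $1$, and your uniform bound on the conditional factor fails. The final conclusion $E[\#\mathrm{bad}]<1$ can still be rescued, e.g.\ by conditioning on $\pi|_{V(H)}$ first and splitting off the (unconditionally rare) small-gap configurations, but as written the estimate carrying the whole argument is unjustified. The paper sidesteps the issue by construction: the $i$-th vertex of $G$ is mapped into its own block of $n=j/|V(G)|$ consecutive slots with an independent uniform offset, so that, conditioned on the placement of $V(H)$, every remaining vertex is \emph{exactly} uniform over its $n$ block positions and the $O(|V(G)|/j)$ bound holds pointwise. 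Replacing your uniform order-preserving injection by that blocked placement (or adding the squeezed/non-squeezed case analysis) repairs the proof.
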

\begin{proof}
    For now let $n \in \IN$ be undetermined, we will choose it accordingly later.
    Uniformly at random choose a function $\hat{\xi}: V(G) \to \{0, \ldots, n-1\}$ and let it induce a monotone function $\xi: V(G) \to \{0, \ldots, |V(G)| \cdot n - 1\}$ by mapping the $i$-th smallest vertex $v \in V(G)$ to $(i - 1)\cdot n + \hat{\xi}(v)$.
    We then set $j = |V(G)| \cdot n$.
    
    For an ordered graph $H \sqsubseteq G$, we denote by $\xi(H)$ the graph on the vertices $\{0, \ldots, |V(G)| \cdot n - 1\}$, where the edges are given as the images of the edges of $H$ under $\xi$.
    Any vertex that is not in the image of $\xi(V(H))$ is thus an isolated vertex.
    We are trying to find some $\xi$, s.t.\ for all $H \sqsubseteq G$, all accepting paths of the computation $\#\acc_{M^{\xi(H)}}(j)$ do not access the oracle for any edges incident to vertices in $\xi(V(G)) \setminus \xi(V(H))$.
    By the union bound, if we can show this happening with high probability for each of the finitely many individual $H \sqsubseteq G$, then there is such a $\xi$, so for now fix some $H \sqsubseteq G$.

    Now group the functions $\xi$ by their image of $V(H)$ and fix one such group $I$.
    Note that the image of the remaining vertices $V(G) \setminus V(H)$ under $\xi$ is still independently and uniformly distributed from $n$ possible choices each by the choice of $\hat{\xi}$.
    Consider the set $S_I$ of all vertices queried by all accepting paths of the computation $\#\acc_{M^{\xi(H)}}(j)$.
    Let $t_M(j)$ be the worst-case running-time of $M$ on input $j$.
    Each oracle query can query at most two vertices, there are at most $\varphi(H)$ many accepting paths and each accepting path can query the oracle at most $t_M(j)$ many times and as such the size $|S_I|$ is bounded by $2 \cdot \varphi(H) \cdot t_M(j)$, which is polynomial in $\log(j)$ due to $M$ being polynomial-time and $H$ being fixed.
    As a rough estimate using the Bernoulli inequality, this means that for a fraction of at least $\left(1 - \frac{|S_I|}{n}\right)^{|V(G)|} \geq 1 - \frac{|S_I| \cdot |V(G)|}{n}$ choices of $\hat{\xi}$ resulting in $\xi \in I$, none of the accepting paths queries any edges incident to vertices in $\xi(V(G)) \setminus \xi(V(H))$.
    By averaging across all possible $\hat{\xi}$ (no longer restricted to $I$), we see that a fraction of $1 - \frac{\polylog(j) \cdot |V(G)|}{n}$ choices have our desired property.

    Using the union bound over all $H \sqsubseteq G$, we get that a fraction of $1 - \frac{\polylog(j) \cdot |V(G)| \cdot 2^{|V(G)|}}{n}$ choices of $\hat{\xi}$ lead to the desired property for all $H \sqsubseteq G$ simultaneously.
    By choosing $n$ large enough (remember that $j$ and $n$ are polynomially related, so this is always possible), this guarantees the existence of our desired~$\xi$, s.t.\ for all $H \sqsubseteq G$, all accepting paths of the computation $\#\acc_{M^{\xi(H)}}(j)$ do not access the oracle for any edges incident to vertices in $\xi(V(G)) \setminus \xi(V(H))$.

    We can now construct our set-instantiator with $j$ chosen as above.
    The instantiation function for $H \sqsubseteq G$ is $\inst_\SI(H) = \xi(H)$.
    For any computation path $\tau \in \{0, 1\}^\star$ of a computation $\#\acc_{M^{\xi(H)}}(j)$, denote by $S_\tau \subseteq \{0, \ldots, j-1\}$ the set of vertices that are queried by the computation.
    We set
    \begin{align*}
        \perc_{\SI}(\tau) := \begin{cases}
            G[\xi^{-1}(S_\tau)] & \text{if $\tau$ is an accepting path of the computation $\#\acc_{M^{\xi(G)}}(j)$,}\\
            \top & \text{otherwise}
        \end{cases}
    \end{align*}
    Note that the preimage $\xi^{-1}(S_\tau)$ is defined to be simply the set $\{ v \in V(G) \mid \xi(v) \in S_\tau \}$, even if $S_\tau$ is not entirely contained in the image of $\xi$.

    We check the properties of a set-instantiator.
    Clearly $\varphi(\inst_\SI(H)) = \varphi(H)$ for all $H \sqsubseteq G$, since the two graphs are identical except isolated vertices and no pattern in $\varphi$ contains any isolated vertices.
    Further we need that for all $H \sqsubseteq G$ we have that $\tau \in \{0, 1\}^\star$ is an accepting path for the computation $\#\acc_{M^{\inst_\SI(H)}}(j)$ iff $\perc_\SI(\tau) \sqsubseteq H$.
    For this fix $H \sqsubseteq G$ and let $\tau$ be an accepting path for the computation $\#\acc_{M^{\inst_\SI(H)}}(j)$.
    By our choice of $\xi$ we know that no other vertices of $\xi(V(G)) \setminus \xi(V(H))$ are queried, or in other words, $\xi^{-1}(S_\tau) \subseteq V(H)$ and thus $\tau$ is also an accepting path for the computation $\#\acc_{M^{\inst_\SI(G)}}(j)$.
    This directly implies $\perc_{\SI}(\tau) = G[\xi^{-1}(S_\tau)] \sqsubseteq H$.
    On the other hand, let $\perc_{\SI}(\tau) \sqsubseteq H$, then $\perc_{\SI}(\tau) \neq \top$ and $\tau$ is an accepting path of the computation $\#\acc_{M^{\inst_\SI(G)}}(j)$, since $\perc_{\SI}(\tau) = G[\xi^{-1}(S_\tau)] \sqsubseteq H$, $\tau$ is also an accepting path of the computation $\#\acc_{M^{\inst_\SI(H)}}(j)$.
\end{proof}

If $H_1$ and $H_2$ are isomorphic, then it can still be the case that $\#\acc_{M^{\inst_\SI(H_1)}}(j) \neq \#\acc_{M^{\inst_\SI(H_2)}}(j)$, which we do not want to happen in a good function.
For example if $G$ is any graph containing some triangles, then we want the instantiation of every induced triangle of $G$ to have the same number of accepting paths.
We thus want to construct an instantiation function that no longer has this problem.
We achieve this by creating a very large set-instantiator and then restricting ourselves to some small part of the set-instantiator where we can enforce this property.

Ensuring this structure is achieved by usage of a Ramsey theorem.
For two graphs $H$ and $G$ we denote by $\binom{G}{H}$ the subset of induced subgraphs of $G$ that are isomorphic to $H$.
\begin{lemma}[Main Theorem in \cite{DBLP:journals/jct/Nesetril77}]
    \label{lem:nesetril}
    Let $F, G \in \ordgraphs$ and let $t \in \IN$ be fixed.
    Then there is a $H \in \ordgraphs$, such that for every $\Phi: \binom{H}{F} \to \{0, \ldots, t\}$, there is a $G_\Phi \sqsubseteq H$ with $G_\Phi \cong G$ with the property that $|\Phi(\binom{G_\Phi}{F})| = 1$.
\end{lemma}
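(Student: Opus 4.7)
The plan is to follow the classical Nešetřil--Rödl partite construction. I would organize the argument in three layers, working entirely within the category of ordered graphs (this is essential, as the corresponding statement fails in general for unordered graphs and would require the symmetrization detour).

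First, I would handle a hypergraph-Ramsey base layer. For the degenerate case where $F$ is the trivial ordered graph on $f$ vertices with no edges, the existence of a suitable $H$ reduces to the classical Ramsey theorem for $f$-uniform hypergraphs: pick $N$ so that every $(t+1)$-coloring of $\binom{[N]}{f}$ admits a monochromatic subset of size $|V(G)|$. This already settles the case in which $F$ has the simplest possible structure.

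Second, I would prove a \emph{partite lemma}: for every ordered graph $F$ on vertex set $\{1 < \cdots < f\}$, every $k$-partite ordered graph $P$ with classes $V_1 < \cdots < V_k$ containing a prescribed transversal copy of $F$, and every $t$, there exists a $k$-partite ordered graph $B \sqsupseteq P$ such that every $(t+1)$-coloring of the transversal copies of $F$ in $B$ admits a monochromatic transversal copy of $P$. The construction is amalgamation: take many disjoint copies of $P$, glue them along the common $f$-tuple of classes indexed by $V(F)$, and invoke hypergraph-Ramsey on the gluing classes to find the monochromatic transversal. Because the vertex classes inherit the order from $B$, no ordering conflicts arise.

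Third, I would iterate the partite lemma over the (finitely many) isomorphism-respecting positions of $F$-copies inside $G$. Enumerate these positions as $F_1, \ldots, F_m$ and build an ascending chain of ordered $|V(G)|$-partite graphs
\[
P_0 \sqsubseteq P_1 \sqsubseteq \cdots \sqsubseteq P_m
\]
where $P_0$ is $G$ viewed as a transversal in its own $|V(G)|$-partition, and $P_i$ is produced by applying the partite lemma with ``position $F_i$'' as the distinguished transversal copy of $F$. Setting $H$ to be $P_m$ (after forgetting the partite structure), a standard backwards-reading argument shows that any $(t+1)$-coloring of $\binom{H}{F}$ admits a transversal copy of $G$ all of whose induced $F$-copies are monochromatic, giving the desired $G_\Phi$.

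The main obstacle will be the bookkeeping needed to ensure two properties simultaneously: (i) the linear order on each intermediate $P_i$ extends coherently so that $P_i$ remains a genuine ordered graph and every transversal copy of $F$ or $G$ inherits the intended ordering, and (ii) no spurious induced $F$-copies are created by the amalgamation steps outside the designated transversal positions. Point (ii) is the reason the construction is \emph{partite} rather than using complete amalgamation: restricting attention to transversal copies of $F$ at each step prevents accidental $F$-subgraphs from contaminating the monochromaticity guarantee, and induction on the number of $F$-positions to be absorbed closes the argument.
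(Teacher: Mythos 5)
The paper does not prove this lemma at all: it is imported as a black box from Nešetřil--Rödl (the cited 1977 paper), so your proposal is measured against the standard partite-construction proof you are sketching, not against anything in the paper.

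There is a genuine gap in your third layer. Iterating the partite lemma over the $m$ positions of $F$-copies inside $G$, starting from $P_0=G$ sitting as a transversal in its own $|V(G)|$-partition, does not yield a monochromatic copy of $G$: the backwards reading only produces a copy of $G$ in which, for each position $F_i$, all $F$-copies at that position share some colour $c_i$, and nothing in your construction forces $c_1=\dots=c_m$. The colouring you obtain is ``canonical with respect to position,'' not constant. The missing ingredient is the hypergraph-Ramsey \emph{base} of the partite construction: one first chooses $N$ by the (ordered) Ramsey theorem for $|V(F)|$-element subsets with $t+1$ colours so that some $|V(G)|$-subset becomes monochromatic, lets $P_0$ be the $N$-partite picture containing a transversal copy of $G$ over \emph{every} $|V(G)|$-subset of the $N$ parts, and then iterates the partite lemma over all $|V(F)|$-subsets of the $N$ parts (not over positions inside a single $G$). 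The backwards reading then gives a copy of $P_0$ in which the colour of a transversal $F$-copy depends only on the $|V(F)|$-subset of parts it meets; this is a colouring of $|V(F)|$-subsets of the base, the base Ramsey property supplies an $|V(G)|$-subset on which it is constant, and the transversal $G$-copy over that subset works because every induced $F$-copy inside a transversal $G$-copy is itself transversal. Your first layer misplaces this ingredient: the hypergraph Ramsey theorem is not a ``degenerate case'' of the statement (for edgeless $F$ the colouring ranges only over induced independent $f$-sets, which is already a nontrivial induced-Ramsey statement), it is the scaffolding on which the whole construction is erected.

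A second, smaller gap is the partite lemma itself, which is the hard core of the argument. Gluing many disjoint copies of $P$ along the classes indexed by $V(F)$ and applying hypergraph Ramsey ``on the gluing classes'' does not work: an adversarial colouring of the transversal $F$-copies of the amalgam need not factor through any colouring of tuples of glued vertices, and the sketch gives no mechanism producing a copy of $P$ \emph{all} of whose transversal $F$-copies are monochromatic rather than only those at the glued position. The known proofs use a product construction: either the Hales--Jewett theorem (colour $A^N$ where $A$ is the set of transversal $F$-copies of $P$; a monochromatic combinatorial line is the desired copy of $P$), or, in the original 1977 argument, the Graham--Rothschild parameter-set theorem. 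Your point (ii) about avoiding spurious induced $F$-copies is handled correctly by restricting to transversal copies, but the two gaps above--the absent Ramsey base merging the position colours, and the unsupported partite lemma--mean the proposal as written does not establish the lemma.
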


Note that Lemma~\ref{lem:nesetril} is the reason why we need to generalize our result to ordered graphs first instead of being able to prove it directly for unordered graphs.
It is known that this type of Ramsey theorem does not hold for unordered graphs \cite[Theorem 5.1]{Nesetril1995}.

By repeatedly applying Lemma~\ref{lem:nesetril} we get
\begin{proposition}[Ramsey Theorem]
    \label{pro:graphs_ramseyposet}
    Let $G \in \ordgraphs$ and let $t \in \IN$ be fixed.
    Then there is an $H \in \ordgraphs$, such that for every $\Phi: \poset{H} \to \{0, \ldots, t\}$ there is a $G_\Phi \sqsubseteq H$ with $G_\Phi \cong G$ with the property, if $A, B \sqsubseteq G_\Phi$ with $A \cong B$, then $\Phi(A) = \Phi(B)$.
\end{proposition}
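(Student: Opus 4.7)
The plan is to iterate Lemma~\ref{lem:nesetril} once for each isomorphism class of induced subgraph of $G$. Since $G$ is finite, there are only finitely many such classes; enumerate pairwise non-isomorphic representatives as $F_1, F_2, \ldots, F_k$ where each $F_i \sqsubseteq G$. I construct ordered graphs $H_0, H_1, \ldots, H_k$ inductively by setting $H_0 := G$ and, for each $i = 1, \ldots, k$, applying Lemma~\ref{lem:nesetril} to the pair $(F_i, H_{i-1})$ with parameter $t$ to obtain $H_i$ with the property that every map $\Phi_i : \binom{H_i}{F_i} \to \{0, \ldots, t\}$ admits an induced copy of $H_{i-1}$ inside $H_i$ on which $\Phi_i$ is constant. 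Then I set $H := H_k$.

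To verify that $H$ works, I fix any coloring $\Phi : \poset{H} \to \{0, \ldots, t\}$ and thin $H$ down in $k$ stages. The first stage restricts $\Phi$ to $\binom{H_k}{F_k}$ and invokes the defining property of $H_k$ to produce $\widetilde H_{k-1} \sqsubseteq H_k$ with $\widetilde H_{k-1} \cong H_{k-1}$ on which $\Phi$ is constant on $F_k$-copies. The second stage restricts $\Phi$ to $\binom{\widetilde H_{k-1}}{F_{k-1}}$, transports along an isomorphism $\widetilde H_{k-1} \cong H_{k-1}$, and invokes the defining property of $H_{k-1}$ to obtain $\widetilde H_{k-2} \sqsubseteq \widetilde H_{k-1}$ with $\widetilde H_{k-2} \cong H_{k-2}$ on which $\Phi$ is constant on $F_{k-1}$-copies. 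Iterating this yields a nested chain $\widetilde H_0 \sqsubseteq \widetilde H_1 \sqsubseteq \cdots \sqsubseteq H$ with $\widetilde H_i \cong H_i$. The crucial observation is that each thinning step automatically preserves the monochromaticity established in earlier stages, because $\binom{\widetilde H_{i-1}}{F_j} \subseteq \binom{\widetilde H_i}{F_j}$ whenever $j > i$, so restriction to a smaller induced subgraph never creates new $F_j$-copies.

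Consequently, inside $\widetilde H_0 \cong G$ the coloring $\Phi$ is constant on $\binom{\widetilde H_0}{F_i}$ for every $i \in \{1, \ldots, k\}$. Setting $G_\Phi := \widetilde H_0$ finishes the argument: any two $A, B \sqsubseteq G_\Phi$ with $A \cong B$ share some common isomorphism type, which is necessarily one of the $F_i$, so both lie in $\binom{G_\Phi}{F_i}$ and therefore $\Phi(A) = \Phi(B)$.

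The proof is a standard iterated Ramsey product, so I do not anticipate a real obstacle; the only bookkeeping subtlety is the monotonicity of the monochromaticity condition across restrictions, noted above. The enumeration length $k$ depends only on $G$, so the construction terminates after finitely many applications of Lemma~\ref{lem:nesetril}, yielding the required ordered graph $H$.
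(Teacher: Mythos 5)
Your proposal is correct and takes essentially the same route as the paper, which obtains the proposition precisely by iterating Lemma~\ref{lem:nesetril} once per isomorphism type of induced subgraph of $G$; your careful bookkeeping (transporting colorings along isomorphisms and noting that monochromaticity is preserved when passing to induced subgraphs) just fills in the details the paper leaves implicit.
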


This theorem can now be used to ensure that for every polynomial-time NTM there must be some small set of inputs where the NTM behaves like a good function.

\begin{lemma}
    \label{lem:graphs_forceGood}
    Let $\varphi$ be a pure ordered graph motif parameter, let $M$ be any nondeterministic polynomial-time Turing machine computing $\eval{\varphi}$ and let $G \in \ordgraphs$.
    Then there is some $j \in \IN$, a function $\inst: \poset{G} \to \ordgraphssize{j}$ and a $\poset{G}$-good function $\Psi$ with $\#\acc_{M^{\inst(F)}}(j) = \Psi(F)$ and $\varphi(\inst(F)) = \varphi(F)$ for every $F \sqsubseteq G$.
\end{lemma}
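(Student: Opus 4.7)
The plan is to combine the set-instantiator from Lemma~\ref{lem:graphs_setinstantiators_exist} with the Ramsey Proposition~\ref{pro:graphs_ramseyposet}, following the blueprint of~\cite{IP:22}. First I apply Proposition~\ref{pro:graphs_ramseyposet} to $G$ with color count $t := \varphi(G)$ to obtain a Ramsey graph $\hat H \in \ordgraphs$, and then invoke Lemma~\ref{lem:graphs_setinstantiators_exist} on $\hat H$ to get a set-instantiator $(j, \inst_\SI, \perc_\SI)$ against $(M, \hat H, \varphi)$.

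The main obstacle is that the multiplicity function $A \mapsto |\perc_\SI^{-1}(A)|$ a priori takes values in a range depending on $|V(\hat H)|$, precluding direct use of Ramsey with a fixed palette. I overcome this by \emph{capping}: color each $A \in \poset{\hat H}$ by $\Phi(A) := \min\{|\perc_\SI^{-1}(A)|,\, \varphi(G)\} \in \{0,\ldots,\varphi(G)\}$, and apply the Ramsey property to obtain a copy $G' \sqsubseteq \hat H$ with $G' \cong G$ such that $\Phi$ is constant on isomorphism classes within $\poset{G'}$. The key observation is that on $\poset{G'}$ the cap is inactive: for $A \sqsubseteq G'$, every $\tau \in \perc_\SI^{-1}(A)$ satisfies $\perc_\SI(\tau) \sqsubseteq G'$ and hence is accepting on $\inst_\SI(G')$ by the set-instantiator property, so $|\perc_\SI^{-1}(A)| \le \#\acc_{M^{\inst_\SI(G')}}(j) = \varphi(\inst_\SI(G')) = \varphi(G') = \varphi(G)$, using that $M$ computes $\eval{\varphi}$ and $G' \cong G$. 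Thus $\Phi(A) = |\perc_\SI^{-1}(A)|$ on $\poset{G'}$, and this multiplicity depends only on the isomorphism type of $A$.

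To conclude, fix an order-preserving isomorphism $\iota: G \to G'$, define $\inst(F) := \inst_\SI(\iota(F))$ for $F \sqsubseteq G$, and for each pairwise non-isomorphic representative $H$ in $\poset{G}$ set $\alpha_H := |\perc_\SI^{-1}(\iota(H))| \in \IN$; by the previous step $\alpha_H$ is independent of the chosen representative. Then $\Psi(F) := \sum_H \alpha_H \cdot \indsub{H}{F}$, where the sum runs over the pairwise non-isomorphic representatives in $\poset{G}$, is $\poset{G}$-good by construction. Regrouping accepting paths by the isomorphism class of their perception yields $\#\acc_{M^{\inst(F)}}(j) = \sum_{A \sqsubseteq \iota(F)} |\perc_\SI^{-1}(A)| = \sum_H \alpha_H \cdot \indsub{H}{\iota(F)} = \Psi(F)$, and $\varphi(\inst(F)) = \varphi(F)$ is inherited directly from the set-instantiator property.
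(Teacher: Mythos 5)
Your proof is correct and follows essentially the same route as the paper's: apply the poset Ramsey theorem (Proposition~\ref{pro:graphs_ramseyposet}) to $G$, build a set-instantiator against $(M,\hat H,\varphi)$ on the Ramsey graph via Lemma~\ref{lem:graphs_setinstantiators_exist}, color its induced-subgraph poset, and read off the $\poset{G}$-good function $\Psi$ from a monochromatic copy of $G$. The only deviation is minor but tidy: you color by the capped perception-multiplicities $\min\{|\perc_\SI^{-1}(A)|,\varphi(G)\}$ instead of the total accepting-path counts $\Phi(H)=\#\acc_{M^{\inst_\SI(H)}}(j)$, which keeps the palette inside the required range $\{0,\dots,t\}$ (a point the paper's coloring leaves implicit, since $\varphi(H)$ for $H\sqsubseteq\tilde G$ need not be bounded by $\max\{\varphi(F):F\sqsubseteq G\}$) and lets you skip the paper's induction step that recovers iso-invariance of the multiplicities from iso-invariance of the totals.
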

\begin{proof}
    We invoke Proposition~\ref{pro:graphs_ramseyposet} for $G$ and $t := \max\{\varphi(H) \,:\, H \sqsubseteq G\}$ to obtain $\tilde{G} \in \ordgraphs$.
    We now want to color $\poset{\tilde{G}}$ according to the behaviour of $M$ on the elements of it, in particular how many accepting paths query each specific induced subgraph of $\tilde{G}$.
    Before we can do this we have to first embed all potential elements of $\poset{\tilde{G}}$ as oracles.
    In order to do this we use Lemma~\ref{lem:graphs_setinstantiators_exist} to construct a set-instantiator $\SI$ against $(M, \tilde{G}, \varphi)$ to obtain a $j \in \IN$, $\inst_\SI$ and $\perc_\SI$ as in Definition~\ref{def:graphs_setinstantiator}.
    
    For $H \in \poset{\tilde{G}}$ define $\Phi(H) := \#\acc_{M^{\inst_\SI(H)}}(j)$.
    Note that
    \begin{equation}\label{eq:PhiPerception}
        \Phi(H) \ = \ \bigl|\bigl\{\tau\in\{0,1\}^* \,: \, \perc_\SI(\tau)\sqsubseteq H\bigr\}\bigr| = \sum_{F \sqsubseteq H} \, \bigl|\big\{\tau\in\{0,1\}^* \,:\, \textsu{perc}_{\SI}(\tau)=F\big\}\bigr|.
    \end{equation}

    Using $\Phi$ as a coloring for Proposition~\ref{pro:graphs_ramseyposet}, we obtain $G_\Phi$.
    Note that $\Phi(H)$ now only depends on the isomorphism type $\isoclass{H}$ of $H$ for all $H \sqsubseteq G_\Phi$.
    By induction we see that the number
    \[
        \big|\big\{\tau\in\{0,1\}^* \, :\, \textsu{perc}_{\SI}(\tau)=H\big\}\big|
    \]
    also depends only on the isomorphism type of $H$, whenever $H \sqsubseteq G_\Phi$.
    Denote this number by $\alpha_{\isoclass{H}}$ and set $\Psi(\isoclass{H}) := \Phi(H)$.
    This is only well-defined for $H \sqsubseteq G_\Phi$, thus we consider the domain of $\Psi$ to be $\poset{\isoclass{G_\Phi}} = \poset{\isoclass{G}}$.
    This means that \eqref{eq:PhiPerception} simplifies:
    \begin{equation}\label{eq:PhiGood}
    \Psi(\isoclass{H}) \ = \ \sum_{\isoclass{F}} \ind{\isoclass{F}}{\isoclass{H}} \alpha_{\isoclass{F}}
    \end{equation}
    where the sum is over all isomorphism classes $\isoclass{F}$ of induced subgraphs $F \sqsubseteq H$.
    This implies that $\Phi$ is $\poset{G}$-good.

    It remains to define the function $\inst: \poset{G} \to \ordgraphssize{j}$.
    Let $H \sqsubseteq G$, then $\inst(H) := \inst_{\SI}(H_\Phi)$ for some $H_\Phi \sqsubseteq G_\Phi$ with $H \cong H_\Phi$.
    The property $\varphi(\inst(H)) = \varphi(H)$ now directly follows from $\SI$ being a set-instantiator.
\end{proof}

Note that Lemma~\ref{lem:graphs_forceGood} only guarantees the local behaviour of $M$ to be $\poset{G}$-good.
This leaves two possibilities for contradictions: Either the local behaviour is even $(\poset{G} \cap \ordgraphsnoniso)$-good, or the function grows too quickly for inputs with many isolated vertices.
This next Witness Theorem shows that in the first case we can always find an ordered graph that proves that $M$ does not compute the correct function using a simple linear algebra argument.

\begin{theorem}[The Witness Theorem]
\label{thm:graphs_findcounterexample}
Fix a bad $\varphi: \ordgraphsnoniso \to \IN$.
Then there exists $G \in \ordgraphsnoniso$ such that for
every $(\poset{G} \cap \ordgraphsnoniso)$-good function $\Psi:(\poset{G} \cap \ordgraphsnoniso) \to \IN$,
there exists $W \in \poset{G} \cap \ordgraphsnoniso$
with $\Psi(W) \neq \varphi(W)$.
\end{theorem}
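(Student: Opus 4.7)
The plan is to reduce the statement to a pure linear-algebra argument using the linear independence of induced-subgraph count functions (Fact~\ref{fact: lin-indep-ind}). Since $\varphi$ is bad, write its unique expansion
\[
\varphi \ = \ \sum_{i=1}^{s} \alpha_{i}\,\indsub{H_{i}}{\starpad}
\]
with pairwise non-isomorphic pure ordered patterns $H_{1},\dots,H_{s}$ and at least one coefficient $\alpha_{i_{0}} \notin \IN$. As a witness graph, I would take $G$ to be the ordered disjoint union $H_{1} \sqcup \dots \sqcup H_{s}$, ordered by concatenating the orderings of the $H_{i}$ in sequence. This $G$ has no isolated vertices because none of the $H_{i}$ do, so $G \in \ordgraphsnoniso$, and each $H_{i}$ occurs as an ordered induced subgraph of $G$. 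Setting $\graphs' := \poset{G}\cap\ordgraphsnoniso$, we therefore have $\supp(\varphi) \subseteq \graphs'$.

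Next I would argue by contradiction: assume there is a $\graphs'$-good function
\[
\Psi \ = \ \sum_{H \in \graphs'} \beta_{H}\,\indsub{H}{\starpad}|_{\graphs'}
\]
with all $\beta_{H}\in\IN$ that coincides with $\varphi$ on every $W \in \graphs'$. Restricting $\varphi$ to $\graphs'$ preserves its expansion, so
\[
\sum_{i=1}^{s} \alpha_{i}\, f_{H_{i}} \ = \ \sum_{H \in \graphs'} \beta_{H}\, f_{H},
\]
where $f_{H} := \indsub{H}{\starpad}|_{\graphs'}$ for $H \in \graphs'$. By Fact~\ref{fact: lin-indep-ind}, the family $\{f_{H} : H \in \graphs'\}$ is linearly independent over $\qqq$, so the two expansions of the same function must have identical coefficients. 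This forces $\beta_{H_{i_{0}}} = \alpha_{i_{0}}$, which contradicts $\alpha_{i_{0}} \notin \IN$ (while $\beta_{H_{i_{0}}}\in\IN$). Hence no such $\Psi$ exists, which is precisely the conclusion of the theorem.

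The conceptual work has already been done in the earlier reduction to $\ordgraphsnoniso$, the set-instantiator construction, and the Ramsey step; the witness theorem itself is just the linear-algebraic lower bound that certifies a disagreement on some $W$. The one point one must be careful about is that $\varphi$ and $\Psi$ really are expressible in the \emph{same} basis $\{f_{H} : H\in\graphs'\}$: this is exactly what the choice of $G$ as a disjoint union of the patterns of $\varphi$ ensures, because both $\supp(\varphi)$ and $\supp(\Psi)$ then lie inside $\graphs'$, and restricting the global basis $\{\indsub{H}{\starpad}\}_{H\in\ordgraphsnoniso}$ to $\graphs'$ leaves the coefficients of $\varphi$ unchanged.
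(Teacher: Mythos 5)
Your proposal is correct and takes essentially the same route as the paper: choose $G$ as the (ordered) disjoint union of $\supp(\varphi)$, so that all patterns of $\varphi$ and of any good $\Psi$ lie in $\poset{G}\cap\ordgraphsnoniso$, and then use the linear independence of the restricted functions from Fact~\ref{fact: lin-indep-ind} to force the coefficients of $\varphi$ and $\Psi$ to coincide, contradicting badness. The only detail the paper adds is to keep just one representative per isomorphism class inside $\poset{G}\cap\ordgraphsnoniso$, which is needed for Fact~\ref{fact: lin-indep-ind} (stated for pairwise non-isomorphic graphs) to apply verbatim; this is a cosmetic fix, not a gap in your argument.
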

\begin{proof}
    Choose $G$ to be the disjoint union of the graphs in $\supp(\varphi)$ and let $\ordgraphs' := \poset{G} \cap \ordgraphsnoniso$, keeping only one representative per isomorphism class.
    By Fact~\ref{fact: lin-indep-ind}, the functions $\ind{H}{\starpad}$ for $H\in \ordgraphs'$ are linearly independent.
    Thus the patterns and coefficients of any function $\Psi$ with support $\supp(\Psi) \subseteq \poset{G} \cap \ordgraphsnoniso$ are uniquely determined by the evaluations $\Psi$ on $\poset{G} \cap \ordgraphsnoniso$, and there exists a witness $W \in \poset{G} \cap \ordgraphsnoniso$ with $\Psi(W) \neq \varphi(W)$, since $\Psi$ is good while $\varphi$ is bad (thus, in particular, $\Psi$ and $\varphi$ are not the same function).
\end{proof}

With this last tool we are finally able to lead the existence of polynomial-time NTMs computing bad ordered graph motif parameters to a contradiction.

\begin{lemma}
    Let $M$ be any nondeterministic polynomial-time Turing machine computing $\eval{\varphi}$ for some bad pure ordered graph motif parameter $\varphi$.
    Then there is a $j \in \IN$ and $W \in \ordgraphssize{j}$ such that $\#\acc_{M^W}(j) \neq \varphi(W)$.
\end{lemma}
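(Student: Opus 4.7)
The plan is a proof by contradiction. Assume $M$ correctly computes $\eval{\varphi}$, so that $\#\acc_{M^W}(j) = \varphi(W)$ for every $j \in \IN$ and every $W \in \ordgraphssize{j}$. First, apply Theorem~\ref{thm:graphs_findcounterexample} to the bad $\varphi$ to obtain a graph $G \in \ordgraphsnoniso$; by inspection of its proof, $\supp(\varphi) \subseteq \poset{G} \cap \ordgraphsnoniso$. Then apply Lemma~\ref{lem:graphs_forceGood} to this $G$ and $M$, yielding some $j \in \IN$, an instantiation function $\inst \colon \poset{G} \to \ordgraphssize{j}$, and a $\poset{G}$-good function $\Psi$ satisfying $\#\acc_{M^{\inst(F)}}(j) = \Psi(F)$ and $\varphi(\inst(F)) = \varphi(F)$ for every $F \sqsubseteq G$. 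Combining these with the correctness hypothesis on $M$ forces $\Psi(F) = \varphi(F)$ on all of $\poset{G}$.

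The crux is to leverage this pointwise equality via the uniqueness of basis expansions. By Fact~\ref{fact: lin-indep-ind}, the family $\{\indsub{F}{\starpad}|_{\poset{G}} : F \in \poset{G}\}$ is a basis of the $\IQ$-vector space of functions $\poset{G} \to \IQ$, so every function admits a unique such expansion. The expansion of $\Psi$ supplied by Lemma~\ref{lem:graphs_forceGood} has nonnegative integer coefficients on patterns in $\poset{G}$, while $\varphi|_{\poset{G}}$ expands naturally using only the pure patterns in $\supp(\varphi) \subseteq \poset{G} \cap \ordgraphsnoniso$. The equality $\Psi = \varphi$ on $\poset{G}$ therefore forces the two expansions to coincide: every non-pure pattern in $\poset{G}$ receives coefficient $0$ in $\Psi$, and each pure pattern in $\supp(\varphi)$ receives exactly $\varphi$'s coefficient there. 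This exhibits $\Psi|_{\poset{G} \cap \ordgraphsnoniso}$ as a $(\poset{G} \cap \ordgraphsnoniso)$-good function that agrees with $\varphi$ on all of $\poset{G} \cap \ordgraphsnoniso$.

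This, however, directly contradicts the Witness Theorem, which asserts that every $(\poset{G} \cap \ordgraphsnoniso)$-good function admits some witness $W^* \in \poset{G} \cap \ordgraphsnoniso$ with $\Psi(W^*) \neq \varphi(W^*)$. Hence $M$ cannot correctly compute $\eval{\varphi}$, so there must exist some $F \sqsubseteq G$ on which $\Psi$ and $\varphi$ disagree; setting $W := \inst(F) \in \ordgraphssize{j}$ then yields the desired counterexample, since $\#\acc_{M^W}(j) = \Psi(F) \neq \varphi(F) = \varphi(\inst(F)) = \varphi(W)$. The main obstacle is the bridging step of the middle paragraph: Lemma~\ref{lem:graphs_forceGood} only guarantees $\poset{G}$-goodness (allowing patterns with isolated vertices), while the Witness Theorem requires purity of all patterns, and the reconciliation between these two conditions is precisely what the uniqueness of the basis expansion on $\poset{G}$ delivers.
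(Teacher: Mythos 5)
Your proposal is correct in substance, but it takes a genuinely different route from the paper at the decisive step. The paper never uses the correctness of $M$ beyond what is already packaged into Lemma~\ref{lem:graphs_forceGood}: it treats $\Psi$ as an \emph{arbitrary} $\poset{G_2}$-good function, and since such a $\Psi$ may carry non-pure patterns that the (pure-only) Witness Theorem~\ref{thm:graphs_findcounterexample} cannot see, it first pads $G_1$ with isolated vertices to a graph $G_2$ on which any positive coefficient on a non-pure pattern forces $\Psi(G_2)>\varphi(G_2)$, and only in the remaining case restricts $\Psi$ and invokes the Witness Theorem. You instead invoke the hypothesis that $M$ is correct on \emph{every} input to conclude $\Psi=\varphi$ on all of $\poset{G}$, and then let Fact~\ref{fact: lin-indep-ind} (uniqueness of the expansion over \emph{all} patterns in $\poset{G}$, pure or not) kill the non-pure coefficients for free; this bypasses the blow-up construction and the case distinction entirely. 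What the paper's longer route buys is an explicitly exhibited disagreement point for an arbitrary good $\Psi$, which is the template reused later for relational structures and categories; your route is shorter but ultimately establishes the lemma by refuting its own hypothesis.

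Two points to tidy up. First, the family $\{\indsub{F}{\starpad}\,|_{\poset{G}} : F\in\poset{G}\}$ (one representative per isomorphism class) is a basis of the isomorphism-invariant functions on $\poset{G}$, not of all functions $\poset{G}\to\IQ$; linear independence from Fact~\ref{fact: lin-indep-ind} is all you need, and both $\Psi$ and $\varphi$ are isomorphism-invariant, so the argument stands. Second, the closing inference ``$M$ cannot correctly compute $\eval{\varphi}$, so there must exist some $F\sqsubseteq G$ on which $\Psi$ and $\varphi$ disagree'' is not valid as written: under your blanket correctness assumption no such $F$ exists---that is exactly your contradiction---and once that assumption is discarded you are no longer entitled to $\Psi$, $\inst$ and $j$. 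The clean repair uses only steps you already have: either some $F\sqsubseteq G$ satisfies $\Psi(F)\neq\varphi(F)$, in which case $W:=\inst(F)$ is the required counterexample, or $\Psi=\varphi$ on all of $\poset{G}$, which your expansion-uniqueness argument rules out (indeed, once the expansions coincide you already have a nonnegative coefficient equal to a negative one, so the detour through the Witness Theorem at that point is harmless but unnecessary). With this rephrasing the proof is complete and noticeably more economical than the paper's, at the price of not producing the counterexample for a general good $\Psi$.
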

\begin{proof}
    We invoke the Witness Theorem~\ref{thm:graphs_findcounterexample} with our $\varphi$ 
    to obtain an $G_1 \in \ordgraphsnoniso$ as in the theorem.
    Construct $G_2$ such that
    \begin{align}
        \ind{H}{G_2} &> \max\big\{\varphi(F) \,:\, F \in \poset{G_1} \cap \ordgraphsnoniso \big\} && \text{for every $H \in \poset{G_1} \setminus \ordgraphsnoniso$} \label{eq:graphs_blowupNonR}\\
        \ind{H}{G_2} &= \ind{H}{G_1} && \text{for every $H \in \ordgraphsnoniso$} \label{eq:graphs_keepR}
    \end{align} by adding sufficiently many isolated vertices to $G_1$.
    For this it is sufficient to add $1 + \max\big\{\varphi(F) \,:\, F \in \poset{G_1} \cap \ordgraphsnoniso \big\}$ isolated vertices in between every pair of adjacent vertices in $G_1$, relative to $\leq_{G_1}$.

    We now use Lemma~\ref{lem:graphs_forceGood} to obtain $j \in \IN$, a function $\inst: \poset{G_2} \to \ordgraphssize{j}$ and a $\poset{G_2}$-good function $\Psi$ with $\#\acc_{M^{\inst(H)}}(j) = \Psi(H)$ and $\varphi(\inst(H)) = \varphi(H)$ for every $H \sqsubseteq G_2$.
    There are now two possibilities to find the required counterexample $W$:
    If there are no basis functions present in $\Psi$ with a pattern in $\poset{G_1} \setminus \ordgraphsnoniso$, we can invoke the Witness Theorem again, however if any of these basis functions are present we can construct a direct counterexample.

    We first handle the case where such a basis function is present, so let the coefficient of $\ind{H}{\starpad}$ in $\Psi$ be positive for some $H \in \poset{G_1} \setminus \ordgraphsnoniso$.
    Then
    \[
        \Psi(G_2) \geq \ind{H}{G_2} > \max\big\{\varphi(F) \,:\, F \in \poset{G_1} \cap \ordgraphsnoniso \big\} \geq \varphi(G_1) = \varphi(G_2)
    \]
    by \eqref{eq:graphs_blowupNonR} and \eqref{eq:graphs_keepR}, combined with the fact that $\varphi$ is pure,
    and our counterexample is $W := G_2$.

    Otherwise assume that no such basis function is present.
    Then $\tilde\Psi$ obtained from $\Psi$ by restricting to $\poset{G_1}$ (i.e.\ setting all coefficients of $\poset{G_2} \setminus \poset{G_1}$ to zero), is $(\poset{G_1} \cap \ordgraphsnoniso)$-good, which is a requirement for the second part of the Witness Theorem~\ref{thm:graphs_findcounterexample} (we already used it to obtain $G_1$).
    Furthermore $\tilde \Psi_{|\poset{G_1}} = \Psi_{|\poset{G_1}}$.
    We invoke the second part of the Witness Theorem~\ref{thm:graphs_findcounterexample} to find a point 
    $W \in \poset{G_1} \cap \ordgraphsnoniso$ with $\Psi (W) = \tilde\Psi (W) \neq \varphi(W)$ which is our counterexample.
    
    Independent of how we have obtained our $W$ we observe
    $\#\acc_{M^{\inst(W)}}(j) = \Psi(W) \neq \varphi(W) = \varphi(\inst(W))$,
    which completes the proof for $W' = \inst(W)$.
\end{proof}

This proves the separation 
in Theorem~\ref{thm:ordgraphs}, which together with Lemmas~\ref{lem:graphs_integercoeffs} and~\ref{lem:graphs_lower} finishes the proof of the theorem.

\section{Overview of Results for Relational Structures and Categories}

Induced subgraph counts can also be considered in a colored setting, where each vertex has a color and colored graphs are considered isomorphic if the colors of the vertices that are mapped to each other by the isomorphism have to agree.
Such colored patterns occur, e.g., within the proofs of some parameterized hardness results~\cite{DBLP:journals/algorithmica/DorflerRSW22,DBLP:conf/soda/Curticapean24}, as access to vertex-colors gives more versatility in the reduction.
It is natural to ask whether our main theorem generalizes from graphs to colored graphs and to other more general versions of graphs, such as hypergraphs of bounded edge-cardinality. We answer this question positively for general \emph{relational structures}; due to space limitations, this part appears as Appendix~\ref{sec:relational_structures}. The proof proceeds along the same lines as Theorem~\ref{thm:graphs}, our main theorem about graphs.

Generalizing further from relational structures, we consider the vastly more general setting of \emph{category theory} in Appendix~\ref{sec:category_general} and show a variant of Theorem~\ref{thm:graphs} for motif parameters defined over categories.
It turns out that the arguments from the main proof can be adapted for this purpose, but only after significant technical effort: While homomorphisms and embeddings naturally generalize from graphs to categories, nontrivial concepts from category theory are needed to formulate the proper requirements on categories that allow us to handle subobjects similarly to induced subgraphs.
Among other such requirements, we require \emph{well-powered} categories (where every subobject itself has only finitely many isomorphism types of subobjects) and \emph{proper factorization systems} (that decompose morphisms into compositions of epimorphisms and monomorphisms).
We then use the general category-theoretic framework to study combinatorial interpretations of counting problems related to finite vector spaces and to so-called \emph{parameter sets}, which play a role in enumerative combinatorics.

\bibliography{lit}

\appendix
\section{Omitted proofs}

\begin{proof}[Proof of Fact~\ref{fact: lin-indep-ind}]
We show the fact for unordered graphs; the version for ordered graphs is shown along the same lines.
Let $\{G_1,\ldots,G_t\}$ for $t \in \nn$ be an enumeration of $\graphs'$ such that $|V(G_i)| < |V(G_j)|$ implies $i < j$.
Consider the matrix $A$ whose rows and columns are indexed by $G_1,\ldots,G_t$, with entries $\indsub{G_i}{G_j}$.

We show that this matrix is lower triangular with all diagonal entries equal to $1$, implying full rank and thus linear independence of the functions $f_H$:
Consider indices $i,j$ with $i\geq j$: 
First, we have $|V(G_i)| \geq |V(G_j)|$ by the choice of enumeration. If $A(i,j) \neq 0$ holds, it must follow that $|V(G_i)| = |V(G_j)|$, as otherwise $G_i$ has too many vertices to be an induced subgraph of $G_j$. But then $A(i,j) \neq 0$ only for $j=i$. In this case, $A(i,j)=1$.
\end{proof}

\begin{proof}[Proof of Lemma~\ref{lem:graphs_lower}]
    Since $\varphi$ is a positive integer linear combination of finitely many induced subgraph numbers, and $\sharpPdottedcircle$ is closed under such linear combinations, it suffices to consider $\varphi(G) = \indsub{H}{G}$ for $H \in \graphs$.
    We show the statement for unordered graphs, the proof for ordered graphs is analogous.

    A nondeterministic Turing machine for $\eval{\varphi}$ proceeds as follows, for hardcoded $H$ with $|V(H)|=k$:
    Given $(x,O)$ encoding a graph $G$, nondeterministically guess strings $x_1 < \ldots <  x_k < x$ of length $|x|$, where $<$ denotes the lexicographic ordering on strings.
    Then, for every $\{i,j\} \in \binom{k}{2}$, write $x_i x_j$ on the oracle tape and query $O$ to obtain a bit $a_{i,j}$.
    Finally, test by brute-force whether the graph with adjacency matrix given by $a_{i,j}$ is isomorphic to $H$; this only requires constant time, as $H$ is fixed.
    If the test succeeds, then this branch accepts, otherwise it rejects.
    This nondeterministic Turing machine clearly runs in polynomial-time on every branch, and the number of accepting computation paths equals the number of vertex-subsets $S$ of $G$ such that $G[S]$  is isomorphic to $H$.  
\end{proof}

\section{Relational Structures}
\label{sec:relational_structures}

First, we need to transfer some terminology from graphs to relational structures and introduce some additional notions that are specific for relational structures. Let $k_1, \ldots, k_\ell > 0$. 
A finite relational structure of \emph{type} $\Delta := (k_1, \ldots, k_\ell)$ consists of a finite universe $V$ of individuals (which we call vertices, in analogy to our terminology for graphs) and relations $R_1, \ldots, R_\ell$ with $R_i \subseteq \binom{V}{k_i}$.
Denote the class of all finite relational structures of type $\Delta$ as $\rel{\Delta}$.
From now on, we will refer to finite relational structures simply as relational structures.
For any relational structure $A = (V, R_1, \ldots, R_\ell)$, we write $V(A) := V$ and $R_i(A) = R_i$ for $i=1\ldots \ell$.

A relational structure $A$ is an \emph{induced substructure} of $B$, written as $A \sqsubseteq B$, if $V(A) \subseteq V(B)$ and $R_i(A) = R_i(B) \cap \binom{V(A)}{k_i}$ for all $1 \leq i \leq \ell$.
Furthermore $A$ is a \emph{substructure} of $B$, written as $A \subseteq$ B, if $V(A) \subseteq V(B)$ and $R_i(A) \subseteq R_i(B) \cap \binom{V(A)}{k_i}$ for all $1 \leq i \leq \ell$.
A function $f: V(A) \to V(B)$ is called a \emph{strong embedding} if $f$ is injective and $X \in R_i(A) \Leftrightarrow f(Y) \in R_i(B)$ for all $1 \leq i \leq \ell$ and $X \subseteq V(A)$.

A relational structure $A$ is called \defn{irreducible} if for all different $u, v \in V(A)$ there is some $1\leq i \leq \ell$ and $X \in R_i(A)$ with $u, v \in X$.

A vertex $v \in V(A)$ is called $P$-padding for $P = (p_1, \ldots, p_\ell) \in \{0, 1\}^\ell$ if for all $p_i = 0$ we have $\{X \in R_i(A) : v \in X\} = \emptyset$ and for all $p_i = 1$ we have $\{X \in R_i(A) : v \in X\} = \{ X \in \binom{V(A)}{k_i}  : v \in X\}$, i.e.,\ $v$ is not in a relation with any other vertices for $p_i = 0$ and in all possible relations with other vertices for $p_i = 1$.
A relational structure $A$ is now called $P$-pure if it does not contain any $P$-padding vertices.

A subset $F \subseteq \rel{\Delta}$ is called \defn{upwards closed} if for every $A \in F$ and all $B \in \rel{\Delta}$ with $V(A) = V(B)$ and $A \subseteq B$ we have $B \in F$.
Let $F \subseteq \rel{\Delta}$ be a potentially empty upwards closed set of $P$-pure irreducible relational structures of type $\Delta$.
Now let $U = \forb{F}$ be the set of all relational structures of type $\Delta$ that do not contain any substructure isomorphic to any element in $F$.
This is equivalent to $U$ not containing any induced substructures isomorphic to elements in $F$, due to $F$ being upwards closed.
Restricting $U$ to only the $P$-pure relational structures yields $\ppure{U}$.

As was the case for graphs, the Ramsey theorem only holds for an ordered version of relational structures.
We say $\ordrel{\Delta}$ is the class of all ordered relational structures of type $\Delta$.
Ordered relational structures are defined, as in the graph case, by incorporating a linear order $\leq_V$ on the universe into the structure and extending notions of homomorphisms, (strong) embeddings, and isomorphisms to respect the ordering, which also yields notions of (induced) substructures for ordered structures.
For a relational structure $A$ and $X\subseteq V(A)$, write $A[X]$ for the substructure induced by $X$.
Parallel to Definition~\ref{def:graph-motif-parameter}, define (ordered) motif parameters $\varphi$ to be linear combinations of induced substructure counts $\indsub{A}{\starpad}$ for fixed (ordered) relational structures $A$.
We call $\varphi$ $P$-pure if all its patterns are $P$-pure. For a class of relational structures $U$, we speak of an $U$ motif parameter if its domain is restricted to $U$.

Relational structures $A$ with $V(A) = \{0, \ldots, j-1\}$ are encoded as oracles $(x, O)$ by choosing $x = j$, and by querying $O$ with $(i, (v_1, \ldots, v_{k_i}))$, a Turing machine can ask whether $\{v_1, \ldots, v_{k_i}\} \in R_i(A)$.

We can now state our main theorem, in analogy to Theorem~\ref{thm:ordgraphs}.
\begin{theorem}
\label{thm:ordrelstructures}
Let $\varphi: U \to \IN$ be a $P$-pure ordered $U$ motif parameter.
Then $\eval{\varphi} \in \promise\sharpPdottedcircle$ iff all coefficients of $\varphi$ in the $\indsub{A_i}{\starpad}$ basis are nonnegative integers.
\end{theorem}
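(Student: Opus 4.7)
The easy direction follows by adapting Lemma~\ref{lem:graphs_lower}: for each pattern $A$ with coefficient $\alpha_A$, an NTM guesses an ordered $|V(A)|$-tuple of distinct vertices of the input, queries the oracle on each relation restricted to those vertices, and spawns $\alpha_A$ parallel accepting branches iff the induced substructure matches $A$ as an ordered relational structure. Summing over patterns gives $\eval\varphi \in \sharpPdottedcircle \subseteq \promise\sharpPdottedcircle$. For the hard direction, I will replicate the four-step template of Section~\ref{sec:graphs}, with $P$-padding vertices playing the role of isolated vertices.

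The set-instantiator step (analogues of Definition~\ref{def:graphs_setinstantiator} and Lemma~\ref{lem:graphs_setinstantiators_exist}) runs essentially verbatim once we establish that $U = \forb{F}$ is closed under inserting $P$-padding vertices. This closure holds because $P$-padding status is inherited by induced substructures (a vertex isolated/fully-connected in the ambient structure remains so in every substructure containing it), so a padding vertex cannot belong to any $P$-pure induced copy of any $F \in F$; hence any new forbidden copy would already have existed before padding. Moreover, $\varphi$ is invariant under $P$-padding since its patterns are $P$-pure. The random-hashing argument of Lemma~\ref{lem:graphs_setinstantiators_exist} then carries over unchanged: we embed $V(G)$ randomly into $|V(G)|\cdot n$ slots, fill the remaining slots with $P$-padding vertices, and bound by union over the polynomially many vertices queried by accepting paths. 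For the Ramsey step, I invoke the Nešetřil–Rödl type theorem for ordered relational structures from $\forb{F}$, which applies precisely because $F$ is an upward-closed class of irreducible structures (see~\cite{Nesetril1995} and the structural Ramsey references therein). Iterating it over the finitely many isomorphism classes in $\poset G$ yields the direct analogue of Proposition~\ref{pro:graphs_ramseyposet}, and combining with the set-instantiator gives the force-good analogue of Lemma~\ref{lem:graphs_forceGood}: a $\poset G$-good counting function $\Psi$ matching the accepting-path counts of $M$ on all $H \sqsubseteq G$ under some common-size instantiation.

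For the final separation, the lower-triangular argument of Fact~\ref{fact: lin-indep-ind} shows that the functions $\indsub{A}{\starpad}$ are linearly independent on any finite family of pairwise non-isomorphic ordered relational structures, so the Witness Theorem~\ref{thm:graphs_findcounterexample} adapts by taking $G_1$ to be the disjoint union of the patterns of $\varphi$; this $G_1$ lies in $\ppure{U}$ because $F$ consists of irreducible structures, so no induced copy of any $F \in F$ can straddle distinct connected components. The concluding argument mirrors the last lemma of Section~\ref{sec:graphs}: pad $G_1$ with enough $P$-padding vertices to obtain $G_2 \in U$ satisfying the analogues of \eqref{eq:graphs_blowupNonR} and~\eqref{eq:graphs_keepR}; then either the good $\Psi$ produced by force-good has a positive coefficient on a non-$P$-pure pattern $H \in \poset{G_1}$, in which case $\Psi(G_2) \geq \indsub{H}{G_2}$ overshoots $\varphi(G_2) = \varphi(G_1)$, or the restriction of $\Psi$ to $\poset{G_1}$ is $(\poset{G_1} \cap \ppure U)$-good and the Witness Theorem supplies a concrete $W \in \ppure U$ on which $\Psi$ and $\varphi$ disagree. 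The main obstacle is verifying the two closure properties of $U$—under $P$-padding and under disjoint union—both of which rest essentially on $F$ being upward-closed, $P$-pure, and irreducible; once these are in place, the Ramsey input is the Nešetřil–Rödl theorem for $\forb{F}$-classes and the rest of the proof transfers mechanically from Section~\ref{sec:graphs}.
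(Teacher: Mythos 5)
Your proposal follows essentially the same route as the paper: the paper also proves Theorem~\ref{thm:ordrelstructures} by rerunning the ordered-graph argument of Section~\ref{sec:graphs} with $P$-padding vertices replacing isolated vertices, re-establishing the set-instantiator lemma for the modified oracle encoding by the same random-embedding argument, invoking the Ne\v{s}et\v{r}il--R\"odl Ramsey theorem for classes of ordered relational structures defined by forbidden irreducible substructures, and relying on exactly the closure observations you verify (padding vertices remain padding in induced substructures, so adding them neither leaves $U$ nor changes $\varphi$, and irreducibility keeps the disjoint union of the patterns inside $U$). The only cosmetic caveat is that in the upper bound the guessed vertex tuple should be taken in increasing order, as in Lemma~\ref{lem:graphs_lower}, so that each induced copy is counted exactly once; with that reading your argument is essentially identical to the paper's.
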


We remind that $\eval{\varphi}$ is a promise problem.
As such the NTM only has to count correctly whenever the pair $(x, O)$ correctly encodes an ordered relational structure from the domain of $\varphi$, in this case $U$.
This for example implies that for $\Delta = (2)$, $P = (0)$ and $F = \{K_3\}$ (the clique on 3 vertices), the counter example oracle input for each NTM will itself be a graph without any triangles.

The proof of Theorem~\ref{thm:ordrelstructures} is almost identical to that of Theorem~\ref{thm:ordgraphs}, so we will only point out some specific modifications required in the proof:
\begin{itemize}
    \item The Ramsey theorem: Lemma~\ref{lem:nesetril} was shown in~\cite{DBLP:journals/jct/Nesetril77} not only for the class of all finite graphs, but more generally for the class of all relational structures defined via a set of irreducible forbidden substructures, such as $U$.
    \item The value of a $P$-pure ordered motif parameter $\varphi$ does not change by adding or removing $P$-padding vertices, since any $P$-padding vertex $v$ in an ordered relational structure stays a $P$-padding vertex in any induced substructure containing $v$.
    Anywhere we were filling up a graph with isolated vertices, we instead add $P$-padding vertices. (Note that isolated vertices are simply $(0)$-padding vertices.)
    \item By the same argument, it is impossible to leave $U$ by adding $P$-padding vertices to a relational structure that was in $U$.
    \item Due to the changes of the way we store an ordered relational structure as an oracle, we will be showing the existence of set-instantiators explicitly again.
\end{itemize}

The notion of a set-instantiator is generalized as follows. We write $U_{=j}$ for the ordered relations in $U$ whose universes are exactly the vertices $\{0, \ldots, j-1\}$ with the natural order on them. 
\begin{definition}[Set-instantiator against $(M,A,\varphi)$]
\label{def:rel_setinstantiator}
Let  $M$  be a nondeterministic Turing machine,
let $A \in U_{=j}$ and
let $\varphi$ be an ordered $P$-pure motif parameter.

Let  $\top$  be a symbolic top element above the induced substructure poset $\poset{A}$, i.e., $\top \not\sqsubseteq B$ for all induced substructures $B \sqsubseteq A$.
A  \defn{set-instantiator}  $\SI$  is a triple of

$\circ$ \ some $j \in \IN$, 

$\circ$ \ an instantiation function  $\inst_\SI : \poset{A} \to U_{=j}$,   and

$\circ$ \  a perception function  $\perc_\SI : \{0,1\}^* \to \poset{A} \cup \{\top\}$,

\smallskip
\noindent
such that the following property holds for all induced substructures $B \sqsubseteq A$:

\smallskip
$\bu$ \ $\tau\in\{0,1\}^*$ is an accepting path for the computation  $\#\acc_{M^{\inst_\SI(B)}}(j)$
if and only if $\perc_\SI(\tau) \sqsubseteq B$, and 

$\bu$ \ $\varphi(\inst_\SI(B))) = \varphi(B)$.
\end{definition}
\begin{lemma}
    \label{lem:rel_setinstantiators_exist}
    Let $\varphi: U \to \IN$ be a $P$-pure ordered $U$ motif parameter, let $M$ be a polynomial-time NTM computing $\eval{\varphi}$ and let $A \in U$.
    Then there is a set-instantiator against $(M,A,\varphi)$.
\end{lemma}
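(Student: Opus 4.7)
The plan is to mirror the proof of Lemma~\ref{lem:graphs_setinstantiators_exist} almost verbatim, substituting ``isolated vertex'' by ``$P$-padding vertex'' and ``edge'' by ``relation tuple''. First I would sample $\hat\xi: V(A) \to \{0,\dots,n-1\}$ uniformly at random (with $n$ to be fixed later) and extend it to a monotone injection $\xi: V(A) \to \{0,\dots,j-1\}$ where $j := |V(A)| \cdot n$, by placing the $i$-th smallest vertex $v_i$ at position $(i-1)n + \hat\xi(v_i)$. For each $B \sqsubseteq A$, I would then define $\inst_\SI(B)$ on $\{0,\dots,j-1\}$ as the unique structure in which $\xi$ strongly embeds $B$ onto $\xi(V(B))$ and every vertex outside $\xi(V(B))$ is $P$-padding: explicitly, a tuple $X$ lies in $R_i(\inst_\SI(B))$ iff either $X \subseteq \xi(V(B))$ and $\xi^{-1}(X) \in R_i(B)$, or $X \not\subseteq \xi(V(B))$ and $p_i = 1$.

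Three properties then need verification. (a) $\inst_\SI(B) \in U$: suppose some substructure $C \subseteq \inst_\SI(B)$ is isomorphic to $F_0 \in F$. If $V(C) \subseteq \xi(V(B))$, then $\xi^{-1}(C)$ is a substructure of $B$ isomorphic to $F_0$, contradicting $B \in U$. Otherwise $V(C)$ contains a padding vertex $v$, and upward-closedness of $F$ places the induced substructure $\inst_\SI(B)[V(C)]$ in $F$; however $v$ remains $P$-padding inside $\inst_\SI(B)[V(C)]$ (its tuples with vertices of $V(C)$ are uniformly empty or complete, according to $P$), contradicting the $P$-purity of $F$-members. (b) $\varphi(\inst_\SI(B)) = \varphi(B)$: patterns in $\varphi$ are $P$-pure, so they cannot contain padding vertices; hence every induced pattern-copy in $\inst_\SI(B)$ lies inside $\xi(V(B))$ and corresponds bijectively via $\xi^{-1}$ to an induced pattern-copy in $B$. (c) The perception property, through a union bound: fix $B$, condition on the image $I = \xi(V(B))$, and note that $\inst_\SI(B)$ is then fully determined. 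The set $S_I$ of oracle coordinates touched by the at most $\varphi(B)$ accepting paths of $\#\acc_{M^{\inst_\SI(B)}}(j)$ has size at most $\varphi(B) \cdot t_M(j) \cdot \max_i k_i = \polylog(j)$. Since the remaining images $\xi(V(A) \setminus V(B))$ are independent and uniform over $n$ slot-positions, Bernoulli's inequality gives failure probability at most $|V(A)| \cdot |S_I| / n$; a union bound over the $2^{|V(A)|}$ many $B \sqsubseteq A$ and a sufficiently large $n$ produce a deterministic $\xi$ such that, for every $B \sqsubseteq A$, no accepting path on $\inst_\SI(B)$ touches a coordinate in $\xi(V(A)) \setminus \xi(V(B))$.

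Finally I would set $\perc_\SI(\tau) := A[\xi^{-1}(S_\tau)]$ when $\tau$ accepts on $\inst_\SI(A)$, and $\perc_\SI(\tau) := \top$ otherwise. Since $\inst_\SI(B)$ and $\inst_\SI(A)$ agree on $\xi(V(B))$ (as $B = A[V(B)]$) and on all coordinates outside $\xi(V(A))$ (both are $P$-padding there), a path whose queries avoid $\xi(V(A)) \setminus \xi(V(B))$ accepts on one structure iff it accepts on the other; combined with (c), this verifies both bullets of Definition~\ref{def:rel_setinstantiator}. The main obstacle is property (a): one must ensure that blindly $P$-padding all extra vertices does not accidentally produce a forbidden substructure, and this requires the precise interplay of upward-closedness of $F$ with $P$-purity of its members. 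Notably, irreducibility of $F$-members is not needed here; it enters only later, in the Ramsey argument that follows this lemma. Everything else is a routine transcription of the graph version.
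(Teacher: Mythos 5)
Your proposal is correct and follows essentially the same route as the paper's proof: a random block placement $\xi$ obtained from a uniform $\hat\xi$, a union bound (via Bernoulli's inequality) over all $B \sqsubseteq A$ guaranteeing that accepting paths of $\#\acc_{M^{\inst_\SI(B)}}(j)$ never touch $\xi(V(A))\setminus\xi(V(B))$, and a perception function $A[\xi^{-1}(S_\tau)]$ built from the queried vertices. The only deviation is that you make every vertex outside $\xi(V(B))$ $P$-padding (the paper's text pads only vertices outside $\xi(V(A))$), which is a harmless and, for general $P$, the more careful choice since it makes $\inst_\SI(B)$ depend only on $\xi(V(B))$ and keeps $\varphi(\inst_\SI(B))=\varphi(B)$; your explicit verifications that $\inst_\SI(B)\in U$ and that padding vertices remain padding in induced substructures correspond to the remarks the paper records in the bullet list preceding the lemma.
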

\begin{proof}
    For now let $n \in \IN$ be undetermined, we will choose it accordingly later.
    Uniformly at random choose a function $\hat{\xi}: V(A) \to \{0, \ldots, n-1\}$ and let it induce a monotone function $\xi: V(A) \to \{0, \ldots, |V(A)| \cdot n - 1\}$ by mapping the $i$-th smallest vertex $v \in V(A)$ to $(i - 1)\cdot n + \hat{\xi}(v)$.
    We then set $j = |V(A)| \cdot n$.
    
    For an ordered relational structure $B \sqsubseteq A$, we denote by $\xi(B)$ the relational structure on the vertices $\{0, \ldots, |V(G)| \cdot n - 1\}$, where the relations are given as the images of the relations of $B$ under $\xi$.
    Furthermore, for every vertex $v$ that is not in the image of $\xi(V(A))$, make $v$ a $P$-padding vertex.
    We are trying to find some $\xi$, s.t.\ for all $B \sqsubseteq A$, all accepting paths of the computation $\#\acc_{M^{\xi(B)}}(j)$ do not access the oracle for any relations incident to vertices in $\xi(V(A)) \setminus \xi(V(B))$.
    By the union bound, if we can show this happening with high probability for each of the finitely many individual $B \sqsubseteq A$, then there is such a $\xi$, so for now fix some $B \sqsubseteq A$.

    Now group the functions $\xi$ by their image of $V(B)$ and fix one such group $I$.
    Note that the image of the remaining vertices $V(A) \setminus V(B)$ under $\xi$ is still independently and uniformly distributed from $n$ possible choices each by the choice of $\hat{\xi}$.
    Consider the set $S_I$ of all vertices queried by all accepting paths of the computation $\#\acc_{M^{\xi(B)}}(j)$.
    Let $t_M(j)$ is the worst-case running-time of $M$ on input $j$.
    Each oracle query queries one of the relations at a time and thus at most $\max(k_1, \ldots, k_\ell)$ vertices, there are at most $\varphi(B)$ many accepting paths and each accepting path can query the oracle at most $t_M(j)$ many times and as such the size $|S_I|$ is bounded by $\max(k_1, \ldots, k_\ell) \cdot \varphi(B) \cdot t_M(j)$, which is polynomial in $\log(j)$ due to $M$ being polynomial-time and $B$ and $k_1, \ldots, k_\ell$ being fixed.
    As a rough estimate using the Bernoulli inequality and $|V(A) \setminus V(B)| \leq |V(A)|$, this means that for a fraction of at least $\left(1 - \frac{|S_I|}{n}\right)^{|V(A)|} \geq 1 - \frac{|S_I| \cdot |V(A)|}{n}$ choices of $\hat{\xi}$ resulting in $\xi \in I$, none of the accepting paths queries any edges incident to vertices in $\xi(V(A)) \setminus \xi(V(B))$.
    By averaging across all possible $\hat{\xi}$ (no longer restricted to $I$), we see that a fraction of $1 - \frac{\polylog(j) \cdot |V(A)|}{n}$ choices have our desired property.

    Using the union bound over all $B \sqsubseteq A$, we get that a fraction of $1 - |\poset{A}| \cdot \frac{\polylog(j) \cdot |V(A)|}{n}$ choices of $\hat{\xi}$ lead to the desired property for all $B \sqsubseteq A$ simultaneously.
    By choosing $n$ large enough (remember that $j$ and $n$ are polynomially related, so this is always possible), this guarantees the existence of our desired~$\xi$, s.t.\ for all $B \sqsubseteq A$, all accepting paths of the computation $\#\acc_{M^{\xi(B)}}(j)$ do not access the oracle for any edges incident to vertices in $\xi(V(A)) \setminus \xi(V(B))$.

    We can now construct our set-instantiator with $j$ chosen as above.
    The instantiation function for $B \sqsubseteq A$ is
    \begin{align*}
        \inst_\SI(B) &:= \xi(B)\,.
    \end{align*}
    For any computation path $\tau \in \{0, 1\}^\star$ of a computation $\#\acc_{M^{\xi(B)}}(j)$, denote by $S_\tau \subseteq \{0, \ldots, j-1\}$ the set of vertices that are queried by the computation.
    We set
    \begin{align*}
        \perc_{\SI}(\tau) := \begin{cases}
            A[\xi^{-1}(S_\tau)] & \text{if $\tau$ is an accepting path of the computation $\#\acc_{M^{\xi(A)}}(j)$,}\\
            \top & \text{otherwise}
        \end{cases}
    \end{align*}

    It remains to check the properties of a set-instantiator.
    Clearly $\varphi(\inst_\SI(B)) = \varphi(B)$ for all $B\sqsubseteq G$, since the two relational structures are identical except $P$-padding vertices and no pattern in $\varphi$ contains any $P$-padding vertices.
    Further we need that for all $B \sqsubseteq A$ we have that $\tau \in \{0, 1\}^\star$ is an accepting path for the computation $\#\acc_{M^{\inst_\SI(B)}}(j)$ iff $\perc_\SI(\tau) \sqsubseteq B$.
    For this fix $B \sqsubseteq A$ and let $\tau$ be an accepting path for the computation $\#\acc_{M^{\inst_\SI(B)}}(j)$.
    By our choice of $\xi$ we know that no other vertices of $\xi(V(A)) \setminus \xi(V(B))$ are queried, or in other words, $\xi^{-1}(S_\tau) \subseteq V(B)$ and thus $\tau$ is also an accepting path for the computation $\#\acc_{M^{\inst_\SI(A)}}(j)$.
    This directly implies $\perc_{\SI}(\tau) = A[\xi^{-1}(S_\tau)] = \sqsubseteq B$.
    On the other hand, let $\perc_{\SI}(\tau) \sqsubseteq B$, then $\perc_{\SI}(\tau) \neq \top$ and $\tau$ is an accepting path of the computation $\#\acc_{M^{\inst_\SI(A)}}(j)$, since $\perc_{\SI}(\tau) = A[\xi^{-1}(S_\tau)] \sqsubseteq B$, $\tau$ is also an accepting path of the computation $\#\acc_{M^{\inst_\SI(B)}}(j)$.
\end{proof}

\subsection{Directed Relational Structures}
So far for all of our graphs and relational structures we have assumed that they are undirected and have no self-loops or relations with duplicate vertices.

We study two further generalizations to ordered and unordered relational structures $A$, where each relation $R_i$ can now be one of following different variants:
\begin{enumerate}
    \item Set relations, i.e.\ $R_i \subseteq \binom{V}{k_i}$ which is the kind of relational structures we have studied so far.
    \item Multiset relations, i.e.\ $R_i$ is a set of multiset of vertices from $V$ with $k_i$ total (not necessarily distinct) elements each.
    These for example would allow us to express undirected self-loops in the graph setting.
    \item List relations without repetitions, i.e.\ $R_i$ is a tuple of $k_i$ vertices from $V$, but without repetitions.
    Directed graphs without self-loops fall under this category.
    \item List relations with repetitions, i.e.\ $R_i$ is a tuple of $k_i$ elements from $V$ with repetitions.
    Directed graphs with self-loops fall under this category.
\end{enumerate}
We can even allow different relations to be of different variants.
We call such an (ordered) relational structure a mixed (ordered) relational structure.
A \defn{mixed type} $\tilde{\Delta}$ is now a type $\Delta$, combined with the information about which variant each relation uses.
The class $\mordrel{\tilde{\Delta}}$ is then the class of all mixed ordered relational structures of type $\tilde{\Delta}$.
The induced substructures of a mixed (ordered) relational structure are now defined in the natural way and homomorphisms and (strong) embeddings preserve the order and multiplicities of the mixed relations.
We call a mixed (ordered) relational structure $A$ irreducible if for every different $u, v \in V(A)$ there is some $1 \leq i \leq \ell$ and $X \in R_i(A)$, such that $X$ contains both $u$ and $v$ (as a set, multiset or list, depending on the variant of $R_i$).
For a vertex $v$ to be a $P$-padding vertex the relation $R_i$ needs to contain all possible entries including $v$ if $p_i = 1$ and for $p_i = 0$ it has to contain none of the entries including $v$.
As usual $A$ is called $P$-pure if it contains no $P$-padding vertices.

In an ordered relational structure, each of these types of relations can be expressed by using one or more set relations in the following way.
\begin{description}
    \item[Multiset relation:]
        Replace $R_i$ by one set relation $(R_i)_C$ of arity $|C|$ for every composition $C$ of $k_i$.
        Let $e$ be a hyperedge in $R_i$ with vertices $v_1, \ldots, v_\nu$ with multiplicities $\lambda_1, \ldots, \lambda_\nu$.
        W.l.o.g.\ assume that $v_1 \leq v_2 \leq \ldots \leq v_\nu$ according to the linear order on $V$.
        Then $\{v_1, \ldots, v_\nu\}$ is added to $(R_i)_{(\lambda_1, \ldots, \lambda_\nu)}$.
    \item[List relation without repetition:]
        Replace $R_i$ by one set relation $(R_i)_\prec$ of arity $k_i$ for every possible total order $\prec$ of $\{1, \ldots, k_i\}$.
        Let $(v_1, \ldots, v_{k_i}) \in R_i$.
        Define the order $\prec_{(v_1, \ldots, v_{k_i})}$ on $\{1, \ldots, k_i\}$ via $i \prec j := v_i \leq v_j$ and add $\{v_1, \ldots, v_{k_i}\}$ to $(R_i)_{\prec_{(v_1, \ldots, v_{k_i})}}$.
    \item[List relation with repetition:]
        Same as above, however $\prec$ is now a total preorder, i.e.\ it does not have to be anti-symmetric.
        This also implies that not all of the added relations have arity exactly $k_i$, but rather arity $\leq k_i$, similarly to the multiset relation case.
\end{description}
Denote the resulting type of the ordered relational structures as $\Delta'$.
Note that $\Delta'$ only depends on $\tilde{\Delta}$.
We call the function applying the above conversion rules $\convert: \mordrel{\tilde{\Delta}} \to \ordrel{\Delta'}$.
\begin{fact}
    \label{fact:mixed_bijective}
    This construction is bijective, i.e.\ the conversion mapping $\convert$ is bijective.
    Furthermore, both $\convert$ and $\convert^{-1}$ are efficiently computable in the sense that it is possible, in polynomial time in $\log(|V(A)|)$, to compute the image (or preimage) of any possible entry of the relations under $\convert$.
\end{fact}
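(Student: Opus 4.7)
My plan is to reduce the statement to three per-relation verifications, one for each variant (multiset, list without repetition, list with repetition), and then to assemble these into a single bijection. The key observation that makes the argument go through is that, in each variant, an entry is uniquely determined by its underlying set of vertices together with a small combinatorial label, and the construction distributes the set portion into a set relation while encoding the label into the \emph{name} of that new set relation. Thus the map is lossless at the entry level.

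First I would verify the per-entry bijection. For a multiset relation, I would check that every multiset with $k_i$ elements, written with its distinct vertices sorted via $\leq_{V(A)}$ as $v_1 < \cdots < v_\nu$ with multiplicities $\lambda_1,\ldots,\lambda_\nu$, corresponds to exactly one pair consisting of the composition $(\lambda_1,\ldots,\lambda_\nu)$ of $k_i$ and the set $\{v_1,\ldots,v_\nu\}$; the construction writes the latter into the component $(R_i)_{(\lambda_1,\ldots,\lambda_\nu)}$, and the inverse reads off composition and set and redistributes multiplicities. The list-without-repetition case is analogous after replacing compositions by total orders $\prec$ on $\{1,\ldots,k_i\}$, and the list-with-repetition case is the common generalization, using total preorders (equivalently, a composition of $k_i$ together with a total order on its parts). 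Since all three cases proceed identically up to relabeling the combinatorial label, I would unify the presentation.

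Second I would assemble the global bijection: two distinct mixed structures $A \neq A'$ on the same universe differ in some entry of some $R_i$, and per-entry injectivity propagates to $\convert(A)\neq \convert(A')$; conversely, the image of $\convert$ is recovered entry by entry from the component labels. Efficient computability then follows because the arities $k_1,\ldots,k_\ell$ are constants fixed by $\tilde\Delta$, so each entry carries a constant number of vertex identifiers, each of length $O(\log|V(A)|)$. Converting an entry amounts to sorting a constant-size list under $\leq_{V(A)}$, reading off the realized composition or preorder, and emitting (or inversely unfolding) the set entry — all in polylogarithmic time in $|V(A)|$. The main obstacle, and it is mild, will be bookkeeping for the list-with-repetition case, where the arity of the target component $(R_i)_\prec$ equals the number of $\prec$-equivalence classes rather than $k_i$; once this indexing is fixed consistently on both sides, the inverse map is unambiguous and efficiently implementable.
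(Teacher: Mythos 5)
Your proof is correct: the paper states this Fact without any proof, and your per-entry argument (each entry is losslessly encoded as its underlying vertex set together with a combinatorial label --- composition, total order, or total preorder --- stored in the name of the target set relation, with constant arities yielding polylogarithmic-time conversion in both directions) is exactly the routine verification the construction is designed to make evident. One tiny quibble: a total preorder on $\{1,\ldots,k_i\}$ corresponds to an ordered set partition of the positions rather than to ``a composition of $k_i$ together with a total order on its parts,'' but this parenthetical plays no role in your argument.
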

\begin{fact}
    \label{fact:mixed_commute}
    Moreover $\convert$ commutes with (strong) embeddings and thus induced subgraphs and isomorphisms.
    In particular this directly implies that $\ind{B}{A} = \ind{\convert(B)}{\convert(A)}$.
\end{fact}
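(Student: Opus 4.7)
The plan is to first establish that $\convert$ commutes with strong embeddings; the statements about induced substructures, isomorphisms, and the count $\ind{B}{A}$ will then follow as direct consequences.

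The heart of the argument is a uniform case analysis over the three relation variants. Fix a strong embedding $f \colon V(A) \to V(B)$ between mixed ordered relational structures, so that $f$ is an order-preserving injection with $X \in R_i(A) \Leftrightarrow f(X) \in R_i(B)$ for every admissible entry $X$. I will show that, for each such entry, $X$ and $f(X)$ are routed by $\convert$ into the \emph{same} refined set-relation; this lifts the strong-embedding property from $\tilde{\Delta}$ to $\Delta'$. For a multiset relation, the composition obtained by listing multiplicities of the distinct vertices of $X$ in increasing vertex order is preserved by $f$, because an order-preserving injection maps distinct vertices to distinct vertices and strictly respects their relative order. For a list relation without repetition, the permutation of $\{1,\ldots,k_i\}$ determined by $X$ is preserved for the same reason. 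For a list relation with repetition, the induced total preorder on indices is preserved because order-preserving injections respect both $<$ and $=$.

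With the shape shown to be $f$-invariant, the strong-embedding condition at each refined relation $(R_i)_s$ reduces to the original condition at $R_i$ restricted to entries of shape $s$, so $f$ is indeed a strong embedding $\convert(A) \to \convert(B)$. The converse direction relies on Fact~\ref{fact:mixed_bijective}: every entry of $\convert(A)$ arises from a unique entry of $A$ in a shape-determined way, which lets one reverse the argument. Commuting with induced substructures is then immediate, since $B \sqsubseteq A$ amounts to the vertex inclusion being a strong embedding; commuting with isomorphisms is the special case of bijective strong embeddings.

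For the equality $\ind{B}{A} = \ind{\convert(B)}{\convert(A)}$, I would use that $\convert$ restricts, for each $S \subseteq V(A)$, to a bijection between the induced substructures of $A$ on $S$ and the induced substructures of $\convert(A)$ on $S$; combined with preservation of isomorphism, this bijection sends induced copies of $B$ in $A$ to induced copies of $\convert(B)$ in $\convert(A)$ one-to-one, yielding the stated equality. The main obstacle is purely bookkeeping: spelling out the shape invariance uniformly across the three variants, and ensuring that no entries are silently merged or split in the repetition cases, where refined relations of strictly smaller arity arise and must be accounted for on both sides.
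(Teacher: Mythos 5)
Your proposal is correct: the shape of an entry (the multiplicity composition, the index order, or the index preorder) is invariant under order-preserving injections, and together with the unique reconstructability of entries from Fact~\ref{fact:mixed_bijective} this gives that $f$ is a strong embedding of the mixed structures iff it is one of the converted structures, from which the statements about induced substructures, isomorphisms, and the equality $\ind{B}{A} = \ind{\convert(B)}{\convert(A)}$ follow exactly as you say. The paper states this fact without proof as immediate from the construction of $\convert$, and your argument is precisely the routine verification it leaves implicit, so there is nothing to add.
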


These two facts also imply a further generalization of the Ramsey theorem in \cite{DBLP:journals/jct/Nesetril77} to mixed ordered relational structures by applying $\convert$ before using the Ramsey theorem on ordered relational structures and then applying $\convert^{-1}$ to get convert the resulting relational structure back to a mixed ordered relational structure.
However we do not use this variant of the Ramsey theorem here and instead do a direct reduction.

\begin{theorem}
    \label{thm:ordmixedrelstructures}
    Let $\tilde{\Delta}$ be a mixed type with $\ell$ relations, let $P \in \{0, 1\}^\ell$ and let $F \subseteq \mordrel{\tilde{\Delta}}$ be an upwards closed subset of $P$-pure irreducible mixed ordered relational structures.
    
    Let $\varphi: \forb{F} \to \IN$ be a $P$-pure ordered $\forb{F}$ motif parameter, i.e.,
    $\varphi(B)= \sum_{i=1}^s \alpha_i \cdot \ind{A_i}{B}$
    for pairwise non-isomorphic $P$-pure $A_1,\ldots,A_s$ and coefficients $\alpha_1 ,\ldots,\alpha_s \in \qqq$.
    Then the evaluation problem $\eval{\varphi}$ is in $\promise\sharpPdottedcircle$ iff all coefficients $\alpha_1 ,\ldots,\alpha_s \in \qqq$ are nonnegative integers.
\end{theorem}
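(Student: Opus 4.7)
The plan is to reduce Theorem~\ref{thm:ordmixedrelstructures} to the already-established Theorem~\ref{thm:ordrelstructures} by transporting the mixed motif parameter through the conversion function $\convert$.

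For the ``if'' direction, assuming all coefficients are nonnegative integers, I would construct a polynomial-time NTM in direct analogy with Lemma~\ref{lem:graphs_lower}: for each pattern $A_i$, nondeterministically guess an ordered tuple of $|V(A_i)|$ vertices of the input, query the mixed-relation oracle to recover the induced substructure on these vertices, and accept with multiplicity $\alpha_i$ if this substructure is isomorphic to $A_i$. Summed over $i$, this yields exactly $\varphi(B)$ accepting paths on inputs $B \in \forb{F}$, placing $\eval{\varphi}$ in $\promise\sharpPdottedcircle$.

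For the ``only if'' direction, suppose that some coefficient $\alpha_j$ is negative. I would set $F' := \convert(F) \subseteq \ordrel{\Delta'}$, define a padding vector $P'$ on $\Delta'$ that assigns the value $p_i$ to every derived set relation $(R_i)_C$, and introduce the ordered motif parameter
\[
\varphi'(B') \ := \ \sum_{i=1}^s \alpha_i \cdot \indsub{\convert(A_i)}{B'}
\]
on $\forb{F'}$. The first task is to check that $\varphi'$ satisfies the hypotheses of Theorem~\ref{thm:ordrelstructures}: the $\convert(A_i)$ must be pairwise non-isomorphic $P'$-pure patterns inside $\forb{F'}$, and $F'$ must be an upwards closed set of $P'$-pure irreducible ordered relational structures. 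Each of these properties follows by transporting the corresponding property of $A_i$ and $F$ across $\convert$, using Facts~\ref{fact:mixed_bijective} and~\ref{fact:mixed_commute}: $\convert$ is a bijection preserving (strong) embeddings, induced substructures, isomorphisms, incidence between vertices and relation entries, and consequently both irreducibility and $P$-padding (under the translation $P \mapsto P'$). Since $\alpha_j$ remains a negative coefficient of $\varphi'$, Theorem~\ref{thm:ordrelstructures} then yields $\eval{\varphi'} \notin \promise\sharpPdottedcircle$.

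Finally, I would derive $\eval{\varphi} \notin \promise\sharpPdottedcircle$ by contraposition through an oracle simulation. Any polynomial-time NTM $M$ purporting to compute $\eval{\varphi}$ can be converted into an NTM $M'$ for $\eval{\varphi'}$: on oracle input encoding some $B' \in \forb{F'}$, $M'$ simulates $M$ as though its oracle encoded $B := \convert^{-1}(B')$, by intercepting each mixed-relation query of $M$ and translating it through the explicit conversion rule into a set-relation query on $B'$. Fact~\ref{fact:mixed_bijective} guarantees that this translation (in either direction) is computable in time polynomial in $\log|V(B')|$, so $M'$ runs in polynomial time; Fact~\ref{fact:mixed_commute} guarantees $\varphi(B) = \varphi'(B')$, so $M'$ has the correct number of accepting paths on every $B' \in \forb{F'}$, contradicting $\eval{\varphi'} \notin \promise\sharpPdottedcircle$. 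The main obstacle I anticipate is purely bookkeeping: defining $P'$ and $F'$ precisely and verifying that each relevant structural property (substructure inclusion, induced substructures, isomorphism, irreducibility, upward closure, $P$-padding, and non-isomorphism of patterns) passes cleanly through $\convert$, so that Theorem~\ref{thm:ordrelstructures} is applicable with no residual hypotheses.
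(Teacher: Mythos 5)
Your proposal is correct and follows essentially the same route as the paper: transport $\varphi$, $F$, and $P$ through $\convert$ to an ordered relational motif parameter $\varphi'$ on $\forb{F'}$, invoke Theorem~\ref{thm:ordrelstructures}, and use the efficient bidirectional computability of $\convert$ (Facts~\ref{fact:mixed_bijective} and~\ref{fact:mixed_commute}) to simulate oracles across the conversion. The only cosmetic difference is that you prove the ``if'' direction by a direct NTM in the style of Lemma~\ref{lem:graphs_lower}, whereas the paper also routes the upper bound through the conversion; both are fine.
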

\begin{proof}
    Let $F' = \{\convert(A) : A \in F\}$ and let $P'$ be obtained from $P$ by setting $p'_{j_i} = p_i$ where $j_i$ ranges over all the relations added by the conversion from relation $i$.
    Then $F'$ is upwards closed and only contains $P'$-pure ordered relational structures.
    We now define $\varphi': \forb{F'} \to \IN$ as $\varphi'(B')= \sum_{i=1}^s \alpha_i \cdot \ind{\convert(A_i)}{B'}$.
    All the $\convert(A_i)$ are pairwise non-isomorphic and $P'$-pure.
    By Theorem~\ref{thm:ordrelstructures} we know that $\eval{\varphi'} \in \promise\sharpPdottedcircle$ iff $\alpha_1, \ldots, \alpha_s$ are nonnegative integers.
    Given an oracle encoding some $B \in \forb{F}$ we can now in polynomial time simulate an oracle encoding $\convert(B) \in \forb{F'}$ by using Fact~\ref{fact:mixed_bijective}.
    Moreover by Fact~\ref{fact:mixed_commute} $\varphi(B) = \varphi'(\convert(B))$, thus if $\eval{\varphi'} \in \promise\sharpPdottedcircle$, then also $\eval{\varphi} \in \promise\sharpPdottedcircle$.

    Since $\convert$ is bijective and efficiently computable in both directions this argument also reverses and proves $\eval{\varphi'} \in \promise\sharpPdottedcircle \Leftrightarrow \eval{\varphi} \in \promise\sharpPdottedcircle$.
\end{proof}

In the same way that we have proven Theorem~\ref{thm:graphs} from Theorem~\ref{thm:ordgraphs}, we also get a variant with unordered mixed relational structures.

\begin{theorem}
    \label{thm:mixedrelstructures}
    Let $\tilde{\Delta}$ be a mixed type with $\ell$ relations, let $P \in \{0, 1\}^\ell$ and let $F \subseteq \mrel{\tilde{\Delta}}$ be an upwards closed subset of $P$-pure irreducible mixed relational structures.
    
    Let $\varphi: \forb{F} \to \IN$ be a $P$-pure $\forb{F}$ motif parameter, i.e.,
    $\varphi(B)= \sum_{i=1}^s \alpha_i \cdot \ind{A_i}{B}$
    for pairwise non-isomorphic $P$-pure $A_1,\ldots,A_s$ and coefficients $\alpha_1 ,\ldots,\alpha_s \in \qqq$.
    Then the evaluation problem $\eval{\varphi}$ is in $\promise\sharpPdottedcircle$ iff all coefficients $\alpha_1 ,\ldots,\alpha_s \in \qqq$ are nonnegative integers.
\end{theorem}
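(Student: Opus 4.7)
The plan is to mimic precisely the reduction from Theorem~\ref{thm:ordgraphs} to Theorem~\ref{thm:graphs} given at the start of \S\ref{sec:graphs}, now in the mixed relational setting, and then invoke Theorem~\ref{thm:ordmixedrelstructures} as a black box. The trivial ``if'' direction follows by the obvious relational analogue of Lemma~\ref{lem:graphs_lower}, since $\promise\sharpPdottedcircle$ is closed under $\IN$-linear combinations and a single $\ind{A}{\starpad}$ is computed by guessing the $|V(A)|$ vertices of a candidate induced copy, querying the relations it involves, and brute-force checking isomorphism to $A$. So the content lies in the ``only if'' direction.

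For the hard direction, I would start from the symmetrization identity: for any unordered mixed relational structure $A$ and any mixed relational structure $B$, fixing an arbitrary linear order $\leq_B$ of $V(B)$,
\[
\ind{A}{B} \ = \ \sum_{\leq_A} \ind{(A,\leq_A)}{(B,\leq_B)},
\]
where the sum runs over a set of representatives of orderings $\leq_A$ of $V(A)$ yielding pairwise non-isomorphic ordered structures. This identity holds by counting subsets $S\subseteq V(B)$ such that the induced substructure $B[S]$, with the restricted order, is isomorphic to $(A,\leq_A)$ for exactly one such ordering. Given a $P$-pure $\forb{F}$ motif parameter $\varphi=\sum_i\alpha_i\ind{A_i}{\starpad}$, I would apply this identity termwise to obtain the ordered motif parameter
\[
\varphi'(B') \ = \ \sum_{i} \sum_{\leq_{A_i}} \alpha_i\cdot \ind{(A_i,\leq_{A_i})}{B'}
\]
defined on $\forb{F'}$, where $F' := \{(A,\leq_A) \mid A\in F,\ \leq_A \text{ order on }V(A)\}$. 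One then checks the following straightforward bookkeeping: (i) $F'$ is upwards closed (adding relations preserves the order), $P$-pure and irreducible (both are structural properties independent of the order); (ii) the ordered patterns $(A_i,\leq_{A_i})$ occurring in $\varphi'$ are pairwise non-isomorphic and $P$-pure; (iii) if some $\alpha_i$ is negative, so are the corresponding coefficients in $\varphi'$.

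By Theorem~\ref{thm:ordmixedrelstructures} applied to $\varphi'$ and $F'$, we conclude $\eval{\varphi'}\notin\promise\sharpPdottedcircle$ whenever $\varphi$ has a negative coefficient. The final step is a simulation argument: assuming for contradiction that $\eval{\varphi}\in\promise\sharpPdottedcircle$, I describe a polynomial-time type-2 reduction of $\eval{\varphi'}$ to $\eval{\varphi}$. Given an oracle $(x,O')$ encoding an ordered structure $B'\in\forb{F'}$, one simulates the underlying unordered oracle $(x,O)$ for $B$ by forgetting the linear order on $V(B')$; since $B'\in\forb{F'}$, the underlying $B$ lies in $\forb{F}$, and by the symmetrization identity $\varphi'(B')=\varphi(B)$. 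Thus calling the machine for $\eval{\varphi}$ gives the correct value, placing $\eval{\varphi'}$ in $\promise\sharpPdottedcircle$—a contradiction.

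The main obstacle is the bookkeeping in the promise setting, namely verifying that the domain $\forb{F'}$ chosen for $\varphi'$ is the right one so that (a) the hardness statement of Theorem~\ref{thm:ordmixedrelstructures} applies and (b) the simulation respects the don't-care structure on both sides. Here it is essential that $F'$ be defined as \emph{all} orderings of elements of $F$, so that forgetting the order maps $\forb{F'}$ into $\forb{F}$ and every ordering of an unordered forbidden structure is itself ordered-forbidden; closure under taking induced substructures and under adding relations transfers cleanly because both operations commute with forgetting the order. Everything else—preservation of $P$-purity, irreducibility, and the sign pattern of the coefficients—is routine.
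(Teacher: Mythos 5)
Your proposal is correct and follows essentially the same route as the paper: the termwise symmetrization identity over orderings of each pattern, taking $F'$ to be all orderings of structures in $F$ (checking upwards closure, irreducibility and $P$-purity transfer), invoking Theorem~\ref{thm:ordmixedrelstructures}, and then the same order-forgetting simulation used to deduce Theorem~\ref{thm:graphs} from Theorem~\ref{thm:ordgraphs}. Your explicit check that $B'\in\forb{F'}$ forces the underlying unordered structure to lie in $\forb{F}$ (so the promise is respected) is exactly the point the paper leaves implicit.
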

\begin{proof}
    We observe 
    \[
        \ind{(V_i, R_{i,1}, \ldots, R_{i, \ell})}{(V, R_1, \ldots, R_\ell)} = \sum_{\leq_{V_i}}\indsub{(V_i, R_{i,1}, \ldots, R_{i, \ell}, \leq_{V_i})}{(V, R_1, \ldots, R_\ell, \leq_V)}
    \]
    where $\leq_V$ is any arbitrary linear ordering of $V$ and $\leq_{V_i}$ sums over all linear orders of $V_i$ that result in non-isomorphic ordered relational structures.
    Further let $F' \subseteq \mordrel{\Delta}$ be obtained from the elements of $F$ combined with all possible linear orders of their vertices.
    Note that all elements of $F'$ are irreducible and $P$-pure and $F'$ is still upwards closed, we can thus invoke Theorem~\ref{thm:ordmixedrelstructures}.

    The same argument as for Theorem~\ref{thm:graphs} finishes the proof.
\end{proof}

\subsection{Applications}
For colored graphs, a special case of relational structures, we get a particularly clean dichotomy, here we do not rely on the patterns not containing any isolated vertices, we only require that there is some color that can be used as a color in the target graph that does not appear in any of the pattern graphs.
Let $\colgraphs{c+1}$ to be the set of all colored graphs on $c+1$ colors.
We rephrase $\colgraphs{c+1}$ as relational structures with $\Delta = (2, \underbrace{1, \ldots, 1}_{c \text{\ times}})$, by adding a unary relation for every color, except the color $0$.
By choosing $P=(0, \ldots, 0)$ we set $P$-padding vertices to be exactly those vertices using the color $0$.
A colored graph is thus $P$-pure iff it contains no vertices with color $0$.
Furthermore we set $F$ to be the set of relational structures on a single vertex, such that at least two different of the unary relations are set, these correspond to vertices that are assigned more than one color, turning $\forb{F}$ into exactly the set of colored graphs where each vertex has exactly one color, either one of $1, \ldots, c$ or the color $0$ by absence of any other color.

By using Theorem~\ref{thm:mixedrelstructures} we thus get
\begin{theorem}
    \label{thm:colored_graphs}
    Let $\varphi: \colgraphs{c+1} \to \IN$ be a colored graph motif parameter, i.e.,
$\varphi(G)= \sum_{i=1}^s \alpha_i \cdot \indsub{H_i}{G}$
for pairwise non-isomorphic graphs $H_1,\ldots,H_s$ using the colors $\{1, \ldots, c\}$ and coefficients $\alpha_1 ,\ldots,\alpha_s \in \qqq$.
Then the evaluation problem $\eval{\varphi}$ is in $\promise\sharpPdottedcircle$ iff all coefficients $\alpha_1 ,\ldots,\alpha_s \in \qqq$ are nonnegative integers.
\end{theorem}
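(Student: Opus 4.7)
The plan is to reduce Theorem~\ref{thm:colored_graphs} to Theorem~\ref{thm:mixedrelstructures} via the encoding already sketched in the paragraph preceding the theorem statement. A colored graph on $c+1$ colors is encoded as a relational structure of type $\Delta = (2, 1, \ldots, 1)$ with $c$ unary relations, where the binary relation carries the edge set and the $i$-th unary relation marks vertices of color $i$ for $i \in \{1, \ldots, c\}$. A vertex is assigned color $0$ precisely when none of the unary relations contain it. Taking $P = (0, 0, \ldots, 0)$ makes the $P$-padding vertices exactly the isolated color-$0$ vertices, so a vertex absent from all relations corresponds to the single ``free'' color, which is the feature the theorem exploits.

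Next I will set $F$ to be the collection of one-vertex relational structures in which at least two of the unary relations hold, i.e.\ vertices assigned more than one color. Then $\forb{F}$ is exactly the image of the encoding: colored graphs in which each vertex has at most one color. I need to verify the three hypotheses of Theorem~\ref{thm:mixedrelstructures}: (i) $F$ is upwards closed because adding tuples to any relation preserves the two unary relations already set on the single vertex; (ii) each element of $F$ is $P$-pure since its vertex lies in unary relations and is therefore not $P$-padding; (iii) each element of $F$ is trivially irreducible because it has only one vertex, so the condition about pairs of distinct vertices is vacuous.

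Given these verifications, I translate $\varphi(G) = \sum_{i=1}^{s}\alpha_i\,\indsub{H_i}{G}$ into a $P$-pure $\forb{F}$-motif parameter $\varphi'$ by encoding each pattern $H_i$ as a relational structure $H_i'$. Since the $H_i$ only use colors from $\{1,\ldots,c\}$, every vertex of $H_i'$ belongs to exactly one unary relation and is thus not $P$-padding, so $\varphi'$ is $P$-pure. A colored-graph strong embedding must preserve colors, so it corresponds bijectively to a strong embedding of the associated relational structures, giving $\indsub{H_i}{G} = \indsub{H_i'}{G'}$ for every colored graph $G$ (with encoding $G'$) in $\forb{F}$. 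Finally, the oracles are interconvertible in polynomial time: an edge query for $G$ is a single binary-relation query to $G'$, and a color query for $G$ is answered by making $c$ unary queries to $G'$ (and vice versa). Hence $\eval{\varphi} \in \promise\sharpPdottedcircle$ iff $\eval{\varphi'} \in \promise\sharpPdottedcircle$, and Theorem~\ref{thm:mixedrelstructures} supplies the dichotomy $\varphi' \in \promise\sharpPdottedcircle \iff \alpha_1, \ldots, \alpha_s \in \IN$.

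I do not foresee a genuine obstacle; the proof is essentially bookkeeping. The one point requiring care is the promise framework: inputs whose encoding violates ``each vertex has at most one color'' are ``don't care'' instances on the relational-structure side, and they are already outside the domain of $\varphi$ on the colored-graph side, so Definition~\ref{def:promise-sharpp} matches up cleanly under the translation of oracles. With that confirmed, the reduction goes through and the theorem follows.
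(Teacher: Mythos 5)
Your reduction is exactly the paper's own proof: encode colored graphs as relational structures of type $(2,1,\ldots,1)$ with $P=(0,\ldots,0)$ and $F$ the one-vertex structures with at least two unary relations set, then invoke Theorem~\ref{thm:mixedrelstructures} together with the polynomial-time interconvertibility of the two oracle encodings. Your verification of the hypotheses (upward closure, irreducibility, $P$-purity of $F$) and your observation that the $P$-padding vertices are the \emph{isolated} color-$0$ vertices---which suffices since the patterns avoid color $0$ entirely---are correct and in fact slightly more careful than the paper's prose.
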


Note that while we restrict the number of colors within the theorem itself, the same result holds true if the number of colors is allowed to grow with $\varphi$, by simply choosing the number of colors for the target graph to be one bigger than the fixed number of colors used in the patterns of $\varphi$.

\section{Categorical Generalization}
\label{sec:category_general}

In the proofs of \S\ref{sec:graphs} we didn't use particularly many properties of ordered graphs to show our lower bound, except some structure of (induced) subobjects, a Ramsey property and the existence of set-instantiators.
The goal of this section is now to generalize the proof and make it more explicit what properties in detail are needed for our proof to work.

\subsection{Basic Category Theory}
\label{sec:category_basics}
We follow the definitions of~\cite{riehl2017category} and refer the reader to it should they want more than just these basic definitions.
A category $C$ consists of two collections:
\begin{itemize}
    \item A collection of \defn{objects}.
    \item A collection of \defn{morphisms} between objects.
\end{itemize}
such that:
\begin{itemize}
    \item Every morphism $f$ has a specified \defn{domain} and a \defn{codomain}.
    We write $f: a \to b$ if $f$ has domain $a$ and codomain $b$.
    Additionally we write $\domain f = a$ and $\codomain f = b$.
    \item For every object $a$ there is an \defn{identity morphism} $\identity_a: a \to a$.
    \item For every two morphisms $f: a \to b$ and $g: b \to c$, there is a specific \defn{composite morphism} $gf: a \to c$.
    \item For every morphism $f: a \to b$ both $f\identity_a$ and $\identity_bf$ are identical to $f$, i.e.\ the identity morphisms are the neutral elements of composition.
    \item For every triple of morphisms $f: a \to b$, $g: b \to c$ and $h: c \to d$ we have that $h(gf)$ and $(hg)f$ are identical, i.e.\ composition is associative.
\end{itemize}

A morphism $f: b \to c$ is called
\begin{itemize}
    \item a \defn{monomorphism}, if for any morphisms $g, h: a \to b$ with $fg = fh$ we already have $g = h$.
    We also say that $f$ is monic or a mono.
    \item an \defn{epimorphism}, if for any morphisms $g, h: c \to d$ with $gf = hf$ we already have $g = h$.
    We also say that $f$ is epic or an epi.
    \item an \defn{isomorphism}, if there is a morphism $g: c \to b$ with $fg = \identity_c$ and $gf = \identity_b$.
    In this case we call $b$ and $c$ \defn{isomorphic}, also written as $b \cong c$.
\end{itemize}

If for every two objects $a, b$ the collection of all morphism from $a$ to $b$ is a set\footnote{We don't want to to get into the set versus class discussion too much here. While some of our categories have a class worth of objects, they are essentially small, meaning there is only a set worth of non-isomorphic objects. Further all of our categories are locally small.}, then $C$ is called \defn{locally small}.
If this collection is even finite, then it is called \defn{locally finite}.

We call some category $D$ a \defn{subcategory} of $C$ if the objects of $D$ are a subcollection of the objects of $C$ and the morphisms of $D$ are a subcollection of the morphisms of $C$.
Note that this includes all relevant identity and composite morphisms in order for $D$ to be a category.
We say $D$ is a \defn{full subcategory} of $C$ if, additionally, for every two objects $a, b$ in $D$, it contains all morphisms with domain $a$ and codomain $b$ from $C$.

The last concept we need is the concept of finite coproducts.
For objects $a_1, \ldots, a_k$ we say that an object $b$, together with morphisms $\iota_1, \ldots, \iota_k$ where $\iota_i: a_i \to b$, called the \defn{coproduct injections}\footnote{These coproduct injections do not have to be injective, nor mono}, is the \defn{coproduct} of $a_1, \ldots, a_k$, if for all morphisms $f_1, \ldots, f_k$ with $f_i: a_i \to c$ there is a unique morphism $f:b \to c$ such that the following diagram commutes:
\begin{center}
    \begin{tikzcd}
                                                                                    & c                                                       &                                                                                  \\
                                                                                    & b \arrow[u, "f", dotted] &                                                                                  \\
    a_1 \arrow[ru, "\iota_1" description] \arrow[ruu, "f_1" description, bend left] & \ldots                                                  & a_k \arrow[lu, "\iota_k" description] \arrow[luu, "f_k" description, bend right]
    \end{tikzcd}
\end{center}
The object $b$ is unique up to isomorphism, so we write $\bigsqcup_{i=1}^k a_i$ for it.
We also write $\bigsqcup_{i=1}^kf_i$ for the morphism $f$.

\subsection{Counting in Categories}

Our objects of consideration are now objects of some category $C$.
To specify the subobjects we are counting, we designate a class of morphisms from $C$:
\begin{definition}
    \label{def:Msub}
    Let $C$ be a locally small category and let $\cM$ be some class of morphisms in $C$ that contains all isomorphisms and is closed under composition.
    For any objects $a, b$ we call morphisms $a \to b$ from $\cM$ the $\cM$-subobjects of $a$ under $b$.
    We identify morphisms $f, f': a \to b$ if there is an isomorphism $g: a \to a$ with $f' = gf$.
    The set of all $\cM$-subobjects of $a$ under $b$ is denoted by $\Msub{a}{b}$, while the class of all $\cM$-subobjects under $b$ is denoted by $\poset{b}$.
    If all such $\poset{b}$ are small (i.e.\ a set), then we call $C$ $\cM$-well-powered and if they are all finite $C$ is called finitely $\cM$-well-powered.
    For notational convenience, we write $\domain \poset{b} := \{\domain f \,:\, f \in \poset{b}\}$.

    Let $f: a \to b$ and $f': a' \to b$ be $\cM$-subobjects under $b$.
    We write $f' \prec f$ if there is some morphism $g$ with $f' = gf$.
\end{definition}

The class $\cM$ in here should be thought of as some class of special monomorphisms.
In the example of undirected graphs $\cM$ is the set of strong graph embeddings, i.e.\ injective graph homomorphisms that preserve both edges and non-edges.
The poset $\poset{G}$ is then the set of all induced subgraphs of $G$ (represented via their graph embeddings), while $\domain \poset{G}$ contains all graphs with isomorphism types that are among the induced subgraphs of $G$.

Combined with a second class of morphisms $\cE$, we obtain a factorization system:
\begin{definition}
    \label{def:factorization_system}
    Let $C$ be a category and let $\cE$ and $\cM$ be some classes of morphisms in $C$.
    We call ($\cE$, $\cM$) a factorization system if they fulfil the following properties:
    \begin{enumerate}
        \item Every morphism $f: a \to b$ factors as $f = me$ with $e \in \cE$ and $m \in \cM$.
        This factorization is unique, up to a unique isomorphism, i.e.\ if there is some other factorization $f = m'e'$ with $e' \in \cE$ and $m' \in \cM$, then there is a unique isomorphism $g$ such that the following diagramm commutes:

        \begin{center}
            \begin{tikzcd}
            a \arrow[r, "e" description] \arrow[d, "e'" description]                & \bullet \arrow[d, "m" description] \\
            \bullet \arrow[r, "m'" description] \arrow[ru, "g" description, dotted] & b                                 
            \end{tikzcd}
        \end{center}
        \item Both $\cE$ and $\cM$ contain all isomorphisms and are closed under composition.
    \end{enumerate}

    A factorization system is called proper, if all morphisms in $\cE$ are epi and all morphisms in $\cM$ are mono.
\end{definition}

One example of such a factorization system in the category of finite undirected graphs is given by choosing $\cE$ as the (vertex) surjective graph homomorphisms, and $\cM$ as the strong graph embeddings.
The $\cM$-subobjects of $H$ in $G$ are precisely the induced subgraphs of $G$ that are isomorphic to $H$.
See \S\ref{sec:cat:ord_graphs} for more details on this.

Except the properties directly visible in Definition \ref{def:factorization_system}, we need one derived property (see \cite{riehl2008factorization}[Lemma 1.13] for a proof):
\begin{fact}[Cancellation Properties]
    Let ($\cE$, $\cM$) be a factorization system and $f: a \to b$ and $g: b \to c$ be morphisms.
    Then we have the following two implications:
    \begin{itemize}
        \item If $gf \in \cE$ and $f \in \cE$, then $g \in \cE$.
        \item If $gf \in \cM$ and $g \in \cM$, then $f \in \cM$.
    \end{itemize}
\end{fact}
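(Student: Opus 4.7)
The plan is to prove each of the two cancellation properties by exploiting the uniqueness (up to unique isomorphism) of $(\cE,\cM)$-factorizations, which is the only non-trivial content of the definition available to us.

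For the first implication, assume $gf \in \cE$ and $f \in \cE$. I would factor $g = me$ with $e \in \cE$ and $m \in \cM$ using the existence of factorizations. Since $\cE$ is closed under composition, $ef \in \cE$, so $gf = m(ef)$ is a legitimate $(\cE,\cM)$-factorization of $gf$. On the other hand, $gf = \identity_c \circ gf$ is also such a factorization, using that $gf \in \cE$ by hypothesis and that identities lie in $\cM$. Uniqueness of factorization then yields an isomorphism $\varphi$ making the appropriate square commute, and the commutativity forces $m \circ \varphi = \identity_c$ and $\varphi \circ m = \identity_{\codomain e}$, so that $m$ itself is an isomorphism. Since $\cE$ contains all isomorphisms and is closed under composition, $g = me \in \cE$.

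For the second implication, the argument is dual. Assume $gf \in \cM$ and $g \in \cM$, and factor $f = me$ with $e \in \cE$ and $m \in \cM$. Since $\cM$ is closed under composition, $gm \in \cM$, so $gf = (gm)e$ is an $(\cE,\cM)$-factorization. The trivial factorization $gf = gf \circ \identity_a$ with $\identity_a \in \cE$ and $gf \in \cM$ provides a second one. By uniqueness, the connecting morphism is an isomorphism, and tracing the commuting square this time forces $e$ to be an isomorphism. Since $\cM$ contains all isomorphisms and is closed under composition, $f = me \in \cM$.

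The proof is essentially routine once one recognizes that trivial factorizations using identity morphisms provide the right second factorization to compare against. The one place requiring care is extracting that $m$ (respectively $e$) is itself an isomorphism rather than just fitting into a commuting diagram with one; this follows directly from writing out the two triangles in the uniqueness square and using that identity morphisms appear on the other side. No additional hypotheses beyond those listed in Definition~\ref{def:factorization_system} (in particular, no properness assumption) are needed.
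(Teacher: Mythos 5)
Your proof is correct, and it is self-contained in a way the paper is not: the paper states this Fact without proof and defers to an external reference (Riehl's notes, Lemma~1.13), where the standard argument derives cancellation from the orthogonality/unique-diagonal-fill-in characterization of factorization systems. You instead argue directly from the data actually given in Definition~\ref{def:factorization_system} (existence of factorizations, uniqueness up to a unique isomorphism, both classes containing all isomorphisms and closed under composition), by comparing the factorization $gf=m(ef)$ (resp.\ $gf=(gm)e$) against the trivial factorization $gf=\identity_c\circ(gf)$ (resp.\ $gf=(gf)\circ\identity_a$) and reading off that the middle morphism $m$ (resp.\ $e$) is an isomorphism. This is a legitimate and, in context, arguably cleaner route, since the paper never introduces orthogonality; and you are right that properness is not needed. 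One small imprecision: in the first case the commuting square gives you only one of the two triangle identities directly (namely $m\circ\varphi=\identity_c$, or, with the roles of the two factorizations swapped, simply $\varphi=m$); the other identity $\varphi\circ m=\identity_{\codomain e}$ does not come from commutativity alone but from the fact that the definition already declares the comparison morphism $\varphi$ to be an isomorphism. Since $m\circ\varphi=\identity_c$ together with $\varphi$ invertible (or $\varphi=m$ in the swapped orientation) immediately makes $m$ an isomorphism, this does not affect the validity of the argument, but the attribution should be stated accurately.
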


Note that this implies that in a ($\cE$, $\cM$) factorization system, for $\cM$-subobjects $f$ and $f'$, whenever $f' \prec f$, i.e.\ if there is some morphism $g$ with $f' = f g$, then $g$ is already in $\cM$.
In other words:
Being a $\cM$-subobject is well-behaved under composition and cancellation.

Further, there are no non-trivial morphisms $f: a \to a$ in $\cM$:
\begin{lemma}
    \label{lem:cat_self_Ms_are_isos}
    Let $C$ be a locally small, finitely $\cM$-well-powered category and a proper ($\cE$, $\cM$)-factorization system.
    Then any morphism $f: a \to a$ in $\cM$ is already an isomorphism.
\end{lemma}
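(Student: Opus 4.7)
The approach is a pigeonhole argument on the finite poset $\poset{a}$, combined with the cancellation property afforded by monicity of $\cM$-morphisms in a proper factorization system.

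First I would consider the sequence of iterates $f^n = f \circ \cdots \circ f$ ($n$ copies). Since $\cM$ is closed under composition, each $f^n \colon a \to a$ lies in $\cM$ and therefore represents an element $[f^n] \in \poset{a}$. By finite $\cM$-well-poweredness, $\poset{a}$ is a finite set, so by the pigeonhole principle there exist integers $1 \leq m < n$ with $[f^m] = [f^n]$. Unfolding the equivalence relation that defines $\poset{a}$ then yields an isomorphism $\phi \colon a \to a$ with $f^m = f^n \circ \phi$ (the other convention is handled symmetrically).

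Next, rewriting $f^n = f^m \circ f^{n-m}$ turns the equality $f^m = f^n \circ \phi$ into $f^m = f^m \circ (f^{n-m} \circ \phi)$. Because the factorization system is proper, every morphism of $\cM$ is a monomorphism, so $f^m$ is mono and can be cancelled on the left, giving $f^{n-m} \circ \phi = \id_a$. Setting $k := n-m \geq 1$ and $s := f^{k-1} \circ \phi$, this reads $f \circ s = \id_a$, so $f$ has a right inverse. From $f \circ (s \circ f) = (f \circ s) \circ f = f = f \circ \id_a$ and monicity of $f$, one then deduces $s \circ f = \id_a$. Hence $s$ is a two-sided inverse of $f$, proving that $f$ is an isomorphism.

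The argument is purely formal and relies only on three ingredients already available at this point in the paper: finite $\cM$-well-poweredness (to apply pigeonhole), closure of $\cM$ under composition (to keep the iterates $f^n$ inside $\cM$), and monicity of $\cM$-morphisms (to justify the cancellations). The mildest technical obstacle is fixing the convention for the equivalence relation on $\cM$-subobjects so that the cancellation direction is the right one; but either convention works after swapping the roles of $f^m$ and $f^n$ in the pigeonhole step.
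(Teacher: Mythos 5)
Your proposal is correct and follows essentially the same route as the paper: iterate $f$, use closure of $\cM$ under composition plus finite $\cM$-well-poweredness and pigeonhole to find two iterates representing the same $\cM$-subobject, and then cancel using that morphisms in $\cM$ are monic in a proper factorization system. Your write-up of the final cancellation (producing the two-sided inverse $s = f^{k-1}\phi$) is in fact a bit cleaner than the paper's, which contains some index typos at that step.
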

\begin{proof}
    We consider the set of morphisms $\{f^n \,:\, n \in \IN\}$.
    Since $\cM$ is closed under composition, all these morphisms are in $\cM$ and thus are $\cM$-subobjects of $a$.
    By the pidgeonhole principle, combined with $C$ being finitely $\cM$-well-powered, there must be some $n < m \in \IN$ with $f^n$ and $f^m$ being the same $\cM$-subobject, i.e.\ $f^n = f^m g$ with some isomorphism $g: a \to a$.
    This leads to the following diagram with $f^n\identity_a = f^m g$:
    \begin{center}
        \begin{tikzcd}
        a \arrow[r, "f^{m-n}g", shift left] \arrow[r, "\identity_a"', shift right] & a \arrow[r, "f^n"] & a
        \end{tikzcd}
    \end{center}
    Due to $f^n \in \cM$ and the factorization system being proper, we conclude that $f^n$ is a monomorphism and $f^{n-m} g = \identity_a$, which is equivalent to $f^{n-m} = g^{-1}$.
    It is now obvious that $f^{n-m-1}g = gf^{n-m-1}$ is the inverse of $f$.
\end{proof}

Formally, it is not clear how to attach an object from some category $C$ as an oracle to an NTM.
In order to do so, we denote by $\catsize{C}{j}$ some set of objects from $C$ that can be encoded as a $\polylog(j)$-bit oracle, i.e.\ as a subset of $\{0,1\}^{\bitsize(j)}$ for some efficiently computable function $\bitsize \in \polylog(j)$.
We require that for every object $c$ in $C$, there is some $j \in \IN$ and a $c' \in \catsize{C}{j}$, such that $c$ and $c'$ are isomorphic.

Given a nondeterministic Turing machine $M$ and an oracle object $c \in \catsize{C}{j}$,
we are interested in the number of accepting paths of $M$ when given oracle access to the $j$-bit oracle encoding $c$.
We define the number of accepting paths of $M$ on input $j \in \IN$ (given in binary\footnote{The precise nature of how $j$ is given, what the bitsize of the encodings of elements from $\catsize{C}{j}$ is and even whether the oracle is binary or not does not matter. The main importance is, that in polynomial time in the length of the input $M$ can query the oracle at polynomially many positions out of exponentially many possible positions.}) with oracle access to $c$ as $\#\acc_{M^c}(j)$.
Further we will also call $\#\acc_{M^c}(j)$ a computation.

Our setup has two categories $C$ and $P$, the ambient category and the category of \emph{pure} objects respectively.
While we list here most of the general properties of both categories we assume in general, each lemma still explicitly lists the properties it uses to make them easier to follow.
Both categories are locally small and $P$ is a full subcategory\footnote{Reminder, a subcategory is full if for all objects $a$ and $b$ that are in $P$, all morphisms between them in $C$ are also part of $P$.} of $C$, closed under isomorphisms, i.e.\ if $a$ and $b$ are isomorphic objects in $C$, they are either both included in $P$ or neither of them is.
Moreover there is a compatible proper ($\cE$, $\cM$)-factorization system of $C$ and $P$.
By compatible we mean some slight abuse of notation:
Not all morphisms in $\cM$ need to be part of $P$, however the restrictions of $\cE$ and $\cM$ to the morphisms in $P$ form themselves a proper factorization system of $P$.
To simplify notation we also call this restriction a proper ($\cE$, $\cM$)-factorization system of $P$.
Furthermore $C$ and $P$ are both finitely $\cM$-well-powered and both have the joint $\cM$-embedding property:
\begin{definition}
    \label{def:cat_joint_embedding}
    A locally small category $C$ with a ($\cE$, $\cM$)-factorization system has the joint $\cM$-embedding property if for every two objects $a$ and $b$, there is an object $c$ such that there are $\cM$-subobjects $f: a \to c$ and $g: b \to c$.
\end{definition}
Note that the resulting objects for the joint $\cM$-embedding property of $P$ also have to be contained in $P$.

The $\cM$-well-poweredness now allows us to define motif parameters $\varphi$ parallel to Definition~\ref{def:graph-motif-parameter} as linear combinations of $\cM$-subobject counting functions $\cntMsub{c}{\starpad}$ for fixed objects $c$ in $C$.
Note that whenever $c \cong c'$, then clearly $\cntMsub{c}{\starpad} = \cntMsub{c'}{\starpad}$, so the different patterns in the linear combination are again always assumed to be pairwise non-isomorphic.
We call a pattern $P$-pure if they are in $P$ and consequently we call $\varphi$ $P$-pure if all its patterns are $P$-pure.
As usual $\eval{\varphi}$ denotes the counting function lifting $\varphi$ to the corresponding promise type-2 function.

As usual, for some subclass $D$ of objects from $C$ that is closed under isomorphisms, we say that a motif parameter $\varphi$ is \defn{$D$-good} if all its coefficients are nonnegative integers and all its patterns are from $D$.
If any such coefficient is negative it is instead called \defn{$D$-bad}.
If the domain $D$ is clear from the context, we will simply call them \defn{good} and \defn{bad} respectively.

Before we get into the proof of our separation, we want to explain why we are restricting our motif parameters to be linear combinations.
If there are only finitely many morphisms in $\cE$ out of every object, then every monomial $\prod_{i=1}^k \cntMsub{a_i}{t}$ can be written as a linear combination with non-negative integer coefficients.
By extension, every polynomial over $\cM$-subobject counting functions can be written as a linear combination.
We prove the statement for monomials by giving a characterization as sets, the wanted linearization then follows by taking the cardinality of all sets involved.
\begin{lemma}
    \label{lem:cat:linearize}
    Let $C$ be a locally small category with finite coproducts and an ($\cE$, $\cM$)-factorization system.
    Then, assuming the axiom of choice, for any objects $a_1, \ldots, a_k, t$ from $C$ we have $\bigtimes_{i=1}^k \Msub{a_i}{t} \cong \bigsqcup_{b} \Msub{b}{t} \times \{ (h_1, \ldots, h_i) \in \bigtimes_{i=1}^k \Msub{a_i}{b} \mid \bigsqcup_{i=1}^k h_i \in \cE\}$ where $b$ ranges over all non-isomorphic objects in $C$.
\end{lemma}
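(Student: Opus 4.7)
The plan is to exhibit the bijection directly via the factorization system. Starting from a tuple of representatives $(f_1, \ldots, f_k)$ with $f_i : a_i \to t$ in $\cM$, I will cotuple them into the coproduct morphism $\bigsqcup_{i=1}^k f_i : \bigsqcup_{i=1}^k a_i \to t$ and then apply the $(\cE, \cM)$-factorization to write it as $m \circ e$, with $e \in \cE$ having codomain some object $b$ and $m : b \to t$ in $\cM$. The universal property of the coproduct identifies $e$ uniquely with $\bigsqcup_{i=1}^k h_i$ for morphisms $h_i : a_i \to b$ satisfying $m \circ h_i = f_i$. The cancellation property for $(\cE,\cM)$-factorizations (if $gf \in \cM$ and $g \in \cM$ then $f \in \cM$), applied with $g = m$ and $f = h_i$, will force $h_i \in \cM$, so that each $h_i$ represents an element of $\Msub{a_i}{b}$. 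This produces the forward map into the summand of the right-hand side indexed by $b$.

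The inverse map will simply be composition: given $(b, m, (h_i))$ with $\bigsqcup_{i=1}^k h_i \in \cE$, I set $f_i := m \circ h_i$, which lies in $\cM$ because $\cM$ is closed under composition. Mutual inversion is then immediate: one direction returns $f_i = m \circ h_i$ by construction, while for the other, $m \circ \bigsqcup_{i=1}^k h_i$ is already an $(\cE,\cM)$-factorization of $\bigsqcup_{i=1}^k f_i$, so the uniqueness clause in Definition~\ref{def:factorization_system} recovers a factorization canonically isomorphic to the original $(m, \bigsqcup h_i)$.

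The main technical point, and where I expect the actual work to lie, is checking that these maps respect all relevant equivalence relations. If I replace $f_i$ by an equivalent representative obtained by precomposition with an automorphism of $a_i$, then $\bigsqcup f_i$ changes only by the induced automorphism of $\bigsqcup a_i$; this alters the $\cE$-part by an isomorphism, modifying each $h_i$ within its equivalence class in $\Msub{a_i}{b}$ while leaving $m$ untouched. Conversely, the factorization determines $b$ only up to unique isomorphism, so the axiom of choice is used to select a canonical representative in each isomorphism class and to transport the data $(m, (h_i))$ along the chosen isomorphism into the summand of that class. The hard part will be writing this bookkeeping cleanly, because four layers of identification (automorphisms of each $a_i$, automorphisms of the chosen $b$, uniqueness-up-to-iso of the factorization, and indexing by isomorphism classes of $b$) interact simultaneously; however, everything is forced by the universal properties of the coproduct and of the factorization once the data is set up correctly.
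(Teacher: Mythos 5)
Your proposal is correct and follows essentially the same route as the paper's proof: cotuple the representatives, apply the $(\cE,\cM)$-factorization, use the cancellation property to get each $h_i \in \cM$ (with $e = \bigsqcup_{i=1}^k h_i \in \cE$), invert by composing with $m$, and handle the identifications via designated representatives chosen with the axiom of choice together with uniqueness of the factorization. The paper's own write-up does exactly this bookkeeping (checking well-definedness under precomposition with automorphisms of the $a_i$ and fixing designated objects and morphisms per isomorphism class), so no further ideas are needed beyond what you outline.
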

\begin{proof}
    For each isomorphism class pick one designated object $b$ and for each $\cM$-subobject of $b$ in $t$, choose one designated morphism from $\cM$, both using the axiom of choice.
    
    We now directly give the bijection $\phi: \bigtimes_{i=1}^k \Msub{a_i}{t} \to \bigsqcup_{b} \Msub{b}{t} \times \{ (h_1, \ldots, h_i) \in \bigtimes_{i=1}^k \Msub{a_i}{b} \mid \bigsqcup_{i=1}^k h_i \in \cE\}$:\\
    Given $\cM$-subobjects $f_1, \ldots, f_k$ with $f_i \in \Msub{a_i}{t}$ for each $i$, construct $f = \bigsqcup_{i=1}^k f_i$.
    Using the ($\cE$, $\cM$)-factorization, factorize $f$ through some object $b$ via $e: \bigsqcup_{i=1}^k a_k \to b$ from $\cE$ and $m: b \to t$ from $\cM$.
    W.l.o.g.\footnote{Remember, $\cE$ and $\cM$ are both closed under composition and contain all isomorphisms. Further, the factorization is unique up to isomorphisms.}\ we can assume both $b$ and $m$ to be the corresponding designated object and morphism chosen by the axiom of choice.
    Since the factorization is unique up to isomorphism, this makes $(b, m)$ unique.
    This leads to the following commutative diagram for each $i$:
    \begin{center}
        \begin{tikzcd}
                                      & t                                                                                     &                                                         \\
        b \arrow[ru, "m" description] &                                                                                       & a_i \arrow[lu, "f_i" description] \arrow[ld, "\iota_i"] \\
                                      & \bigsqcup_{i=1}^k a_i \arrow[lu, "e" description] \arrow[uu, "f = \bigsqcup_{i=1}^k f_i" description] &                                                        
        \end{tikzcd}
    \end{center}
    Since $f_i$ and $m$ are both in $\cM$, this implies that $h_i := e \iota_i$ is in $\cM$ by the cancellation property of a factorization system.
    Due to $e = \bigsqcup_{i=1}^k h_i$, we see that $\bigsqcup_{i=1}^k h_i$ is in $\cE$.
    Our bijection thus maps $(f_1, \ldots, f_k)$ to $(m, (h_1, \ldots, h_k))$.
    Note that this is well-defined: if there are morphisms $f'_1, \ldots, f'_k$ from $\cM$ and isomorphism $g_1, \ldots, g_k$ with $f'_i = f_i g_i$, then we get $h'_i = h_i g_i$ and thus output the same $\cM$-subobjects.

    To define $\Phi^{-1}$, let $b$ be some designated object in $C$, let $m$ be the designated morphism of some $\cM$-subobject of $b$ in $t$ and let $(h_1, \ldots, h_k) \in \bigtimes_{i=1}^k \Msub{a_i}{b}$ with $\bigsqcup_{i=1}^k h_i \in \cE$.
    Since $\cM$ is closed under composition, we see that each of the $mh_i$ is in $\cM$.
    The bijection then outputs $(mh_1, \ldots, mh_k)$.
    This function is well-defined by the same argument as above.

    Bijectivity now follows from the fact that $\bigsqcup_{i=1}^k f_i = m \bigsqcup_{i=1}^k h_i$.
\end{proof}

Even more, we can even prove that the coefficients in the linear combination are non-negative integers if the polynomial, given in the binomial basis, has non-negative integer coefficients (potentially leading to negative coefficients in front of the polynomial in monomial basis).
For example $\varphi(t) = \binom{\cntMsub{a}{t}}{2} = \frac{(\cntMsub{a}{t})^2}{2} - \frac{\cntMsub{a}{t}}{2}$ can be written as a linear combination of $\cM$-subobject counting functions with non-negative integer coefficients.

\begin{corollary} \label{cor:cat:lin}
    Let $C$ be a locally small category with finite coproducts and a proper ($\cE$, $\cM$)-factorization system.
    Then, assuming the axiom of choice, for any objects $a$ and $t$ from $C$ we have $\binom{\Msub{a}{t}}{k} \cong \bigsqcup_{b} \Msub{b}{t} \times \{ (h_1, \ldots, h_i) \in \binom{\Msub{a}{b}}{k} \mid \bigsqcup_{i=1}^k h_i \in \cE\}$ where $b$ ranges over all non-isomorphic objects in $C$.
\end{corollary}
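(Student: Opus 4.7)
The plan is to derive the corollary directly from Lemma~\ref{lem:cat:linearize}, specialized to $a_1 = \cdots = a_k = a$, by restricting the bijection $\phi$ of that lemma to $k$-tuples with pairwise distinct entries and then quotienting by the natural $S_k$-action on both sides.

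First, I would observe that $S_k$ acts on $\bigtimes_{i=1}^k \Msub{a}{t}$ by permuting coordinates, and on each summand of the right-hand side of Lemma~\ref{lem:cat:linearize} by permuting the entries of the tuple $(h_1,\ldots,h_k)$ while fixing the $\Msub{b}{t}$ factor. The bijection $\phi$ produced in the proof of Lemma~\ref{lem:cat:linearize} is $S_k$-equivariant: applying $\sigma \in S_k$ to the input precomposes $\bigsqcup_{i=1}^k f_i$ with the automorphism $\tau_\sigma$ of $\bigsqcup_{i=1}^k a$ induced by $\sigma$, so the designated $(\cE,\cM)$-factorization through $(b,m)$ is unchanged while the $h$-tuple is permuted by $\sigma$. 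Consequently, the defining condition $\bigsqcup_{i=1}^k h_i \in \cE$ is invariant under $S_k$, since $\cE$ is closed under composition with isomorphisms, and therefore descends to a well-defined condition on unordered $k$-subsets, as required by the statement.

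Second, I would check that $\phi$ restricts to a bijection between tuples with pairwise distinct entries on both sides, i.e.\ the $f_i$ are pairwise distinct $\cM$-subobjects of $t$ if and only if the $h_i$ are pairwise distinct $\cM$-subobjects of $b$. This is the main (and only truly nontrivial) step, and it is precisely where properness of the factorization system enters. If $h_i \sim h_j$, say $h_i = h_j \alpha$ for some automorphism $\alpha$ of $a$, then $f_i = m h_i = m h_j \alpha = f_j \alpha$, so $f_i \sim f_j$. Conversely, if $f_i = f_j \alpha$ for some automorphism $\alpha$, then $m h_i = m h_j \alpha$, and since $m \in \cM$ is a monomorphism by properness, we may cancel on the left to get $h_i = h_j \alpha$, so $h_i \sim h_j$.

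Finally, I would take $S_k$-quotients of the restricted bijection on each side. This identifies pairwise-distinct ordered $k$-tuples with unordered $k$-element subsets and delivers exactly the asserted isomorphism between $\binom{\Msub{a}{t}}{k}$ and $\bigsqcup_b \Msub{b}{t} \times \{(h_1,\ldots,h_k)\in \binom{\Msub{a}{b}}{k} \mid \bigsqcup_{i=1}^k h_i \in \cE\}$. The remaining verification is only symmetry bookkeeping; the sole place where new structural input is used beyond Lemma~\ref{lem:cat:linearize} is the cancellation step that requires $m$ to be mono, which is why properness of the factorization system is assumed in the hypothesis.
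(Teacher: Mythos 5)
Your proposal is correct and takes essentially the same route as the paper: specialize Lemma~\ref{lem:cat:linearize} to $a_1=\cdots=a_k=a$, use that $m\in\cM$ is mono (properness) to show the $f_i$ are pairwise distinct $\cM$-subobjects iff the $h_i$ are, and restrict the bijection to these tuples before passing to unordered $k$-subsets. The explicit $S_k$-equivariance bookkeeping you include is left implicit in the paper's one-line proof, but it is the same argument.
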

\begin{proof}
    In the proof of Lemma~\ref{lem:cat:linearize} in the construction of the bijection, whenever $\cM$ only contains monos, from $f_i = f_j$, we automatically obtain $h_i = h_j$ and vice-versa.
    Restricting to only the cases where the morphisms are pairwise disjoint gives the wanted bijection.
\end{proof}

We now continue on to the proof of our separation by first generalizing the notion of set-instantiators:
\begin{definition}[Set-instantiator against $(M,c,\varphi)$]
\label{def:cat_setinstantiator}
Let $C$ and $P$ be locally small, finitely $\cM$-well-powered categories with a compatible ($\cE$, $\cM$)-factorization system and where $P$ is a full subcategory of $C$ that is closed under isomorphism.

Let $M$ be a nondeterministic Turing machine,
let $c \in C$ and
let $\varphi$ be a $P$-pure motif parameter.

Let  $\top$  be a symbolic top element above the $\cM$-subobject poset $\poset{c}$, i.e., $\top \not\prec f$ for all $\cM$-subobjects $f \in \poset{c}$.
A  \defn{set-instantiator}  $\SI$  is a triple of

$\circ$ \ some $j \in \IN$, 

$\circ$ \ an instantiation function  $\inst_\SI : \poset{c} \to \catsize{C}{j}$,   and

$\circ$ \  a perception function  $\perc_\SI : \{0,1\}^* \to \poset{c} \cup \{\top\}$,

\smallskip
\noindent
such that the following property holds for all $\cM$-subobjects $f \in \poset{c}$.

\smallskip
$\bu$ \ $\tau\in\{0,1\}^*$ is an accepting path for the computation  $\#\acc_{M^{\inst_\SI(f)}}(j)$
iff $\perc_\SI(\tau) \prec f$, and 

$\bu$ \ $\varphi(\inst_\SI(f))) = \varphi(\domain f)$.
\end{definition}

Again, we think of perceptions of an accepting computation path of $M$ to be the $\cM$-subobject of $c$ that is observed by $M$ through the oracle queries, while computation paths that do not accept are given perception $\top$.
Unfortunately there is no generalized construction for set-instantiators as this has to heavily depend on the actual encoding of the objects as oracle layers.
For example consider an encoding where every object is encoded as an oracle layer containing exactly one $1$.
Then $M$ can determine all information about the object by guessing the position of that single $1$ and then querying the oracle once.
However, this encoding is not particularly natural and set-instantiators often do exist when encoded naturally.
See sections \S\ref{sec:cat:finvec} and \S\ref{sec:cat:parsets} for two more set-instantiators.

Again, we have the problem that for $\cM$-subobjects $f$ and $g$ with $\domain f \cong \domain g$ we could have $\#\acc_{M^{\inst_\SI(f)}}(j) \neq \#\acc_{M^{\inst_\SI(g)}}(j)$, which we do not want to happen in a good function.
We thus want to construct an instantiation function that no longer has this problem.
We achieve this by creating a very large set-instantiator and then restricting ourselves to some small part of the set-instantiator where we can enforce this property.

Ensuring this structure is achieved by requiring $C$ to have the Ramsey property, relative to $\cM$-subobjects.
\begin{definition}
    \label{def:cat_ramsey_prop}
    A locally small category $C$ with a ($\cE$, $\cM$)-factorization system is called $\cM$-Ramsey, if for every $a, b \in C$ and $t \in \IN$, there is some $c \in C$, such that for every $\Phi: \Msub{a}{c} \to \{0, \ldots, t\}$, there is a $f_\Phi \in \Msub{b}{c}$ with the property that $\Phi$ is constant when restricted to inputs from $\{g \in \Msub{a}{c} \,: \, g \prec f_\Phi\}$.
\end{definition}
See~\cite{graham1972ramsey} for some categorical conditions that imply the Ramsey property.

We extend this to colorings of all $\cM$-subobjects of $b$, not just those with their domain being $a$:
\begin{proposition}[Ramsey Theorem]
    \label{pro:cat_ramseyposet}
    Let $C$ be an $\cM$-Ramsey, locally small and finitely $\cM$-well-powered category with a ($\cE$, $\cM$)-factorization system.
    
    Let $b \in C$ and let $t \in \IN$ be fixed.
    Then there is a $c \in C$, such that for every $\Phi: \poset{c} \to \{0, \ldots, t\}$ there is a $f_\Phi \in \Msub{b}{c}$ with the property, if $g, h \prec f_\Phi$ with $\domain g \cong \domain h$, then $\Phi(g) = \Phi(h)$.
\end{proposition}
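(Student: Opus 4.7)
The plan is to mirror the proof of Proposition~\ref{pro:graphs_ramseyposet}, iterating the basic Ramsey property of Definition~\ref{def:cat_ramsey_prop} once per isomorphism class of domain appearing in $\poset{b}$. Since $C$ is finitely $\cM$-well-powered, there are only finitely many such classes; pick representatives $a_1,\ldots,a_k$, set $b_0 := b$, and inductively for $i = 1,\ldots,k$ let $b_i$ be the object produced by applying the $\cM$-Ramsey property to the triple $(a_i, b_{i-1}, t)$. Take $c := b_k$.

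Given a coloring $\Phi: \poset{c} \to \{0,\ldots,t\}$, I would construct $f_\Phi$ by descending induction over $i = k, k-1, \ldots, 1$. Starting from $\Phi^{(k)} := \Phi$ on $\poset{b_k}$, at step $i$ I apply the Ramsey property of $b_i$ (which was built from $(a_i, b_{i-1})$) to the restriction $\Phi^{(i)}|_{\Msub{a_i}{b_i}}$, obtaining $h_i \in \Msub{b_{i-1}}{b_i}$ on which this restriction is a constant value $v_i$. I then define the pullback coloring $\Phi^{(i-1)}: \poset{b_{i-1}} \to \{0,\ldots,t\}$ by $\Phi^{(i-1)}(g) := \Phi^{(i)}(h_i g)$, which is well-defined because $h_i g \in \cM$ (closure of $\cM$ under composition) and because precomposing $g$ by an isomorphism of $\domain g$ only precomposes $h_i g$ by the same isomorphism, preserving the $\cM$-subobject equivalence class. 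Finally, set $f_\Phi := h_k h_{k-1} \cdots h_1 \in \Msub{b_0}{b_k} = \Msub{b}{c}$.

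The verification proceeds as follows. For any $g \prec f_\Phi$, write $g = f_\Phi \alpha$; by the cancellation property of the proper $(\cE,\cM)$-factorization system applied to $g, f_\Phi \in \cM$, the morphism $\alpha$ lies in $\cM$, so $\alpha \in \poset{b}$ and hence $\domain g \cong a_i$ for some $i$. Define $\tilde g := h_i h_{i-1} \cdots h_1 \alpha$, which lies in $\Msub{a_i}{b_i}$ by the same cancellation reasoning applied stepwise; then $g = (h_k \cdots h_{i+1}) \tilde g$ and $\tilde g = h_i (h_{i-1} \cdots h_1 \alpha) \prec h_i$. Unfolding the pullback definitions gives $\Phi(g) = \Phi^{(i)}(\tilde g) = v_i$, depending only on $i$. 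Hence any two $g, h \prec f_\Phi$ with $\domain g \cong \domain h$ satisfy $\Phi(g) = \Phi(h)$, as required.

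I expect the main obstacle to be the equivalence-class bookkeeping of $\cM$-subobjects (morphisms identified up to isomorphism of the domain): this must be shown to be compatible with each pullback coloring and with the cancellation step extracting $\alpha$ from $g = f_\Phi \alpha$. Properness of the factorization system is precisely what is needed to invoke cancellation, and finite $\cM$-well-poweredness both keeps the iteration finite and ensures that the list $a_1,\ldots,a_k$ really enumerates all possible domain types of $\cM$-subobjects of $f_\Phi$. Once this bookkeeping is in place, the argument is a direct categorical transcription of the iterated-Ramsey proof of Proposition~\ref{pro:graphs_ramseyposet}.
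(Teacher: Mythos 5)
Your proposal is correct and takes essentially the same route as the paper: the paper's proof is exactly the one-line instruction to iterate the $\cM$-Ramsey property over the finitely many domain types occurring in $\poset{b}$ (finiteness from finite $\cM$-well-poweredness), and your construction of the chain $b_0,\dots,b_k$, the pulled-back colorings $\Phi^{(i)}$, and the composite $f_\Phi = h_k\cdots h_1$ is the straightforward spelling-out of that induction, with the subobject bookkeeping and the cancellation step handled correctly. One tiny quibble: the cancellation Fact is stated in the paper for arbitrary ($\cE$,$\cM$)-factorization systems, so properness is not actually needed for extracting $\alpha \in \cM$ from $g = f_\Phi\alpha$, which matters only insofar as the proposition itself does not assume properness.
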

\begin{proof}
    The $\poset{b}$ is finite due to $C$ being finitely $\cM$-well-powered.
    Now inductively apply the property that $C$ is $\cM$-Ramsey on $\domain \poset{b}$.
\end{proof}

This theorem can now be used to ensure that for every polynomial-time NTM there must be some small\footnote{Small in the colloquial sense of small, meaning some ``small'' finite set, not in the sense of category theory where small simply means a set.} set of inputs where the NTM behaves like a good function.
The instantiation function for the inputs where the NTM behaves like a good function now doesn't instantiate individual subobjects anymore, but instead only instantiates objects (the domains of the subobjects).
This change is natural since the computed function is now locally a motif parameter and thus invariant under isomorphisms, i.e.\ the precise subobject doesn't matter, only its domain).

\begin{lemma}
    \label{lem:cat_forceGood}
    Let $C$ and $P$ be locally small and finitely $\cM$-well-powered categories with a compatible ($\cE$, $\cM$)-factorization system, where $P$ is a full subcategory of $C$ that is closed under isomorphisms and where $C$ is $\cM$-Ramsey.
    
    Let $\varphi$ be a $P$-pure motif parameter, let $M$ be any nondeterministic polynomial-time Turing machine computing $\eval{\varphi}$ and let $b \in C$.
    If for every $c \in C$ there are set-instantiators against $(M, c, \varphi)$,
    then there is some $j \in \IN$, a function $\inst: \domain \poset{b} \to \catsize{C}{j}$ and a $\domain \poset{b}$-good function $\Psi$ with $\#\acc_{M^{\inst(a)}}(j) = \Psi(a)$ and $\varphi(\inst(a)) = \varphi(a)$ for every $a \in \domain \poset{b}$.
\end{lemma}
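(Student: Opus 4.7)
My plan is to mirror the proof of Lemma~\ref{lem:graphs_forceGood}, using Proposition~\ref{pro:cat_ramseyposet} in place of the graph Ramsey theorem and Lemma~\ref{lem:cat_self_Ms_are_isos} to relate $\cM$-subobjects with isomorphism classes. First, I would invoke Proposition~\ref{pro:cat_ramseyposet} on $b$ with a parameter $t$ chosen large enough to dominate the finitely many possible values of $\#\acc_{M^\bullet}(j)$, obtaining an object $c \in C$ with the stated Ramsey property. By hypothesis, there is a set-instantiator $\SI = (j, \inst_\SI, \perc_\SI)$ against $(M, c, \varphi)$. I would then define $\Phi : \poset{c} \to \IN$ by $\Phi(f) := \#\acc_{M^{\inst_\SI(f)}}(j)$; the perception property of $\SI$ yields
\[
\Phi(f) \ = \ \sum_{g \prec f} N(g), \qquad N(g) \ := \ \bigl|\{\tau \in \{0,1\}^* : \perc_\SI(\tau) = g\}\bigr|.
\]

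The Ramsey conclusion now hands us $f_\Phi \in \Msub{b}{c}$ such that $\Phi(g)$ depends only on the isomorphism class $[\domain g]$ for every $g \prec f_\Phi$. The key combinatorial step is to propagate this isomorphism-invariance from $\Phi$ to $N$. I plan to do so by induction on isomorphism classes $[a]$, ordered by strict $\cM$-subobject relation (a well-founded order by finite $\cM$-well-poweredness). For minimal $[a]$, Lemma~\ref{lem:cat_self_Ms_are_isos} forces $\{g' \prec g : g' \neq g\} = \emptyset$ for any $g$ with $[\domain g] = [a]$, so $N(g) = \Phi(g)$ already depends only on $[a]$. For the inductive step, every $g' \prec g$ with $g' \neq g$ has $[\domain g'] < [\domain g]$, so
\[
N(g) \ = \ \Phi(g) \ - \ \sum_{[a'] < [a]} \cntMsub{a'}{\domain g} \cdot \alpha_{[a']}
\]
(where $\alpha_{[a']}$ denotes the common value of $N$ on domain type $[a']$, provided by the inductive hypothesis) expresses $N(g)$ as a quantity that depends only on $[a]$, since $\cntMsub{a'}{\domain g}$ depends only on $[\domain g]$. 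Setting $\alpha_{[a]}$ to this common value yields nonnegative integer coefficients indexed by isomorphism classes.

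Finally, I would set $\Psi(x) := \sum_{[a] \in \domain\poset{b}} \alpha_{[a]} \cdot \cntMsub{a}{x}$, which is visibly a $\domain\poset{b}$-good function, and $\inst(a) := \inst_\SI(f_\Phi \circ h)$ for any choice of $h \in \Msub{a}{b}$ (such $h$ exists since $a \in \domain\poset{b}$). Verification then proceeds directly: $\varphi(\inst(a)) = \varphi(\domain(f_\Phi \circ h)) = \varphi(a)$ by the set-instantiator property, and
\[
\#\acc_{M^{\inst(a)}}(j) \ = \ \Phi(f_\Phi \circ h) \ = \ \sum_{[a']} \cntMsub{a'}{a} \cdot \alpha_{[a']} \ = \ \Psi(a),
\]
using the bijection between $\{g' : g' \prec f_\Phi \circ h\}$ and the $\cM$-subobjects of $a$ (grouped by isomorphism class of domain), which in turn rests on $f_\Phi$ being monic. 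The main technical obstacle I anticipate is the initial choice of the Ramsey parameter $t$: the coloring $\Phi$ is bounded by $\varphi(c)$ while $c$ is itself produced from $t$, so there is a mild circularity that I expect to resolve either by iterating the choice of $t$, or by appealing to the standard reading of Proposition~\ref{pro:cat_ramseyposet} as holding for any finite color palette (once one observes that $\Phi$ takes only finitely many values on the finite set $\poset{c}$).
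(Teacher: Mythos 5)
Your proposal is correct and takes essentially the same route as the paper's proof: invoke Proposition~\ref{pro:cat_ramseyposet} on $b$, build a set-instantiator against $(M,c,\varphi)$, use the perception decomposition $\Phi(g)=\sum_{g'\prec g}N(g')$ together with an induction (resting on Lemma~\ref{lem:cat_self_Ms_are_isos} and finite $\cM$-well-poweredness) to make the path counts $N$ isomorphism-invariant below $f_\Phi$, and instantiate each $a\in\domain\poset{b}$ via a subobject $f_\Phi\circ h$ — exactly the paper's construction, just with the induction spelled out more explicitly. The circularity you flag about the palette is resolved in the paper by taking $t:=\max\{\varphi(a)\,:\,a\in\domain\poset{b}\}$, which depends only on $b$: since $M$ computes $\eval{\varphi}$ and the set-instantiator preserves $\varphi$-values, every $g\prec f_\Phi$ receives color $\varphi(\domain g)\le t$, and any larger colors occurring elsewhere in $\poset{c}$ may simply be capped at $t$ without affecting the conclusion.
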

\begin{proof}
    We invoke Proposition~\ref{pro:cat_ramseyposet} for $b$ and $t := \max\{\varphi(a) \,:\, a \in \domain \poset{b}\}$ to obtain $c \in C$.
    We now want to color $\poset{c}$ according to the behaviour of $M$ on the elements of it, in particular how many accepting paths query each specific $\cM$-subobject of $c$.
    Before we can do this we have to first embed all potential $\cM$-subobjects in $\poset{c}$ as oracles.
    In order to do this we construct a set-instantiator $\SI$ against $(M, c, \varphi)$ to obtain a $j \in \IN$, $\inst_\SI$ and $\perc_\SI$ as in Definition~\ref{def:cat_setinstantiator}.
    
    For $g \in \poset{c}$ define $\Phi(g) := \#\acc_{M^{\inst_\SI(g)}}(j)$.
    Note that
    \begin{equation}\label{eq:cat_PhiPerception}
        \Phi(g) \ = \ \bigl|\bigl\{\tau\in\{0,1\}^* \,: \, \perc_\SI(\tau) \prec g\bigr\}\bigr| = \sum_{h \prec g} \, \bigl|\big\{\tau\in\{0,1\}^* \,:\, \textsu{perc}_{\SI}(\tau)=h\big\}\bigr|.
    \end{equation}

    Using $\Phi$ as a coloring for Proposition~\ref{pro:cat_ramseyposet}, we obtain $f_\Phi$ with $\domain f_\Phi = b$.
    Note that $\Phi(g)$ now only depends on the isomorphism type of the domain of $g$ for all $g \prec f_\Phi$.
    By induction we see that the number
    \[
        \big|\big\{\tau\in\{0,1\}^* \, :\, \textsu{perc}_{\SI}(\tau)=g\big\}\big|
    \]
    also depends only on the isomorphism type of $\domain g$, whenever $g \prec f_\Phi$.
    Denote this number by $\alpha_{\domain g}$ and set $\Psi(\domain g) := \Phi(g)$.
    The function $\Phi$ is well-defined and invariant under isomorphisms of $\domain g$ only for $g \prec f_\Phi$, thus we consider the domain of $\Phi$ to be $\{\domain g \,:\, g \prec f_\Phi\} = \domain \poset{b}$.
    This means that \eqref{eq:cat_PhiPerception} simplifies:
    \begin{equation}\label{eq:cat_PhiGood}
    \Psi(\domain g) \ = \ \sum_{\domain h} \cntMsub{\domain h}{\domain g} \alpha_{\domain g}
    \end{equation}
    where the sum is over all non-isomorphic domains $\domain h$ for $h \in \poset{b}$.
    This implies that $\Phi$ is $\domain \poset{b}$-good.

    Remains to define the function $\inst: \domain \poset{b} \to \catsize{C}{j}$.
    Let $a \in \domain \poset{b}$, then $\inst(a) := \inst_{\SI}(g_\Phi)$ for some $g_\Phi \prec f_\Phi$ with $\domain g_\Phi \cong a$.
    The property $\varphi(\inst(a)) = \varphi(a)$ now directly follows from $\SI$ being a set-instantiator.
\end{proof}

Note that Lemma~\ref{lem:cat_forceGood} only guarantees the local behaviour of $M$ to be $\domain \poset{b}$-good.
This leaves two possibilities for contradictions: Either the local behaviour is even $(\domain \poset{b} \cap P)$-good, or the function grows too quickly for non $P$-pure inputs.
This next Witness Theorem shows that in the first case we can always find an object that proves that $M$ does not compute the correct function using a simple linear algebra argument.

\begin{theorem}[The Witness Theorem]
    \label{thm:cat_findcounterexample}
    Let $P$ be a locally small, finitely $\cM$-well-powered category with a proper ($\cE$, $\cM$)-factorization system and the joint $\cM$-embedding property.
    
    Fix a bad $\varphi: P \to \IN$.
    Then there exists $b \in P$ such that for
    every $(\domain\poset{b} \cap P)$-good function $\Psi:(\domain\poset{b} \cap P) \to \IN$,
    there exists $w \in (\domain\poset{b} \cap P)$
    with $\Psi(w) \neq \varphi(w)$.
\end{theorem}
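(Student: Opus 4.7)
The plan is to mirror the graph-theoretic Witness Theorem~\ref{thm:graphs_findcounterexample}, with the joint $\cM$-embedding property of $P$ playing the role of disjoint union and an abstract triangularity argument replacing Fact~\ref{fact: lin-indep-ind}. First I will produce $b\in P$ into which every pattern of $\varphi$ $\cM$-embeds, then show that the family $\{\cntMsub{a}{\starpad}\}_{a\in\domain\poset{b}\cap P}$ is linearly independent on this domain, and finally invoke a coefficient-vector comparison.

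The support $\supp(\varphi)$ is a finite list of objects $a_1,\ldots,a_s\in P$. Iterating the joint $\cM$-embedding property of $P$ (which stays inside $P$) $s-1$ times yields $b\in P$ admitting an $\cM$-morphism $a_i\to b$ for every $i$. Setting $D:=\domain\poset{b}\cap P$, we have $\supp(\varphi)\subseteq D$, and $D$ is finite by finite $\cM$-well-poweredness.

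The heart of the argument is linear independence of $\{\cntMsub{a}{\starpad}\}_{a\in D}$ restricted to $D$. Pick one representative per isomorphism class in $D$, enumerated as $c_1,\ldots,c_t$ with $|\poset{c_i}|$ nondecreasing, and consider the matrix $M_{ij}:=\cntMsub{c_i}{c_j}$. By Lemma~\ref{lem:cat_self_Ms_are_isos} every $\cM$-endomorphism of $c_i$ is an automorphism, so they all collapse to a single $\cM$-subobject, giving $M_{ii}=1$. For triangularity I claim that any $\cM$-morphism $f:a\to a'$ with $a\not\cong a'$ strictly increases $|\poset{\cdot}|$. Indeed, the map $\poset{a}\to\poset{a'}$ sending the class of $g:x\to a$ to the class of $f\circ g:x\to a'$ is well-defined on equivalence classes and injective, using that $f$ is monic by properness of the factorization system. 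It misses the class of $\id_{a'}\in\poset{a'}$: a preimage would yield some $g:a'\to a$ in $\cM$ with $f\circ g$ an automorphism of $a'$, making $f$ a split epi and hence, combined with monicity, an isomorphism, contradicting $a\not\cong a'$. Thus $|\poset{a}|<|\poset{a'}|$, so $M$ is upper-triangular with unit diagonal and therefore invertible.

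Finally, any $(\domain\poset{b}\cap P)$-good $\Psi$ has only nonnegative integer coefficients in the basis $\{\cntMsub{c}{\starpad}\}_{c\in D}$, while $\varphi$, viewed as a linear combination over the same index set with zero coefficients on $D\setminus\supp(\varphi)$, has at least one negative coefficient. The two coefficient vectors differ, and the linear independence just established forces $\Psi$ and $\varphi$ to disagree at some $w\in D$, which is the required witness. The main technical delicacy is the triangularity step, where one must exploit properness of the factorization system together with Lemma~\ref{lem:cat_self_Ms_are_isos} to produce strict growth of $|\poset{\cdot}|$ along non-iso arrows in $\cM$; the remainder is routine linear algebra.
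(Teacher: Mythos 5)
Your proposal is correct and follows essentially the same route as the paper: build $b$ by iterating the joint $\cM$-embedding property over $\supp(\varphi)$, establish linear independence of the restricted functions $\cntMsub{a}{\starpad}$ via a unitriangular evaluation matrix (diagonal entries $1$ by Lemma~\ref{lem:cat_self_Ms_are_isos}), and conclude from uniqueness of coefficients that the good $\Psi$ and the bad $\varphi$ must disagree somewhere on $\domain\poset{b}\cap P$. The only difference is cosmetic: the paper orders the representatives by a total extension of the subobject order (antisymmetry again from Lemma~\ref{lem:cat_self_Ms_are_isos}), whereas you order by the cardinality $|\poset{\cdot}|$ and prove its strict growth along non-isomorphism $\cM$-arrows, which is a valid, slightly more explicit justification of the same triangularity.
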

\begin{proof}
    Use the joint embedding property of $P$ on $\supp(\varphi)$ to obtain an object $b \in P$, such that there are morphisms $a \to b$ in $\cM$ for all $a \in \supp(\varphi)$.
    We now consider $P' := \domain\poset{b} \cap P$, keeping only one representative per isomorphism class.
    We now prove that the functions $\cntMsub{a}{\starpad}$ for $a \in P'$ are linearly independent:
    Consider the evaluation matrix $A$ with rows and columns indexed by objects from $P'$, sorted by a total extension of the poset $\poset{b}$  and values $A_{ij} := \cntMsub{i}{j}$.
    This is well-defined by Lemma~\ref{lem:cat_self_Ms_are_isos}: if there are morphisms from $\cM$, $f: a \to a'$ and $g: a' \to a$, then $a$ and $a'$ are isomorphic and thus only one of them is part of $P'$.
    If this matrix is invertible this directly implies that the $\cntMsub{a}{\starpad}$ (i.e.\ the rows) are linearly independent.
    
    We now prove that $A$ is an upper-diagonal matrix with a $1$s on the diagonal and thus invertible.
    The matrix $A$ is upper-diagonal since we've ordered the rows and columns accordingly, combined with Lemma~\ref{lem:cat_self_Ms_are_isos}.
    Remaining are the $1$s on the diagonal:
    For any object $a$, the identity morphisms $\identity_a$ is contained in $\cM$ and thus there is at least one $\cM$-subobject in $\Msub{a}{a}$.
    Furthermore every morphism $f: a \to a$ in $\cM$ is an isomorphism by Lemma~\ref{lem:cat_self_Ms_are_isos} and thus the same $\cM$-subobject as $\identity_a$.

    We finish the proof of the Witness Theorem by noticing that the linear independence implies that the patterns and coefficients of any function $\Psi$ with support $\supp(\Psi) \subseteq P'$ are uniquely determined by the evaluations $\Psi$ on $P'$.
    Thus there exists a witness $a \in P'$ with $\Psi(a) \neq \varphi(a)$, since $\Psi$ is good while $\varphi$ is bad (thus, in particular, $\Psi$ and $\varphi$ are not the same function).
\end{proof}

We need one last condition on $C$ and $P$:
A procedure that, given some $c$ and some $k \in \IN$ allows to construct a ``bigger'' object $c'$ that cannot be distinguished by any $P$-pure motif parameters, but every non $P$-pure basis function with patterns from $\domain\poset{c}$ evaluates to a value bigger than $k$.
Formally:
\begin{definition}
    Let $C$ and $P$ be locally small and finitely $\cM$-well-powered categories with a compatible ($\cE$, $\cM$)-factorization system, where $P$ is a full subcategory of $C$ that is closed under isomorphisms.

    We say that $C$ and $P$ fulfill the \defn{blowup property}, if for every object $c$ in $C$ and $k \in \IN$ there is some $c'$ in $C$, such that
    \begin{align*}
        \cntMsub{a}{c'} &> k && \text{for every $a \in \domain\poset{c} \setminus P$}\\
        \cntMsub{a}{c'} &= \cntMsub{a}{c'} && \text{for every $a \in P$}
    \end{align*}
\end{definition}
In the case of graphs, this was achieved by adding isolated vertices.

With this last tool we are finally able to lead the existence of polynomial-time NTMs computing bad ordered graph motif parameters to a contradiction.

\begin{lemma} \label{lem:cat:prop}
    \label{lem:cat_seperate}
    Let $C$ and $P$ be two categories with the following properties:
    \begin{itemize}
        \item $P$ is a full subcategory of $C$, closed under isomorphism.
        \item $C$ and $P$ have a compatible proper ($\cE$, $\cM$)-factorization system.
        \item $C$ and $P$ are both locally small.
        \item $C$ and $P$ are both finitely $\cM$-well-powered.
        \item $C$ and $P$ both have the joint $\cM$-embedding property.
        \item $C$ is $\cM$-Ramsey.
        \item $C$ and $P$ fulfill the blowup property.
    \end{itemize}

    Let $M$ be any nondeterministic polynomial-time Turing machine computing $\eval{\varphi}$ for some bad $P$-pure motif parameter $\varphi$.
    If for every $c \in C$ there are set-instantiators against $(M, c, \varphi)$, then there is a $j \in \IN$ and $c \in \catsize{C}{j}$ such that $\#\acc_{M^c}(j) \neq \varphi(c)$.
\end{lemma}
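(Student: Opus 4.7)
The plan is to follow the same two-step structure as the proof for graphs in Section~\ref{sec:graphs}: first use the Witness Theorem (Theorem~\ref{thm:cat_findcounterexample}) to pin down a ``small'' witnessing pure object, then use the blowup property together with Lemma~\ref{lem:cat_forceGood} to force $M$ to either overshoot on some non-pure pattern or agree with a genuinely good motif parameter on a small pure object.

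First, apply Theorem~\ref{thm:cat_findcounterexample} to the bad $\varphi$ to obtain an object $b_1 \in P$ such that every $(\domain\poset{b_1}\cap P)$-good function disagrees with $\varphi$ somewhere on $\domain\poset{b_1}\cap P$. Set $k := \max\{\varphi(a) : a \in \domain\poset{b_1}\cap P\}$ and apply the blowup property to $(b_1, k)$ to obtain $b_2 \in C$ whose $\cM$-subobject counts match those of $b_1$ on every $P$-pure pattern but strictly exceed $k$ on every non-$P$-pure pattern in $\domain\poset{b_1}$. Because $\cM$ is closed under composition, one checks routinely that $\domain\poset{b_1} \subseteq \domain\poset{b_2}$ and, more crucially, that for any $W \in \domain\poset{b_1}$ one has $\cntMsub{a}{W} = 0$ whenever $a \in \domain\poset{b_2} \setminus \domain\poset{b_1}$ (otherwise $a \to W \to b_1$ would place $a$ into $\domain\poset{b_1}$).

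Next, apply Lemma~\ref{lem:cat_forceGood} to $b_2$, invoking the hypothesis that set-instantiators against $(M, c, \varphi)$ exist for every $c \in C$. This yields $j \in \IN$, an instantiation $\inst: \domain\poset{b_2} \to \catsize{C}{j}$, and a $\domain\poset{b_2}$-good function $\Psi$ with $\#\acc_{M^{\inst(a)}}(j) = \Psi(a)$ and $\varphi(\inst(a)) = \varphi(a)$ for every $a \in \domain\poset{b_2}$. Then split on whether $\Psi$ has a positive coefficient on some non-$P$-pure pattern $a^\star \in \domain\poset{b_1}\setminus P$. If yes, the blowup bound gives $\Psi(b_2) \geq \cntMsub{a^\star}{b_2} > k \geq \varphi(b_1) = \varphi(b_2)$, where the last equality uses that $\varphi$ is $P$-pure and that blowup preserves $P$-pure counts; then $\inst(b_2)$ is the desired counterexample. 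If no, zero out all coefficients of $\Psi$ outside $\domain\poset{b_1}\cap P$ to obtain a $(\domain\poset{b_1}\cap P)$-good function $\tilde\Psi$; the transitivity observation above ensures $\tilde\Psi(W) = \Psi(W)$ for every $W \in \domain\poset{b_1}$. The Witness Theorem then supplies $w \in \domain\poset{b_1}\cap P$ with $\Psi(w) = \tilde\Psi(w) \neq \varphi(w) = \varphi(\inst(w))$, so $\inst(w)$ is the counterexample.

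The main technical obstacle is the interaction between the blowup and the restriction. One must verify that every non-pure pattern in $\Psi$ surviving from $\domain\poset{b_1}$ inherits a large count at $b_2$ (this is exactly the blowup hypothesis), and that patterns of $\Psi$ lying outside $\domain\poset{b_1}$ do not contaminate the restricted $\tilde\Psi$ on pure witnesses in $\domain\poset{b_1}$ (this uses composition-closure of $\cM$ together with the properness of the factorization system via Lemma~\ref{lem:cat_self_Ms_are_isos}, which guarantees that $\domain\poset{\starpad}$ behaves transitively up to isomorphism). Once these two categorical points are checked, the remaining case analysis is a direct transcription of the final argument in Section~\ref{sec:graphs}.
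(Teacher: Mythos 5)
Your proposal is correct and follows essentially the same route as the paper's proof: Witness Theorem to obtain the pure object, blowup to a larger object, Lemma~\ref{lem:cat_forceGood} on that object, and the same case split on whether $\Psi$ carries a positive coefficient on a non-$P$-pure pattern from the original poset. Your explicit justification that patterns outside $\domain\poset{b_1}$ contribute zero on objects of $\domain\poset{b_1}$ (via composition-closure of $\cM$) is a detail the paper leaves implicit, but it is the same argument.
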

\begin{proof}
    We invoke the Witness Theorem~\ref{thm:cat_findcounterexample} with our $\varphi$ 
    to obtain a $c_1 \in P$ as in the theorem.
    Using the blowup property, construct $c_2$ such that
    \begin{align}
        \cntMsub{a}{c_2} &> \max\big\{\varphi(b) \,:\, b \in \domain\poset{c_1} \cap P \big\} && \text{for every $a \in \domain\poset{c_1} \setminus P$} \label{eq:cat_blowupNonR}\\
        \cntMsub{a}{c_2} &= \cntMsub{a}{c_1} && \text{for every $a \in P$}\,. \label{eq:cat_keepR}
    \end{align}

    We now use Lemma~\ref{lem:cat_forceGood} to obtain a $j \in \IN$, a function $\inst: \domain\poset{c_2} \to \catsize{C}{j}$ and a $\domain\poset{c_2}$-good function $\Psi$ with $\#\acc_{M^{\inst(a)}}(j) = \Psi(a)$ and $\varphi(\inst(a)) = \varphi(a)$ for every $a \in \domain\poset{c_2}$.

    There are now two possibilities to find the required counterexample $w$:
    If there are no basis functions present in $\Psi$ with a pattern in $\domain\poset{c_1} \setminus P$, we can invoke the Witness Theorem again, however if any of these basis functions are present we can construct a direct counterexample.

    We first handle the case where such a basis function is present, so let the coefficient of $\cntMsub{a}{\starpad}$ in $\Psi$ be positive for some $a \in \domain \poset{c_1} \setminus P$.
    Then
    \[
        \Psi(c_2) \geq \cntMsub{a}{c_2} > \max\big\{\varphi(b) \,:\, b \in \domain\poset{c_1} \cap P \big\} \geq \varphi(c_1) = \varphi(c_2)
    \]
    by \eqref{eq:cat_blowupNonR} and \eqref{eq:cat_keepR}, combined with the fact that $\varphi$ is $P$-pure and thus only has patterns from $P$.
    Our counterexample in this case is $w := c_2$.
    
    Otherwise assume that no such basis function is present.
    Then $\tilde\Psi$ obtained from $\Psi$ by restricting to $\domain\poset{c_1}$ (i.e.\ setting all coefficients of $\domain \poset{c_2} \setminus \domain \poset{c_1}$ to zero), is $(\domain\poset{c_1} \cap P)$-good, which is a requirement for the second part of the Witness Theorem~\ref{thm:cat_findcounterexample} (we already used it to obtain $c_1$).
    Furthermore for the restriction to $\domain\poset{c_1}$ we have $\tilde \Psi_{|\domain\poset{c_1}} = \Psi_{|\domain\poset{c_1}}$.
    We invoke the second part of the Witness Theorem~\ref{thm:cat_findcounterexample} to find a point 
    $w \in \domain \poset{c_1} \cap P$ with $\Psi(w) = \tilde\Psi (w) \neq \varphi(w)$ which is our counterexample.
    
    Independent of how we have obtained our $w$ we observe
    \begin{equation}
        \label{eq:cat_counterexamplepoint}
        \#\acc_{M^{\inst(w)}}(j) = \Psi(w) \neq \varphi(w),
    \end{equation}
    which completes the proof.
\end{proof}

Diagonalizing over all polynomial-time NTMs computing $\eval{\varphi}$ now leads to our general hardness result:
\begin{theorem}
    \label{thm:cat_main_theorem}
    Let $C$ and $P$ be as in Lemma~\ref{lem:cat_seperate} and let $\varphi$ be a $P$-pure motif parameter.
    If set-instatiators against $(M, c, \varphi)$ exist for all nondeterministic polynomial-time Turing machines $M$ computing $\eval{\varphi}$ and objects $c \in C$, then $\eval{\varphi} \notin \promise\sharpPdottedcircle$ unless $\varphi$ is good.
\end{theorem}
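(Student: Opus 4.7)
The plan is to prove Theorem~\ref{thm:cat_main_theorem} by contrapositive, with a short diagonalization over all candidate nondeterministic polynomial-time Turing machines, using Lemma~\ref{lem:cat_seperate} as the workhorse.

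Concretely, I would assume for contradiction that $\varphi$ is bad (some coefficient in its $\cntMsub{a}{\starpad}$-expansion is a negative integer, or not a nonnegative integer) and that $\eval{\varphi} \in \promise\sharpPdottedcircle$. By Definition~\ref{def:promise-sharpp}, the latter supplies a polynomial-time NTM $M$ and a ``don't care'' set $D$ consisting of illegal oracle encodings, such that $\#\acc_{M^O}(x) = \eval{\varphi}(x,O)$ holds on every input $(x,O) \notin D$. The hypothesis of the theorem now provides set-instantiators against $(M, c, \varphi)$ for this specific $M$ and every $c \in C$. Since $C$ and $P$ satisfy all structural assumptions listed in Lemma~\ref{lem:cat_seperate} (locally small, finitely $\cM$-well-powered, compatible proper $(\cE,\cM)$-factorization system, joint $\cM$-embedding, $\cM$-Ramsey, blowup property), the lemma applies and yields some $j \in \IN$ together with an element $c^\star \in \catsize{C}{j}$ with $\#\acc_{M^{c^\star}}(j) \neq \varphi(c^\star)$.

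The only point needing care is that $c^\star$ must not lie in the promise's ``don't care'' set $D$. But this is immediate from the convention fixed earlier in Section~\ref{sec:category_general}: $\catsize{C}{j}$ was defined as a set of $\polylog(j)$-bit oracle strings, each of which is a legal encoding of a genuine object of $C$, and the instantiation functions of set-instantiators by definition have codomain $\catsize{C}{j}$. Hence $(j, c^\star) \notin D$, so $M$ is required to satisfy $\#\acc_{M^{c^\star}}(j) = \eval{\varphi}(j, c^\star) = \varphi(c^\star)$, contradicting the inequality produced by Lemma~\ref{lem:cat_seperate}. Since $M$ was an arbitrary NTM witnessing $\eval{\varphi} \in \promise\sharpPdottedcircle$, no such machine exists, so $\eval{\varphi} \notin \promise\sharpPdottedcircle$ as required.

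I expect no real obstacle in this final step; essentially all of the technical content lives in Lemma~\ref{lem:cat_seperate} (and, transitively, in Lemmas~\ref{lem:cat_forceGood} and~\ref{thm:cat_findcounterexample}, whose blueprints are already pinned down by the graph case). The remaining work is purely bookkeeping: unpacking the definition of $\promise\sharpPdottedcircle$, checking that the hypothesis about set-instantiators is quantified over precisely the NTMs that could witness the containment, and verifying that the counterexample $c^\star$ produced by the lemma is a valid (non--``don't care'') oracle. This is precisely the ``diagonalization'' step alluded to in the sentence immediately preceding the theorem statement.
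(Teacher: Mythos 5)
Your argument covers only half of what the theorem asserts. In the paper's terminology, \emph{good} means ``all coefficients are nonnegative integers and all patterns lie in the relevant class'', whereas \emph{bad} specifically means ``some coefficient is negative''. These are not complementary: a $P$-pure motif parameter all of whose coefficients are nonnegative but with, say, a coefficient $\tfrac12$ is \emph{not good} yet also \emph{not bad}. Your proposal folds this case into ``bad'' (``a negative integer, or not a nonnegative integer'') and then invokes Lemma~\ref{lem:cat_seperate} — but that lemma is stated, and proved, only for bad parameters; its machinery (Witness Theorem~\ref{thm:cat_findcounterexample}, the blowup step) compares $\varphi$ against good functions and genuinely uses the existence of a negative coefficient to guarantee a disagreement point. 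So for the non-integer-coefficient case your diagonalization has no counterexample $c^\star$ to feed into the promise, and the proof does not go through as written.

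The paper closes this case with a separate, short argument: using the joint $\cM$-embedding property on $\supp(\varphi)$ it finds $b \in C$ containing all patterns as $\cM$-subobjects, picks an $f \in \poset{b}$ minimal under $\prec$ whose pattern $\domain f$ carries a non-integer coefficient $\alpha$, and uses the unitriangular evaluation matrix from the proof of Theorem~\ref{thm:cat_findcounterexample} (together with $\cntMsub{\domain f}{\domain f} = 1$, via Lemma~\ref{lem:cat_self_Ms_are_isos}) to conclude $\varphi(\domain f) = n + \alpha \notin \IZ$ for some $n \in \IZ$; since $\#\acc_{M^{c}}(j)$ is always a nonnegative integer, $\eval{\varphi} \notin \promise\sharpPdottedcircle$. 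Note that, unlike the graph case, there is no pre-established analogue of Lemma~\ref{lem:graphs_integercoeffs} in the categorical setting, so this step cannot simply be cited away — you must supply it. Your treatment of the bad case itself coincides with the paper's (hypothesized set-instantiators, Lemma~\ref{lem:cat_seperate}, contradiction), and your extra check that the counterexample $c^\star \in \catsize{C}{j}$ is a legal encoding, hence outside the ``don't care'' set, is correct and consistent with the paper's conventions.
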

\begin{proof}
    There are two possibilities for $\varphi$ to not be good:
    Either $\varphi$ is bad or $\varphi$ has a non-integer coefficient.

    We first treat the case where $\varphi$ is bad.
    Assume for the sake of contradiction $\eval{\varphi} \in \promise\sharpPdottedcircle$.
    Then there is a polynomial-time NTM $M$ computing $\eval{\varphi}$.
    Due to the existence of set-instantiators we can invoke Lemma~\ref{lem:cat_seperate} and obtain $j \in \IN$ and $w \in \catsize{C}{j}$ with $\#\acc_{M^w}(j) \neq \varphi(w)$.
    This however, means that $M$ in fact does not compute $\eval{\varphi}$, a contradiction.

    Second we treat the case where $\varphi$ has a non-integer coefficient.
    Similarly to the proof of Theorem~\ref{thm:cat_findcounterexample}, using the joint $\cM$-embedding property of $C$ on $\supp(\varphi)$ to obtain an object $b \in C$, such that there are morphisms $a \to b$ in $\cM$ for all $a \in \supp(\varphi)$.
    Using the $\prec$ order, find some minimal element $f$ in $\poset{b}$, such that the pattern $\domain f$ has a non-integer coefficient $\alpha$.
    Consider $\varphi(\domain f)$.
    Recall the evaluation matrix $A$ from the proof of Theorem~\ref{thm:cat_findcounterexample}.
    Since it is upper-diagonal with $1$s on the diagonal, we see that the only patterns affecting $\varphi(\domain f)$ are the ones with patterns $\domain g$, for $g \prec f$.
    All of them have integer coefficients and thus $\varphi(\domain f) = n + \alpha \cdot \cntMsub{\domain f}{\domain f}$ for some $n \in \IZ$.
    The $1$s on the diagonal correspond to $\cntMsub{\domain f}{\domain f} = 1$, so $\varphi(\domain f) = n + \alpha \notin \IZ$.
    We conclude, $\varphi$ does not only output integers and thus $\eval{\varphi} \notin \promise\sharpPdottedcircle$.
\end{proof}

In our general theorem above, we do not get a dichotomy, since we do not need to have an upper bound. It might be the case that the evaluation of the motif parameter is not contained in $\promise\sharpPdottedcircle$ even when the coefficients are all non-negative integers.
However, in our applications below it is easy to show that this does not happen.

\subsection{Ordered Graphs}
\label{sec:cat:ord_graphs}
To show how this generalization compares to the proof of Theorem~\ref{thm:ordgraphs}, we first show how this categorical generalization can be used to reprove the separation of Theorem~\ref{thm:ordgraphs}:

We consider $C = \finordgraphs$ to be the category of all finite ordered graphs.
The objects of this category are the finite\footnote{The finite was implicit throughout this entire paper, but we make it explicit here to distinguish of the category of ordered graphs, which usually includes infinite graphs.} ordered graphs while the morphisms are the monotone graph homomorphisms.
The monomorphisms of $\finordgraphs$ are the injective monotone graph homomorphisms, i.e.\ (not necessarily strong) monotone graph embeddings.
The epimorphism are the vertex surjective monotone graph homomorphisms, i.e.\ $f: H \to G$ is epi, if for every $v \in V(G)$ there is some $v' \in V(H)$ such that $f(v) = v'$.
The category $P$ of pure objects is obtained by restricting $C$ to those ordered graphs without isolated vertices and keeping all morphisms between such ordered graphs, making $P$ a full subcategory of $C$.
Clearly $P$ is closed under isomorphisms, the property of having isolated vertices is preserved by isomorphisms.
Both categories are locally finite, there are only finitely many monotone graph homomorphisms between two fixed graphs, and thus locally small.

Next we describe the proper ($\cE$, $\cM$)-factorization system:
We choose $\cE$ as the epimorphisms and we choose $\cM$ as the strong monotone graph embeddings.
This makes the $\cM$-subobjects of some monotone graph $G$ precisely the induced subgraphs of $G$ (or rather an equivalence class, represented by each of the induced subgraphs).
We prove this indeed forms a factorization system:

For that let $f: H \to G$ be an monotone graph homomorphism.
Consider $G[f(V(H))]$, the subgraph of $G$ induced by the image of the vertices of $H$ under $f$.
Let $m: G[f(V(H))] \to G$ be the canonical strong monotone graph embedding (the identity) and let $e: H \to G[f(V(H))]$ be obtained from $f$ by restricting its codomain.
Then $e$ is surjective on the vertices and thus $e \in \cE$ and $m \in \cM$ with $f = me$.
On the other hand, let $f = m'e'$ be a different factorization with $e': H \to F$ in $\cE$ and $m': F \to G$ in $\cM$ through some ordered graph $F$.
Due to $m' \in \cM$, $F$ is isomorphic to some induced subgraph of $G$, say $F \cong G[X]$ with isomorphism $g: F \to G[X]$ for some $X \subseteq V(G)$.
W.l.o.g.\ assume the strong monotone graph embedding $G[X] \to G$ be the identity (extended to all of $G$ as a codomain).
Then $X = ge'(V(H)) = f(V(H))$ and thus $mge' = me$.

Compatibility of the factorization system between $C$ and $P$ is given by the following fact:
Let $f: H \to G$ factor as $f = me$ with $e: H \to F$ in $\cE$ and $m: F \to G$ in $\cM$ and let $v \in V(F)$ be an isolated vertex in $F$.
Then there is an $v' \in V(H)$ with $e(v') = v$ due to vertex surjectivity of $e$.
The images of all neighbors of $v'$ in $H$ under $e$ are neighbors of $v$ in $F$, due to $e$ being a homomorphism.
Thus $v'$ has to be an isolated vertex in $H$.
Together this implies that $F$ is pure if $H$ is pure.

There are only finitely many induced subgraphs for any ordered graph $G$, making $C$ and $P$ both $\cM$-well-powered.
The joint $\cM$-embedding property for ordered graphs $G, H$ is given by the disjoint union of $G$ and $H$ with the obvious inlusions.
The category $C$ being $\cM$-Ramsey is given by Lemma~\ref{lem:nesetril}.
The blowup property is fulfilled by adding isolated vertices to the ordered graph we are trying to blow up.

The sets $\catsize{C}{j}$ are precisely the ordered graphs $G$ with $V(G) = \{1, \ldots, j\}$, represented as oracles as before.
The corresponding set-instantiators have been constructed in Lemma~\ref{lem:graphs_setinstantiators_exist}.

Theorem~\ref{thm:cat_main_theorem} then reproves the separation of Theorem~\ref{thm:ordgraphs}.

\subsection{Finite Vector Spaces over Finite Fields}
\label{sec:cat:finvec}
Our first new example using the categorical generalization is the category $C := \finvec_{\IF_p}$ of finite vector spaces over a fixed finite field $\IF_p$.
Here, an NTM is given the size $p^d$ of an ambient vector space $\IF_p^d$ and some vector space $V \subseteq \IF_p^d$.
It can query $V$ by querying the oracle for individual vectors, i.e.\ it can ask whether $(v_1, \ldots, v_d) \in V$ for $v_1, \ldots, v_d \in \IF_p$.
The vector $(v_1, \ldots, v_d)$ is suitably encoded as a binary string of some fixed length $\bitsize(d)$, only depending on $p$ and $d$, and thus $V$ itself can be encoded as a subset of $\{0, 1\}^{\bitsize(d)}$.
Our sets $\catsize{C}{p^d}$ are then precisely all the vector spaces $V \subseteq \IF_p^d$ and $\bitsize(d)$ is poly-logarithmic in $p^d$.

Due to the ambient space we can already pad a vector space by embedding it into a larger ambient space.
This allows us to choose $P$ and $C$ to be the same category $\finvec_{\IF_p}$ that we now describe:
Let $\IF_p$ be some fixed finite field.
The objects of $\finvec_{\IF_p}$ are all the finite vector spaces over $\IF_p$ and the morphisms between them are the vector space homomorphisms.
This makes the epimorphisms the surjective homomorphisms, while the monomorphisms are the injective homomorphisms.
For some vector spaces $V, W \in \finvec_{\IF_p}$, the homomorphisms from $V$ to $W$ can be represented by the $\dim W \times \dim V$ matrices over $\IF_p$.
We thus conclude that $\finvec_{\IF_p}$ is locally small and even locally finite.
The proper ($\cE$, $\cM$) factorization system is given by choosing $\cE$ to be all epimorphisms and $\cM$ to be all monomorphisms.
That this indeed forms a factorization system follows from $\finvec_{\IF_p}$ being an abelian category, we refer the interested reader to \cite{mac1998categories}[Chapter VIII, Section 3, Proposition 1] for the details.
The $\cM$-subobjects of some vector space $V$ are represented precisely by the subspaces $W \subseteq V$, making $\finvec_{\IF_p}$ finitely $\cM$-well-powered.
We thus have $\cntMsub{W}{V} = \binom{\dim(V)}{\dim(W)}_p$, the Gaussian binomial coefficient (see \cite{Kac2002}[Theorem 7.1] for details on this).
As such the motif parameters $\varphi: \finvec_{\IF_p} \to \IQ$ simply depend on the dimension of the input.
Furthermore $P = C$ implies that the $P$-pure motif parameters are precisely the motif parameters.
We will thus only deal with motif parameters in the following.
The category $\finvec_{\IF_p}$ is well known to admit finite coproducts: the direct sum of vector spaces.
The coproduct injections for coproducts in $\finvec_{\IF_p}$ are always mono, implying the joint $\cM$-embedding property.
The Graham-Leeb-Rothschild Theorem \cite{graham1972ramsey} states that $\finvec_{\IF_p}$ is $\cM$-Ramsey.
The blowup property for $C$ and $P$ is vacuously true for $P = C$.

The only thing left to apply Theorem~\ref{thm:cat_main_theorem} is to prove the existence of set-instantiators:
\begin{lemma}
    Let $\varphi: \finvec_{\IF_p} \to \IN$ be a motif parameter, let $M$ be a polynomial-time NTM computing $\eval{\varphi}$ and let $V \in \finvec_{\IF_p}$.
    Then there is a set-instantiator against $(M,V,\varphi)$.
\end{lemma}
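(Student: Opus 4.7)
The plan is to mirror Lemma~\ref{lem:graphs_setinstantiators_exist}, with a random linear embedding $\xi: V \to \IF_p^d$ playing the role of the random vertex map in the graph case. Let $n := \dim V$. For an ambient dimension $d$ to be fixed later, pick $\xi$ uniformly at random among injective linear maps, set $j := \bitsize(p^d)$, and define the instantiation $\inst_\SI(W) := \xi(W) \in \catsize{C}{p^d}$ for each subspace $W \subseteq V$. Since $\xi$ restricts to an isomorphism $W \cong \xi(W)$ and $\varphi$ is invariant under isomorphism, the condition $\varphi(\inst_\SI(W)) = \varphi(W)$ is automatic. Once $\xi$ is fixed, we define the perception by
\[
\perc_\SI(\tau) := \begin{cases} \Span\{\xi^{-1}(u) \,:\, u \text{ queried by } \tau \text{ with answer } 1\} & \text{if } \tau \text{ is accepting on } \inst_\SI(V), \\ \top & \text{otherwise,} \end{cases}
\]
which is a well-defined subspace of $V$, hence an $\cM$-subobject of $V$ (for an accepting path on $\inst_\SI(V)=\xi(V)$, every yes-query necessarily lies in $\xi(V)$, so $\xi^{-1}$ is defined).

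The core probabilistic step is to argue that for $d$ large enough, with positive probability the random embedding $\xi$ satisfies the following property: for every subspace $W \subseteq V$ and every accepting path $\tau$ of $\#\acc_{M^{\inst_\SI(W)}}(j)$, no queried vector lies in $\xi(V) \setminus \xi(W)$. Fix $W$ and condition on $\xi|_W$; this pins down $\inst_\SI(W)$, its set of at most $\varphi(W)$ accepting paths, and the set $T$ of all vectors they query, with $|T| \leq \varphi(W) \cdot t_M(j) = \polylog(p^d)$. For any $u \in T$ with $u \notin \xi(W)$, we have $u \in \xi(V)$ iff $u = \xi(v)$ for some $v \in V \setminus W$; extending a basis of $W$ to one of $V$, this becomes a single linear equation in the remaining random values of $\xi$, so a direct count gives $P[u \in \xi(V) \mid \xi|_W] \leq |V| \cdot p^{-d} = p^{n-d}$. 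Union bounds over the vectors in $T$ and then over the at most $p^{n^2}$ subspaces $W \subseteq V$ yield a failure probability bounded by $p^{O(n^2)} \cdot \polylog(p^d) \cdot p^{n-d}$, which is less than $1$ for $d$ polynomial in $n$ and $\log p$; the same $d$ also makes $\xi$ injective with high probability.

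Given such a $\xi$, the set-instantiator properties follow directly. If $\tau$ is accepting on $\inst_\SI(W)$, the property of $\xi$ forces every queried vector to lie either in $\xi(W)$ or outside $\xi(V)$ entirely, so the oracle answers coincide with those of $\inst_\SI(V)$; thus $\tau$ is also accepting on $\inst_\SI(V)$ with all yes-queries in $\xi(W)$, giving $\perc_\SI(\tau) \prec f$ where $f$ is the inclusion $W \hookrightarrow V$. Conversely, if $\perc_\SI(\tau) \prec f$, then the yes-queries of $\tau$ on $\inst_\SI(V)$ lie in $\xi(W)$ and its no-queries lie outside $\xi(V) \supseteq \xi(W)$, so the answers match $\inst_\SI(W)$ and $\tau$ is accepting there. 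The main technical obstacle is the careful bookkeeping in the union bound to justify the choice of $d$; conceptually the argument transcribes directly from the graph case, since the algebraic structure of $\finvec_{\IF_p}$ only enters through a straightforward linear-algebra estimate.
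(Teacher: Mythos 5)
Your proposal is correct and follows essentially the same route as the paper's proof: embed $V$ into a large ambient space $\IF_p^d$ via a random linear map, condition on (group by) the restriction to each subspace $W$ so that the accepting paths and their polylogarithmically small query set are fixed, union bound over queried vectors and over the finitely many subspaces to get an embedding whose accepting computations never query $\xi(V)\setminus\xi(W)$, and then take $\inst_\SI(W)=\xi(W)$ with perception the span of the preimages of the queried vectors (your ``yes-answered queries'' coincide with the paper's $h^{-1}(S_\tau)$, since on the oracle $\xi(V)$ exactly the queries inside $\xi(V)$ are answered affirmatively). The only deviations are cosmetic: the paper samples a uniform homomorphism and adds injectivity to the union bound rather than sampling among injective maps, and its $j$ is $p^d$ rather than $\bitsize(p^d)$.
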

\begin{proof}
    For now let the dimension $d$ of the ambient space be undetermined, we will choose it accordingly later.
    Consider uniformly chosen vector space homomorphisms $h: V \to \IF_p^d$.
    Our goal is to use the union bound to show that there is a choice for $h$ such that:
    \begin{enumerate}
        \item $h$ is injective. \label{itm:vec_setinst_prop1}
        \item For each subspace $W \subseteq V$, all accepting paths of the computation $\#\acc_{M^{h(W)}}(p^d)$ do not query the oracle for any vectors in $h(V \setminus W)$. \label{itm:vec_setinst_prop2}
    \end{enumerate}
    If each of these events (we count property~\ref{itm:vec_setinst_prop2} individually for each of the finitely many subspaces) happens with high probability with growing $d$, then eventually there is an $h$ fulfilling all of them.

    For now, fix some subspace $W \subseteq V$.
    Further group the homomorphisms, by their image of $W$.
    Call one such group $H$.
    Then the set $S_H$ of all vectors queried by all accepting paths of the computation $\#\acc_{M^{h(W)}}(p^d)$ is the same for all $h \in H$.
    Choose a basis $w_1, \ldots, w_n$ of $W$ and extend it with vectors $v_1, \ldots, v_m$ to a basis of $V$.
    The images $h(v_1), \ldots, h(v_m)$ are now independently and uniformly distributed in $\IF_p^d$.
    Any fixed vector $v \in V \setminus W$ can be written as $\sum_{i=1}^{n}\alpha_i w_i + \sum_{i=1}^m\beta_i v_i$ with some $\beta_i \neq 0$ and thus $h(v)$ is also uniformly distributed in $\IF_p^d$.
    We thus have
    \begin{align*}
        \Pr_{h \in H}[h(v) \in S_H] &= \frac{|S_H|}{p^d}\,.
    \end{align*}
    Combining this using the union bound and $|V \setminus W| \leq |V|$ we get
    \begin{align*}
        \Pr_{h \in H}[\exists v \in V \setminus W \text{ with } h(v) \in S_H] &\leq |V| \cdot \frac{|S_H|}{p^d}\,.
    \end{align*}
    Remains to estimate $|S_H|$.
    Let $t_M(p^d)$ be the worst-case running-time of $M$ on input $p^d$.
    Each oracle query can query exactly one vector, there are at most $\varphi(h(W))$ many accepting paths and each accepting path can query the oracle at most $t_M(p^d)$ many times.
    This gives the bound $|S_H| \leq \varphi(h(W)) \cdot t_M(p^d)$ on the size of $S_H$.
    This bound is polynomial in $d$: $M$ is polynomial-time and $\varphi(h(W))$ is bounded by the constant $\max_{W' \subseteq V} \varphi(W')$ due to $\varphi$ being a motif parameter.
    
    Across all homomorphisms $h: V \to \IF_p^d$ (not just those from $H$) we thus get that all accepting paths of the computation $\#\acc_{M^{h(W)}}(p^d)$ do not query the oracle for any vectors in $h(V \setminus W)$ with a probability of at least $1 - |V| \cdot \frac{\poly(d)}{p^d}$.

    Remains to show that $h$ is also injective with high probability:
    Conversely, if $h$ is not injective, then there is some non-zero $v \in V$ with $h(v) = 0$.
    For a fixed $v \in V$, $h(v)$ is again uniformly distributed, so the chance of this happening for this specific $v$ is $\frac{1}{p^d}$.
    Using the union bound yet again, we get that the chance that $h$ is injective is at least $1 - |V| \cdot \frac{1}{p^d}$.

    Since $|V|$ is constant a final union bound proves that there is a large enough $d$, such that there is a homomorphism $h$ fulfilling properties~\ref{itm:vec_setinst_prop1} and~\ref{itm:vec_setinst_prop2}.

    We now construct our set-instantiator with $j=p^d$ for the $d$ chosen above.
    The instantiation function for a subspace $W \subseteq V$ is $\inst_{\SI}(W) = h(W)$.
    For any computation path $\tau \in \{0, 1\}^\star$ of a computation $\#\acc_{M^{h(W)}}(j)$ denote by $S_\tau \subseteq \IF_p^j$ the vectors that are queried by the computation.
    We set
    \begin{align*}
        \perc_{\SI}(\tau) := \begin{cases}
            \Span(h^{-1}(S_\tau)) & \text{if $\tau$ is an accepting path of the computation $\#\acc_{M^{h(V)}}(j)$,}\\
            \top & \text{otherwise}
        \end{cases}
    \end{align*}
    We check the properties of a set-instantiator.
    Clearly $\varphi(\inst_{\SI}(W)) = \varphi(W)$ for all subspaces $W \subseteq V$ due to $h$ being injective, making $h(W)$ and $W$ isomorphic.
    Further we need that for all subspaces $W \subseteq V$ we have that $\tau \in \{0, 1\}^\star$ is an accepting path for the computation $\#\acc_{M^{h(W)}}(j)$ iff $\perc_\SI(\tau) \subseteq W$.
    For this, fix some subspace $W \subseteq V$ and let $\tau$ be an accepting computation of the computation $\#\acc_{M^{h(W)}}(j)$.
    By our choice of $h$ we know that no other vertices of $h(V \setminus W)$ are queried, or in other words, $h^{-1}(S_\tau) \subseteq W$ and thus $\tau$ is also an accepting path for the computation $\#\acc_{M^{h(V)}}(j)$.
    This directly implies $\perc_{\SI}(\tau) = \Span(h^{-1}(S_\tau)) \subseteq W$.
    On the other hand, let $\perc_{\SI}{\tau} \subseteq W$, then $\perc_{\SI} \neq \top$ and $\tau$ is an accepting path of the computation $\#\acc_{M^{h(V)}}(j)$.
    We conclude that $\tau$ is also an accepting path of the computation $\#\acc_{M^{h(W)}}(j)$ due to $\perc_{\SI}(\tau) = \Span(h^{-1}(S_\tau)) \subseteq W$.
\end{proof}

Applying Theorem~\ref{thm:cat_main_theorem} now proves the main result for $\finvec_{\IF_p}$.
\begin{theorem} \label{thm:cat:main:vectorspace}
    Let $\varphi: \finvec_{\IF_p} \to \IN$ be a motif parameter.
    Then $\eval{\varphi} \in \promise\sharpPdottedcircle$ iff $\varphi$ is good.
\end{theorem}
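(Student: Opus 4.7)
My plan is to split the equivalence into its two directions. The ``only if'' part will follow almost mechanically from the general machinery developed in this section, while the ``if'' part will require a short, direct construction of a counting NTM for each basis function $\cntMsub{W}{\starpad}$.

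For the ``only if'' direction, I would invoke Theorem~\ref{thm:cat_main_theorem} with $C = P = \finvec_{\IF_p}$. Every hypothesis of Lemma~\ref{lem:cat_seperate} has been checked in the discussion preceding the theorem: the proper $(\cE,\cM)$-factorization system given by surjective and injective linear maps (compatible trivially since $C = P$), local finiteness via matrices over $\IF_p$, finite $\cM$-well-poweredness because a finite vector space has only finitely many subspaces, the joint $\cM$-embedding property via direct sums, the $\cM$-Ramsey property from the Graham--Leeb--Rothschild theorem~\cite{graham1972ramsey}, and a vacuous blowup property. The existence of set-instantiators against any triple $(M, V, \varphi)$ has just been established. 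Thus Theorem~\ref{thm:cat_main_theorem} immediately delivers $\eval{\varphi} \notin \promise\sharpPdottedcircle$ unless every coefficient of $\varphi$ is a nonnegative integer.

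For the ``if'' direction, write $\varphi = \sum_{i=1}^{s} \alpha_i \, \cntMsub{W_i}{\starpad}$ with $\alpha_i \in \IN$. Since $\promise\sharpPdottedcircle$ is closed under nonnegative integer linear combinations (via the standard construction of nondeterministically selecting an index $i$ together with one of $\alpha_i$ branches and then simulating the NTM for $\cntMsub{W_i}{\starpad}$), my plan is to show that each individual function $V \mapsto \cntMsub{W}{V}$ lies in $\promise\sharpPdottedcircle$ for a fixed pattern $W$ of dimension $k$. Given oracle access to $V \subseteq \IF_p^d$, the NTM will nondeterministically guess a $k$-tuple $(v_1, \ldots, v_k)$ of vectors in $\IF_p^d$ (each encoded in $\bitsize(d)$ bits), query the oracle to confirm $v_i \in V$ for each $i$, and accept iff the guessed tuple is linearly independent and in reduced column echelon form, both of which are polynomial-time checks by Gaussian elimination. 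Since every $k$-dimensional subspace of $V$ admits a unique canonical basis of this form, and every $k$-dimensional $\IF_p$-vector space is isomorphic to $W$, the accepting paths biject with the $k$-dimensional subspaces of $V$, whose count is precisely $\cntMsub{W}{V}$.

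The principal obstacle has effectively already been cleared by the set-instantiator lemma just proven and by citing the Graham--Leeb--Rothschild Ramsey theorem; the only remaining technicalities are the routine linear-algebra canonical-form check and the observation that the promise is needed because an arbitrary oracle might not describe a subspace, but the NTM above simply returns its number of accepting paths regardless, and these match $\cntMsub{W}{V}$ whenever the oracle does encode a valid subspace $V$.
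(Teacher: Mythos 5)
Your proposal is correct and follows essentially the same route as the paper: the ``only if'' direction is exactly the paper's, namely invoking Theorem~\ref{thm:cat_main_theorem} after the categorical prerequisites and the set-instantiator lemma for $\finvec_{\IF_p}$ have been verified, and the ``if'' direction uses the same closure-plus-guess-and-verify argument for a single basis function $\cntMsub{W}{\starpad}$. The only (harmless) difference is in the verifier: the paper guesses all $p^{\dim W}$ vectors of a candidate subspace in a canonical order and queries the oracle on each, whereas you guess only a reduced-echelon basis of $k=\dim W$ vectors and rely on the promise that the oracle encodes a subspace (hence closed under span) to conclude containment; both yield a bijection between accepting paths and $k$-dimensional subspaces of $V$ under the promise.
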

\begin{proof}
    Theorem~\ref{thm:cat_main_theorem} gives the lower bound.

    For the upper bound, since $\varphi$ is a positive integer linear combination of finitely many basis functions and $\promise\sharpPdottedcircle$ is closed under such linear combinations it suffices to consider $\varphi(V) = \cntMsub{W}{V}$ for some vector space $W$.
    We thus need to count the number of $\dim(W)$-dimensional subspaces of $V \in \IF_p^d$, given $p^d$ and oracle access to $V$ as an input.
    For this now guess $p^{\dim(W)}$ many vectors in $\IF_p^d$ that form a $\dim(W)$-dimensional vector space (sorted in some way, such that the same vector space is only guessable in one way) and query the oracle whether all of them are part of $V$.
    If this is the case, accept, otherwise reject.
    Clearly this counts the number of $\dim(W)$-dimensional subspaces of $V$ and runs in time polynomial in $d$ ($\dim(W)$ is a fixed constant).
    We conclude $\eval{\varphi} \in \promise\sharpPdottedcircle$.
\end{proof}

Note that the upper bound in the previous algorithm is only a $\promise\sharpPdottedcircle$ upper bound.
The NTM $M$ is unable to verify whether the oracle indeed encodes a valid vector space.

\subsection{Parameter Sets}
\label{sec:cat:parsets}
Our final example are the so called parameter sets.
Fix some finite alphabet $A$.
An $n$-parameter set $X$ determined by a point $x^0 \in A^N$ and disjoint non-empty sets $\Lambda_1, \ldots, \Lambda_n \subseteq \{1, \ldots, N\}$ contains all points $f \in A^N$ with
\begin{align*}
    x_i = \begin{cases}
        x^0_i & \text{if $i \notin \bigcup_{j=1}^n\Lambda_j$},\\
        x_{i'} & \text{if $i, i' \in \Lambda_j$ for some $j = 1, \ldots, n$}.\\
    \end{cases}
\end{align*}
Note that different choices of $x^0$ or orderings of the $\Lambda_i$ can lead to the same parameter set and we do not distinguish between them, two parameter sets are the same if the underlying set is the same.
A $p$-parameter subset $Y \subseteq X$ is simply a $p$-parameter set $Y$ with $Y \subseteq X$ as sets. One can think of parameter sets as special affine vector spaces. 

We choose $C = P$ to be the same category $\parsets_{A}$.
The category $\parsets_{A}$ is the category of all parameter sets over $A$.
A morphism $f: X \to Y$ between parameter sets $X \subseteq A^N$ and $Y \subseteq A^{N'}$, is now a function between the underlying sets of $X$ and $Y$, such that for any parameter set $X' \subseteq X$, we have that $f(X') \subseteq Y$ is a parameter set.
Such a morphism $f$ is mono if the underlying function is injective and epi if it is surjective.
Furthermore $f$ is an isomorphism if the underlying function is an isomorphism.
This is only possible if $X$ and $Y$ have the same number of parameters.
But also conversely, if $X$ and $Y$ have the same number of parameters, then there is an isomorphism between them.
Since the morphisms $f: X \to Y$ are always a subset of the functions $X \to Y$ (as finite sets), we conclude that $\parsets_{A}$ is locally small and even locally finite. 
The proper ($\cE$, $\cM$)-factorization system is now given by choosing $\cE$ to be the set of all epimorphisms and $\cM$ to be the set of all monomorphisms.
Any morphism $f: X \to Y$ then %
uniquely factors through $\im(f)$ in a natural way.
The $\cM$-subobjects of some parameter set $X$ are canonically represented by the parameter sets $Y \subseteq X$, implying that $\parsets_{A}$ is finitely $\cM$-well-powered.

Furthermore the category has finite coproducts:
Remember that there is essentially (i.e.\ up to isomorphisms) only one parameter set for each number of parameters.
The coproduct of ``the'' $n$-parameter set $A^n$ and ``the'' $m$-parameter set $A^m$ is then ``the'' $n+m$-parameter set $A^{n+m}$.
The corresponding coproduct injections $f: A^n \to A^{n+m}$ and $g: A^m \to A^{n+m}$ simply map their inputs to either the first $n$ or the last $m$ parameters of $A^{n+m}$, respectively.
Both $f$ and $g$ are mono, implying the joint $\cM$-embedding property.

Using the canonical $n$-parameter set $A^n$, we see that the number of $m$-parameter sets $Y \subseteq A^n$ is given as the number of ways to partition $|A| \cup \{1, \ldots, n\}$ into $|A|+m$ disjoint non-empty subsets, such that the elements from $A$ are mapped into distinct subsets, w.l.o.g.\ the first $|A|$ ones.
These first $|A|$ subsets represent the parameters replaced by individual constants from $A$.
Thus, for an $m$-parameter set $X$ and an $n$-parameter set $Y$ we have
\[
    \cntMsub{X}{Y} = \left\{ \begin{array}{c} |A|+n \\ |A|+m \end{array} \right\}_{|A|}\,,
\]
a $|A|$-Stirling number of the second kind (see \cite{broder1984r} for details on $|A|$-Stirling numbers).
Similarly to the category $\finvec_{\IF_p}$, the motif parameters $\parsets_{A} \to \IQ$ simply depend on the number of parameters of the input.
The Graham-Rothschild Theorem (see \cite[Theorem 10.4]{Nesetril1995} for proof) states that $\parsets_{A}$ is $\cm$-Ramsey.
The blowup property for $C$ and $P$ is vacuously true for $P = C$.

The way we attach parameter sets as oracles to an NTM is as follows:
The NTM is given the size $|A|^N$ of an ambient space $A^N$ and some parameter set $X \subseteq A^N$.
It can query $X$ by asking the oracle, whether $(x_1, \ldots, x_N) \in X$ for some $x_1, \ldots, x_N \in A$.
The tuple $(x_1, \ldots, x_N)$ is suitably encoded as a binary string of some fixed length $\bitsize(N)$, only depending on $N$ and $A$, and thus $X$ itself can be encoded as a subset of $\{0, 1\}^{\bitsize(N)}$.
Our sets $C_{|A|^N}$ are then precisely all the parameter sets $X \subseteq A^N$ and $\bitsize(N)$ is poly-logarithmic in $|A|^N$.

The only thing left to apply Theorem~\ref{thm:cat_main_theorem} is to prove the existence of set-instantiators:
\begin{lemma}
    Let $\varphi: \parsets_{A} \to \IN$ be a motif parameter, let $M$ be a polynomial-time NTM computing $\eval{\varphi}$ and let $X \in \parsets_{A}$.
    Then there is a set-instantiator against $(M,X,\varphi)$.
\end{lemma}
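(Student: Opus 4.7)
The plan is to follow the probabilistic constructions of Lemma~\ref{lem:graphs_setinstantiators_exist} and its vector-space counterpart. Identify $X$ with the canonical $n$-parameter set $A^n$ and let $N$ be a dimension to be chosen. A morphism $X \to A^N$ is determined by a function $\pi\colon\{1,\dots,N\} \to A \cup \{1,\dots,n\}$ that sends coordinate $k$ of $x \in X$ to $\pi(k)$ if $\pi(k) \in A$, and to $x_{\pi(k)}$ otherwise. Drawing each $\pi(k)$ independently and uniformly yields a random morphism $h\colon X \to A^N$; the induced $h$ is injective precisely when $\pi$ hits every element of $\{1,\dots,n\}$, which fails only with probability at most $n(1-1/(|A|+n))^N = \exp(-\Omega(N))$. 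With $j := |A|^N$, the aim is to show that, for sufficiently large $N$, the random $h$ has, with positive probability, the additional property that for every parameter subset $Y \subseteq X$, every accepting path of $\#\acc_{M^{h(Y)}}(j)$ queries only points of $h(Y)$.

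To establish this property, fix $Y \subseteq X$ and partition realizations of $\pi$ into \emph{groups} according to the effective behavior that each coordinate $k$ induces on $h(Y)$: either a specific constant in $A$ or a specific parameter of $Y$. Within a group, the oracle $h(Y)$ is completely determined, so the computation $\#\acc_{M^{h(Y)}}(j)$ and hence the set $S$ of points queried along its accepting paths are fixed, with $|S| \leq \varphi(Y) \cdot t_M(j) = \poly(N)$. Still within the group, each $\pi(k)$ is uniform over the set $C_k$ of preimages of its declared behavior, independently across $k$. Any $x \in X \setminus Y$ must violate at least one defining constraint of $Y$: either a coordinate $l$ that $Y$ fixes to a constant $c$ with $x_l \neq c$, or two coordinates $l',l''$ identified in $Y$ on which $x$ takes distinct values. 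In either case, the coordinates whose group-label involves the violated constraint have $|C_k| \geq 2$, and the map $\pi(k) \mapsto h(x)_k$ takes at least two distinct values on $C_k$. A Chernoff bound shows that the number of such coordinates is $\Omega(N)$ except with probability $\exp(-\Omega(N))$, and conditionally the min-entropy of $h(x)$ given the group is $\Omega(N)$, so $\Pr[h(x) \in S \mid \text{group}] \leq |S| \cdot 2^{-\Omega(N)}$. Union-bounding over the finitely many $Y$, $x \in X \setminus Y$, and low-probability ``bad'' groups then yields a deterministic $\pi$ with the desired property.

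Given such a $\pi$, set $\inst_{\SI}(Y) := h(Y)$ and, for each computation path $\tau$, define $\perc_{\SI}(\tau)$ to be the (unique) smallest parameter subset of $X$ containing $h^{-1}(S_\tau)$ whenever $\tau$ is accepting for $\#\acc_{M^{h(X)}}(j)$, and $\top$ otherwise; uniqueness follows because parameter subsets of $X$ are closed under nonempty intersection, being the sets of points that satisfy fixed families of constant-fixing and coordinate-identification constraints. The equality $\varphi(\inst_{\SI}(Y)) = \varphi(Y)$ is then immediate from injectivity of $h$ and the fact that $\varphi$ depends only on the parameter count. The equivalence $\perc_{\SI}(\tau) \prec Y \iff \tau$ accepts on $h(Y)$ follows from the key property established above: if $\tau$ accepts on $h(Y)$, its queries stay in $h(Y)$, so the oracles $h(Y)$ and $h(X)$ agree on them, acceptance transfers to $h(X)$, and the perception ends up inside $Y$; conversely, if the perception lies inside $Y$, all queries of $\tau$ lie in $h(Y)$, so acceptance transfers back. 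The main obstacle will be the min-entropy step in the second paragraph: while violating a constraint of $Y$ intuitively creates coordinate-level randomness in $h(x)$, rigorously bounding the number of ``useful'' coordinates requires a careful case analysis that separates constant-fixing from identification constraints in the structure of $Y$ and tracks how the random choice of $\pi$ interacts with each kind.
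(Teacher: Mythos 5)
Your proposal is correct and follows the same overall template as the paper's proof: embed $X$ at random into a large ambient space $A^N$, condition on (``group by'') the induced image of each parameter subset $Y$ so that the set $S$ of points queried along accepting paths of $\#\acc_{M^{h(Y)}}(j)$ becomes deterministic, bound the probability that the image of any $x\in X\setminus Y$ lands in $S$, union-bound over the finitely many $Y$ and $x$, and finally set $\inst_\SI(Y)=h(Y)$ and let $\perc_\SI(\tau)$ be the smallest parameter subset containing $h^{-1}(S_\tau)$ (the paper's $\Span$). Where you genuinely differ is the randomization of the embedding and the ensuing concentration argument. The paper fixes a base point, gives each parameter of $X$ its own block of $N/n$ ambient coordinates, and draws a uniformly random nonempty subset of each block, deriving the per-point bound from the uniformity of these subsets within a group; you instead give every ambient coordinate an independent uniform role in $A\cup\{1,\dots,n\}$. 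Your conditioning step is then transparent: given the per-coordinate effective behavior on $Y$, the roles stay independent and uniform on their fibers $C_k$, a violated constraint of $Y$ yields $\Omega(N)$ coordinates on which $h(x)_k$ is non-constant (Chernoff plus discarding low-probability groups), and hence $\Pr[h(x)\in S]\le \poly(N)\cdot 2^{-\Omega(N)}$. This buys some robustness: in your model the residual randomness after conditioning on the image of $Y$ is immediate coordinate by coordinate, whereas in the paper's fixed-base-point block construction the image of $Y$ can pin down part of the block subsets, so the analogous conditional-uniformity step requires more care. Two wording points to fix when writing this up: (i) the property you actually need (and your second paragraph in effect establishes) is that accepting paths of $\#\acc_{M^{h(Y)}}(j)$ query no point of $h(X)\setminus h(Y)$ --- not that they query only points of $h(Y)$, which no construction can guarantee, since $M$ may query points outside $h(X)$; such queries are harmless because every instantiation answers them identically, which is exactly what your final transfer argument needs; (ii) the ``smallest parameter subset containing $h^{-1}(S_\tau)$'' is undefined when $h^{-1}(S_\tau)=\emptyset$ (parameter sets are nonempty), a corner case your definition shares with the paper's $\Span$ and which should be noted or handled explicitly.
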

\begin{proof}
    Let $X \in \parsets_{A}$ be an $n$-parameter set with sets $\Lambda_1, \ldots, \Lambda_n$.
    
    For now let the dimension $N$ of the ambient space be undetermined, we will choose it accordingly later.
    Fix some $z'^0 \in A^N$.
    Uniformly at random choose non-empty sets $\Lambda_1' \subseteq \{1, \ldots, \frac{N}{n}\}, \ldots, \Lambda_n' \subseteq \{1+\frac{n-N}{n}, \ldots, N\}$.
    This defines an $n$-parameter set $Z \subseteq A^N$.
    This directly defines a bijection $h$ from parameter sets $Y \subseteq X$ to parameter sets $Y' \subseteq Z$: Whenever a parameter from $X$ is replaced by a constant to obtain $Y$, we replace the same parameter in $Z$ by the same constant and whenever some parameters of $X$ are merged to obtain $Y$, we merge the same parameters of $Z$ to obtain $Y'$.
    This naturally also extends to a bijection between elements $x \in X$ and $z \in Z$ (they are $0$-parameter sets), so we can 
    Note that $Z = h(X)$.

    As usual we want to use the union bound to show that there is a choice of the $\Lambda_i'$ (and thus $h$), such that for every parameter set $Y \subseteq X$, all accepting paths of the computation $\#\acc_{M^{h(Y)}}(|A|^N)$ do not query the oracle for any elements in $h(Z) \setminus h(Y)$.

    If this happens with high probability with growing $N$ for all of the finitely many parameter sets $Y \subseteq X$, then eventually such an $h$ exists.
    For now fix some parameter set $Y \subseteq X$.
    Further group the possible $h$ by the image of $Y$.
    Call one such group $H$.
    Then the set $S_H$ of all elements queried by all accepting paths of the computation $\#\acc_{M^{h(Y)}}(|A|^N)$ is the same for all $h \in H$.
    Let $x \in h(X) \setminus h(Y)$ be fixed.
    Split $x$ into the natural blocks of $\frac{N}{n}$ coordinates.
    Then there is at least one block, say the $i$-th block, where $x$ needs to be changed to be in $h(Y)$.
    In particular, only one choice of entries in that block works.
    However, since $\Lambda_i'$ is uniformly distributed, those entries are different for every $\Lambda_i'$.
    We conclude
    \begin{align*}
        \Pr_{h \in H}[x \in S_H] \leq \frac{|S_H|}{2^{\frac{N}{n}}}\,.
    \end{align*}
    Combining this using the union bound and $|h(X) \setminus h(Y)| \leq |h(X)| = |X|$ we get
    \begin{align*}
        \Pr_{h \in H}[\exists x \in h(X) \setminus h(Y) \text{ with } x \in S_H] \leq |X| \cdot \frac{|S_H|}{2^{\frac{N}{n}}}\,.
    \end{align*}
    It remains to estimate $|S_H|$.
    Let $t_M(|A|^N)$ be the worst-case running time of $M$ on input $|A|^N$.
    Each oracle query can query exactly one element, there are at most $\varphi(h(X)) = \varphi(X)$ many accepting paths and each accepting path can query the oracle at most $t_M(|A|^N)$ times.
    This gives the bound of $|S_H| \leq \varphi(X) \cdot t_M(|A|^N)$.
    This bound is polynomial in $N$: $M$ is polynomial-time and $\varphi(X)$ is constant, due to $X$ being fixed.
    Across all possible functions $h$ (not just those from $H$), we thus get that all accepting paths of the computation $\#\acc_{M^{h(Y)}}(|A|^N)$ do not query the oracle for elements in $h(X) \setminus h(Y)$ with a probability of at least $1 - |X| \cdot \frac{\poly(N)}{2^{\frac{N}{n}}}$.
    
    Since $|X|$ and $n$ are constant, a final union bound proves that there is some large enough $N$, such that there is some $h$ that has our required property.

    We now construct our set-instantiator with $j=|A|^N$ for the $N$ above.
    This instantiation function for some parameter set $Y \subseteq X$ is $\inst_{\SI}(Y) = h(Y)$.
    For any computation path $\tau \in \{0, 1\}^\star$ of a computation $\#\acc_{M^{h(Y)}}(|A|^N)$ denote by $S_\tau \subseteq h(X)$ the elements that are queried by the computation.
    We set
    \begin{align*}
        \perc_{\SI}(\tau) := \begin{cases}
            \Span(h^{-1}(S_\tau)) & \text{if $\tau$ is an accepting path of the computation $\#\acc_{M^{h(X)}}(j)$,}\\
            \top & \text{otherwise}
        \end{cases}
    \end{align*}
    where $\Span(U)$ denotes the parameter set with the lowest amount of parameters containing $U$.
    We check the properties of a set-instantiator.
    Clearly $\varphi(\inst_{\SI}(Y)) = \varphi(Y)$ for all parameter sets $Y \subseteq X$ due to $Y$ and $h(Y)$ having the same number of parameters and thus being isomorphic.
    Further we need that for all parameter sets $Y \subseteq X$ we have that $\tau \in \{0, 1\}^\star$ is an accepting path for the computation $\#\acc_{M^{h(Y)}}(j)$ iff $\perc_\SI(\tau) \subseteq Y$.
    For this, fix some parameter set $Y \subseteq X$ and let $\tau$ be an accepting computation of the computation $\#\acc_{M^{h(Y)}}(j)$.
    By our choice of $h$ we know that no other elements of $h(X) \setminus h(Y)$ are queried, or in other words, $h^{-1}(S_\tau) \subseteq Y$ and thus $\tau$ is also an accepting path for the computation $\#\acc_{M^{h(X)}}(j)$.
    This directly implies $\perc_{\SI}(\tau) = \Span(h^{-1}(S_\tau)) \subseteq Y$.
    On the other hand, if $\perc_{\SI}(\tau) \subseteq Y$, then $\perc_{\SI} \neq \top$ and $\tau$ is an accepting path of the computation $\#\acc_{M^{h(X)}}(j)$.
    We conclude that $\tau$ is also an accepting path of the computation $\#\acc_{M^{h(Y)}}(j)$ due to $\perc_{\SI}(\tau) = \Span(h^{-1}(S_\tau)) \subseteq Y$.
\end{proof}

Applying Theorem~\ref{thm:cat_main_theorem} now proves the main result for $\parsets_{A}$.
\begin{theorem} \label{thm:cat:main:parsets}
    Let $\varphi: \parsets_{A} \to \IN$ be a motif parameter.
    Then $\eval{\varphi} \in \promise\sharpPdottedcircle$ iff $\varphi$ is good.
\end{theorem}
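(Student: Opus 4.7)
For the lower bound (the ``only if'' direction), the plan is to apply Theorem~\ref{thm:cat_main_theorem} directly with $C = P = \parsets_A$. All the preceding discussion in this subsection has already verified each hypothesis of Lemma~\ref{lem:cat:prop}: local smallness (indeed local finiteness); the proper $(\cE,\cM)$-factorization system where $\cE$ is the class of epimorphisms and $\cM$ the class of monomorphisms; finite $\cM$-well-poweredness, with the $\cM$-subobjects of $Y$ being precisely the parameter subsets of $Y$; the joint $\cM$-embedding property via the coproduct $A^n \sqcup A^m \cong A^{n+m}$; the $\cM$-Ramsey property through the Graham--Rothschild theorem; and the blowup property, which holds vacuously since $P = C$. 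Moreover, the existence of a set-instantiator against $(M,X,\varphi)$ for every polynomial-time NTM $M$ and every $X \in \parsets_A$ has been established in the lemma preceding the theorem. Hence Theorem~\ref{thm:cat_main_theorem} applies and yields $\eval{\varphi} \notin \promise\sharpPdottedcircle$ whenever $\varphi$ is not good.

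For the upper bound (the ``if'' direction), since $\promise\sharpPdottedcircle$ is closed under nonnegative integer linear combinations, it suffices to show that $\eval{\cntMsub{X}{\starpad}} \in \promise\sharpPdottedcircle$ for each fixed $m$-parameter set $X$. Given the input $|A|^N$ together with oracle access to a parameter set $Y \subseteq A^N$, the proposed nondeterministic Turing machine guesses a \emph{canonical} description of an $m$-parameter subset of $A^N$: disjoint nonempty sets $\Lambda_1, \ldots, \Lambda_m \subseteq \{1, \ldots, N\}$ ordered so that $\min(\Lambda_1) < \cdots < \min(\Lambda_m)$, together with values $c_i \in A$ for each $i \in \{1,\ldots,N\} \setminus \bigcup_j \Lambda_j$. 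This data has bit-length $O(N \log |A|)$, and each such description determines exactly one $m$-parameter set $X' \subseteq A^N$, with the canonical-ordering convention ensuring that every $m$-parameter subset is described exactly once. The branch then enumerates the $|A|^m$ points of $X'$ (a constant number since $m$ is fixed) and queries the oracle for membership of each in $Y$; it accepts iff all are in $Y$. The number of accepting paths is then exactly $\cntMsub{X}{Y}$, and the whole computation runs in time polynomial in $N = \log_{|A|}(|A|^N)$.

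As in the vector-space case, this is only a $\promise\sharpPdottedcircle$ upper bound because the NTM has no efficient way to verify that the oracle indeed encodes a parameter set; under the promise that $Y \in \parsets_A$, however, the procedure is correct. The main work of both directions is already carried out in the ambient categorical framework and in the Graham--Rothschild Ramsey theorem; the one ingredient genuinely specific to this application is the canonical encoding above, whose only subtle point is guaranteeing unique representation of each parameter subset and not accidentally undercounting by over-canonicalising the constant coordinates in $\bigcup_j \Lambda_j$.
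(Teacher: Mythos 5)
Your proposal is correct and follows essentially the same route as the paper: the lower bound is exactly the paper's application of Theorem~\ref{thm:cat_main_theorem} using the already-verified hypotheses and the set-instantiator lemma, and the upper bound is the same guess-and-verify NTM, only with the branch guessing the canonical $(\Lambda_1,\ldots,\Lambda_m,\text{constants})$ description rather than the sorted list of the $|A|^m$ points of the parameter subset (a cosmetic variation; both ensure each subset is counted once and run in time polynomial in $N$). The closing remark that this is only a $\promise\sharpPdottedcircle$ upper bound matches the paper as well.
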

\begin{proof}
    Theorem~\ref{thm:cat_main_theorem} gives the lower bound.
    
    For the upper bound, since $\varphi$ is a positive integer linear combination of finitely many basis functions and $\promise\sharpPdottedcircle$ is closed under such linear combinations it suffices to consider $\varphi(X) = \cntMsub{Y}{X}$ for some $n$-parameter set $Y$.
    We thus need to count the number of $n$-parameter sets of $Y \subseteq X \subseteq A^N$, given $|A|^N$ and oracle access to $X$ as an input.
    For this now guess $|A|^n$ many elements in $A^N$ that form a $n$-parameter set (sorted in some way, such that the same parameter set is only guessable in one way) and query the oracle whether all of them are part of $X$.
    If this is the case, accept, otherwise reject.
    Clearly this counts the number of $n$-parameter subsets of $X$ and runs in time polynomial in $N$ ($n$ is a fixed constant).
    We conclude $\eval{\varphi} \in \promise\sharpPdottedcircle$.
\end{proof}

Again, the upper bound in the previous algorithm is only a $\promise\sharpPdottedcircle$ upper bound.
The NTM $M$ is unable to verify whether the oracle indeed encodes a valid parameter set.

\section{Non-negativity of the Example}
\label{sec:nonnegexample}
Recall from Example~\ref{ex:nonneg} the graph motif parameter
\[
f(G) = \indsub{\,\Gline}{G} - \indsub{\Gtria}{G} + \indsub{\GfourI}{G} + 2\,\indsub{\GfourII}{G} + 4\,\indsub{\GfourIII}{G}.
\]
We claim that $f(G)\geq 0$ for every $G$.
It is easy to see that $f(G)=\sum_{C} f(C)$, where the sum is over all connected components $C$ of $G$.
It remains to prove that $f(C)\geq 0$ for each connected component~$C$.
This is obvious for connected components with at most 2 vertices.
For every connected component $C$ that is a triangle, we have $f(C)=3-1=2 \geq 0$.
For every other connected component, for each triangle $T$ we choose an induced subgraph $F(T)$ of~$C$ that contains $T$ and that is isomorphic to $\GfourI$, $\GfourII$, or $\GfourIII$.
These (not necessarily unique) choices partition the set of triangles of $C$ into three disjoint sets $T(\GfourI)$, $\mathcal T(\GfourII)$ and $\mathcal T(\GfourIII)$ with $\indsub{\Gtria}{C} = |\mathcal T(\GfourI)| + |\mathcal T(\GfourII)| + |\mathcal T(\GfourIII)|$.

For an induced subgraph $H$ of $C$,
if $H\cong\GfourI$, then there is at most 1 triangle $T$ with $F(T)=H$.
If $H\cong\GfourII$, then there are at most 2 triangles $T$ with $F(T)=H$.
If $H\cong\GfourIII$, then there are at most 4 triangles $T$ with $F(T)=H$.
Hence,
\begin{eqnarray*}
f(C)
&\geq&
\indsub{\GfourI}{C} + 2\,\indsub{\GfourII}{C} + 4\,\indsub{\GfourIII}{C}-\indsub{\Gtria}{C}
\\
&=&
\big(\indsub{\GfourI}{C} - |\mathcal T(\GfourI)|\big)
 +\big(2\,\indsub{\GfourII}{C}-|\mathcal T(\GfourII)|\big) \\
& & \quad +\big(4\,\indsub{\GfourIII}{C}-|\mathcal T(\GfourIII)|\big)
\\
&\geq& 0.
\end{eqnarray*}

\end{document}